\documentclass[11pt,letterpaper]{article}

\usepackage{booktabs}

\newcommand\spacingset[1]{\renewcommand{\baselinestretch}%
  {#1}\small\normalsize}

\usepackage{latexsym}
\usepackage{amssymb,amsmath, bm,pgfplots,tikz,bbm}
\usepackage{graphicx}
\usepackage{marvosym}
\usepackage{multirow,float}
\usepackage{algpseudocode}
\usepackage{caption}
\usepackage{mathtools}
\usepackage{subcaption}
\usepackage{comment}
\usepackage{enumitem}

\usepackage{natbib}
\usepackage{color}
\usepackage[bookmarksopen=true, bookmarksnumbered=true,
pdfstartview=FitH, breaklinks=true, urlbordercolor={0 1 0}, citebordercolor={0 0 1}]{hyperref}

\usepackage{dcolumn}
\newcolumntype{.}{D{.}{.}{-1}}
\newcolumntype{d}[1]{D{.}{.}{#1}}

\usepackage{theorem}
\theoremstyle{plain}
\theoremheaderfont{\scshape}
\newtheorem{theorem}{Theorem}
\newtheorem{proposition}{Proposition}
\newtheorem{assumption}{Assumption}

\newtheorem{lemma}{Lemma}

\newcommand{\qed}{\hfill \ensuremath{\Box}}
\newcommand{\indep}{\mbox{$\perp\!\!\!\perp$}}

\DeclareMathOperator*{\argmin}{argmin}

\newcommand{\norm}[1]{\left\lVert#1\right\rVert}
\newenvironment{proof}{\vspace{1ex}\noindent{\bf Proof}\hspace{0.5em}}
{\hfill\qed\vspace{1ex}}
\usepackage{kantlipsum}
\allowdisplaybreaks

\graphicspath{{Manuscripts_application/figs/}}

\usepackage{rotating}

\usepackage{arydshln}
\usepackage{threeparttable}

\usepackage[compact]{titlesec}

\usepackage[ruled,linesnumbered,vlined]{algorithm2e}
\usepackage{pifont}

\newtheorem{definition}{Definition}

\newcommand\E{\mathbb{E}}
\newcommand\R{\mathbb{R}}
\newcommand\cR{\mathcal{R}}
\renewcommand\P{\mathbb{P}}

\newcommand\bP{\bm{P}}

\newcommand\br{\bm{r}}
\newcommand\bR{\bm{R}}

\newcommand\bh{\bm{h}}

\newcommand\bx{\bm{x}}
\newcommand\bX{\bm{X}}

\newcommand\bW{\bm{W}}

\newcommand\boldf{\bm{f}}
\newcommand\bg{\bm{g}}
\newcommand\bu{\bm{u}}
\newcommand\bU{\bm{U}}
\newcommand\cD{\mathcal{D}}

\newcommand\cS{\mathcal{S}}

\newcommand\cU{\mathcal{U}}
\newcommand\cF{\mathcal{F}}
\newcommand\cX{\mathcal{X}}
\newcommand\cY{\mathcal{Y}}

\newcommand\bgamma{\boldsymbol{\gamma}}
\newcommand\btheta{\boldsymbol{\theta}}

\newcommand\blambda{\boldsymbol{\lambda}}
\usetikzlibrary{decorations.markings}
\usetikzlibrary{decorations.pathmorphing}
\usetikzlibrary{shapes.geometric, arrows}
\usetikzlibrary{arrows,decorations.pathmorphing,backgrounds,positioning,fit,matrix}
\usetikzlibrary{shapes,decorations,arrows,calc,arrows.meta,fit,positioning}
\tikzset{auto,node distance =1 cm and 1 cm,semithick,
	state/.style ={circle, draw, minimum width = 0.7 cm},
	point/.style = {circle, draw, inner sep=0.04cm,fill,node contents={}},
	bidirected/.style={Latex-Latex,dashed},
	el/.style = {inner sep=2pt, align=left, sloped}
}

\graphicspath{{figs/}}

\begin{document}

\title{\bf GenAI Powered Dynamic Causal Inference with\\ Unstructured Data}
\author{Kentaro Nakamura\thanks{Ph.D. student, John F. Kennedy School
    of Government, Harvard University. Email:
    \href{mailto:knakamura@g.harvard.edu}{knakamura@g.harvard.edu}} \and Kosuke Imai\thanks{Professor, Department of Government and
    Department of Statistics, Harvard University, Cambridge, MA
    02138. Phone: 617--384--6778, Email:
    \href{mailto:Imai@Harvard.Edu}{Imai@Harvard.Edu}, URL:
    \href{https://imai.fas.harvard.edu}{https://imai.fas.harvard.edu}} }
\date{\today}
\maketitle\thispagestyle{empty}
%\tableofcontents\thispagestyle{empty}

\begin{abstract}
  A growing number of scholars seek to estimate causal effects of
  unstructured data such as text, images, and video. However, existing
  methods typically treat each object as a single, static
  observation. We develop a statistical framework for dynamic causal
  inference with unstructured data by leveraging generative artificial
  intelligence (GenAI) models. Our approach enables researchers to
  estimate the causal effects of sequences of treatment features,
  including their positions within text and video. We first extract
  internal representations of unstructured objects from a GenAI model
  and then estimate a marginal structural model using a neural network
  architecture that jointly learns a deconfounder for each treatment
  feature in the sequence. Our semiparametric inference framework
  yields valid asymptotic confidence intervals. Simulation studies
  demonstrate that the proposed estimator recovers the target causal
  effects and that the confidence intervals achieve nominal coverage
  in finite samples. We further apply our method to a randomized
  experiment on the Hong Kong protests, showing that the effect of a
  treatment feature depends critically on its position within the
  text.
\end{abstract}

\noindent {\bf Key Words:} deep generative models, double machine
learning, generative artificial intelligence, diffusion models,
texts as treatments, unstructured data

\newpage
\section{Introduction}

Estimating the causal effects of specific features of unstructured
data, such as text, images, and video, is becoming increasingly
important across a range of fields, including political science
\citep[][]{blumenau_variable_2022}, medicine
\citep[]{levine2021randomized}, and education
\citep[e.g.,][]{snell2022text}. In many applications, the central
question concerns how particular features are introduced and how they
shape the responses of those exposed to them.

These questions are often dynamic, as most unstructured data are
inherently sequential: texts unfold sentence by sentence, and videos
frame by frame.  While this structure enables researchers to study
fine-grained, sequence-level interventions, it also poses a
fundamental methodological challenge, since the appearance of later
features may depend on the presence or absence of earlier ones. Much
of the existing causal inference literature treats unstructured data
as single, static observations, thereby overlooking both the
opportunities and the challenges created by their sequential nature
\citep[e.g.,][]{fong_discovery_2016, pryzant_causal_2021,
  fong_causal_2023, gui_causal_2023, guo2025estimating,
  imai2026causal, imai2026leveraging, lin2024isolated, feldman2026causal}.

In this paper, we propose a methodological framework for causal
inference with dynamic interventions in unstructured data that
leverage the power of generative artificial intelligence (GenAI). We
conceptualize each unit as an ordered sequence of components (e.g.,
sentences, frames) and consider interventions that manipulate the
presence of treatment features at each point in the sequence. This
framework goes beyond the standard setup with coarse object-level
treatments and can characterize how causal effects change as
information unfolds over the sequence. The goal is to identify and
estimate the causal effects of different sequences of treatment
features, while controlling the remaining features of unstructured
features.

Under this dynamic setting, we first establish the nonparametric
identification for the causal quantity of interest. The key difficulty
in this setting is that the treatment feature of interest appears
together with many other features over the course of sequence that may
also affect the outcome. These confounding features are
high-dimensional, unknown, and may evolve over time, making standard
adjustment infeasible.

To address this challenge, we extend the GenAI-Powered Inference (GPI)
framework of \cite{imai2026causal, imai2026leveraging} to dynamic settings
under stochastic interventions \citep{kennedy_nonparametric_2019}. We
show that internal representations extracted from an open-source
generative AI model can be leveraged to nonparametrically estimate a
\emph{deconfounder}, which is a low-dimensional summary of confounding
information at each point in the sequence. This approach enables
recovery of the causal quantity of interest by adjusting for evolving
confounding structure, without requiring researchers to directly
observe or manually encode all substantively relevant features of the
unstructured input.

We develop an estimation procedure based on a neural network
architecture tailored to the dynamic GPI framework. The method first
estimates the deconfounder at each position in the sequence, and then
uses flexible machine learning models to estimate both the conditional
expectation of the outcome and the propensity score. Building on
results from the longitudinal causal inference literature
\citep{robi:hern:brum:00, rotnitzky2017multiply,
  kennedy_estimating_2019}, we propose a semiparametric estimation
strategy and establish conditions under which the resulting estimator
is consistent and asymptotically normal.

We validate the proposed dynamic GPI methodology through simulation
studies and an empirical application. The simulations show that the
estimator accurately recovers the target causal effect and that the
associated confidence intervals achieve nominal coverage in finite
samples. In the empirical application, we analyze a randomized
experiment by \cite{fong_causal_2023}, in which respondents are
assigned text vignettes composed of multiple components, with the
treatment feature appearing at different positions within the
sequence. Applying our method, we examine how the treatment effect
varies with the timing of the feature while holding other aspects of
the text fixed. The results indicate that the effect depends on where
the feature appears in the text, highlighting the importance of
accounting for sequential structure when studying causal effects in
unstructured data.

The remainder of this paper is organized as
follows. Section~\ref{sec::example} presents a motivating
application. Section~\ref{sec:method} formalizes the proposed dynamic
GPI methodology, describing the setup, assumptions, identification,
and estimation strategies. Section~\ref{sec:simulation} shows the
results of numerical simulations while Section~\ref{sec:application}
applies the proposed methodology to the real-world example used in the
motivating example. Finally, Section~\ref{sec:conclusion} concludes by
summarizing the advantages and limitations of the proposed dynamic GPI
methodology and outlining directions for future research.

\begin{comment}
This finer level of intervention introduces several key
challenges. First, components that contain the same treatment feature
may differ along many other dimensions, such as wording, visual
context, or surrounding content. For example, in
Table~\ref{tab:hongkong_examples}, different sentences referencing
U.S. commitment vary substantially in their framing and content,
making it necessary to account for such variation in order to isolate
the causal effect of the treatment feature itself. Second, the
sequential structure induces dynamic dependence across components,
possibly violating standard overlap assumptions. In particular, as the
sequence length increases and we condition on the treatment status of
earlier components, the support of later treatments can become highly
constrained—sometimes nearly deterministic, leading to violations of
overlap. We now develop a unified framework that addresses these
challenges.
\end{comment}

\paragraph{Related Literature.}
There is a growing methodological literature on estimating causal
effects with unstructured data (e.g., \citealt{fong_discovery_2016,
  fong_causal_2023, gui_causal_2023, imai2026causal, imai2026leveraging,
  pryzant_causal_2021}). However, existing approaches primarily focus
on static interventions applied to entire unstructured objects,
thereby overlooking the sequential structure inherent in many
unstructured data such as text and video. Building on the GPI
framework \citep{imai2026causal, imai2026leveraging}, we show that the
causal effects of more fine-grained, sequential interventions can be
estimated by leveraging the internal representations of GenAI models.

Our approach also builds on the semiparametric literature on
longitudinal causal inference \citep[e.g.,][]{robi:hern:brum:00,
  rotnitzky2017multiply, kennedy_estimating_2019}. In particular, we
use marginal structural models \citep{robi:hern:brum:00} to formalize
interventions at each stage of the sequence. Unlike standard
applications, we observe the full sequence of unstructured data, so
the assumption of sequential ignorability is satisfied by design,
provided that the selection of unstructured data is unconfounded.

Leveraging recent advances in semiparametric statistics, we develop a
fully nonparametric estimation framework that is well suited to
settings with unstructured data, where the underlying functional form
is unknown. Instead, the primary challenge in our setting is the
failure of positivity, as prior sequences may nearly determine
subsequent treatment assignments. We address this issue by adopting a
stochastic intervention framework, which enables more realistic
interventions under strictly weaker overlap assumptions
\citep{muno:vand:12, kennedy_nonparametric_2019,
  papadogeorgou_causal_2022}.

Finally, our paper contributes to the literature on causal
representation learning. Because unstructured data are
high-dimensional, it is essential to reduce dimensionality while
preserving features relevant for downstream inference
\citep[e.g.,][]{shi_adapting_2019, veitch_adapting_2020,
  wang_desiderata_2022}. We propose a neural network architecture
tailored to sequential settings, producing low-dimensional
representations suitable for causal effect estimation. We formally
show that these representations support identification and consistent
estimation of the causal quantity of interest.

\section{Motivating Example: Hong Kong Experiment}\label{sec::example}

The 2019 Hong Kong protests generated intense international debate
about whether the United States should support the
protesters. Political elites and media coverage framed this issue
using diverse arguments, such as prior U.S. legal commitments, human
rights concerns, and geopolitical threats.  To explore which arguments
are effective, \cite{fong_causal_2023} conducted a randomized
experiment in which they generated a large number of textual arguments
and randomly assigned different combinations of these arguments to
respondents.

Specifically, they fielded two survey waves, one in December 2019
($N = 1{,}983$) and another in October 2020 ($N = 2{,}072$). In each
survey, every respondent was randomly assigned a single vignette
constructed from multiple argument components. These components
included references to U.S. commitments under the Hong Kong Policy
Act, descriptions of protesters' bravery, China’s mistreatment of
citizens, security threats posed by China, and other contextual
arguments. The authors generated many distinct versions of each
component by varying wording, timing, and framing of each component,
yielding over 500,000 possible text combinations, and randomly
composed vignettes by combining two or three such components. After
reading the assigned vignette, each respondent was asked to indicate
their level of agreement with the statement that the United States
should support Hong Kong protesters on a scale ranging from 0 to 100.

\begin{table}[t]
\centering
\small
\renewcommand{\arraystretch}{1.25}

\begin{tabular}{|p{0.9\textwidth}|}
\hline

\textbf{Example Text 1} \\[-0.2em]
%\emph{First sentence:} Protesters waving American flags \\
%\emph{Second sentence:} Bravery in risking physical harm \\[0.3em]
Many protesters carry American flags to demonstrate their love of liberty.
Protesters have been targeted by brutal police tactics and intimidation,
which has led to bloodshed. \\
\hline

\textbf{Example Text 2} \\[-0.2em]
%\emph{First sentence:} U.S. commitment to Hong Kong\\
%\emph{Second sentence:} China’s mistreatment of its citizens \\[0.3em]
\textcolor{red}{During George H.W. Bush's administration, Congress passed a bill which
enshrined America's commitment to preserving Hong Kong's special status
and its freedoms.} China is a communist country with a long history of
mistreating its citizens, including the massacre of protestors at
Tiananmen Square in 1989. \\
\hline

\textbf{Example Text 3} \\[-0.2em]
%\emph{First sentence:} Hong Kong’s political system and economy \\
%\emph{Second sentence:} U.S. commitment to Hong Kong \\[0.3em]
China is interfering with a long tradition of independence for Hong Kong.
Hong Kong has been governed separately from China for the past 150 years.
It is admired around the world for its substantial economic freedom. \textcolor{red}{In
1992, Congress passed the Hong Kong Policy Act, which commits the United
States to preserving Hong Kong's autonomy.} \\
\hline
\end{tabular}
\caption{Three example vignettes from the Hong Kong experiment
  \citep{fong_causal_2023}. The treatment feature (U.S. commitment to
  Hong Kong) is absent in Example~1 and appears in different sentence
  positions in Examples~2~and~3 (highlighted in red).}
\label{tab:hongkong_examples}
\end{table}

While the original study conceptualizes treatment at the level of the
entire text, this aggregation masks potential heterogeneity in how
different parts of the text contribute to persuasion.
Table~\ref{tab:hongkong_examples} presents example texts, illustrating
that the treatment of interest --- U.S. commitment to Hong Kong
through 1992 Hong Kong Policy Act --- can appear at different
positions within a vignette. When the treatment is embedded in a
sequence of sentences, its causal impact may depend not only on its
presence, but also on the timing and context of its
appearance.  Our goal is to develop a methodological framework, under
which we formally define, identify, and estimate these sentence-level
causal effects.

\section{The Proposed Methodology}\label{sec:method}

In this section, we describe the proposed methodology, which extends
the GPI framework of \citet{imai2026causal, imai2026leveraging} to dynamic settings.
Throughout this section, we focus on the case in which text serves as
the treatment, motivated by the example in Section~\ref{sec::example}.
However, the proposed approach applies more broadly to other dynamic
treatment settings, such as audio and video. We begin by introducing the setup
and key assumptions. We then present the nonparametric identification
results, describe the estimation strategy, and derive the asymptotic
properties of the proposed causal effect estimator.

\subsection{Setup}

Consider a simple random sample of $N$ respondents drawn from a
population of interest. For each individual indexed by
$i = 1, \ldots, N$, we randomly assign unstructured data consisting of
$S_i$ segments where $S_i \in \cS := \{1,\ldots,s_{\max}\}$ for some
positive integer $s_{\max} > 1$. Here, $S_i$ denotes the total number
of segments in the data assigned to individual $i$, where researchers
may adopt different segment definitions across applications. For
example, when text data are segmented at the sentence level, $S_i$
corresponds to the total number of sentences in the document assigned
to respondent $i$. Alternatively, one may choose paragraphs as
segments. Let $\bX_{is} \in \cX_s$ represent the $s$th segment of
unstructured data assigned to individual $i$, where $\cX_s$ denotes
the support of the $s$th segment $\bX_{is}$.

We consider the following experimental design. Each respondent $i$ is
presented with unstructured data $\bX_i \in \cX$ where $\cX$ is the
support of $\bX_i$. After observing the entire content of $\bX_i$, the
respondent provides an answer to a survey question. Let $Y_i \in \cY$ denote the observed outcome for respondent $i$, where the support $\cY \subset \mathbb{R}$ is bounded.  We use
$Y_i(\bx_1, \ldots, \bx_{s_{\max}})$ to represent the potential
outcome that would be realized if the respondent were exposed to the
unstructured data
$(\bX_{i1}, \ldots, \bX_{i s_{\max}}) = (\bx_1, \ldots,
\bx_{s_{\max}})$.  Note that if an assigned document has fewer
segments than $s_{\max}$, then we set $\bX_{is} = \emptyset$ for all $s$
with $S_i < s \le s_{\max}$.  Therefore, we often write
$Y_i(\bX_{i1},\ldots,\bX_{iS_i}, \ldots, \bX_{i
  s_{\max}})=Y_i(\bX_{i1}, \ldots, \bX_{iS_i})$, ignoring the empty
segments of the assigned document. We assume that the observed outcome
is a function of the assigned unstructured data, i.e.,
$Y_i = Y_i(\bX_{i1}, \ldots, \bX_{iS_i})$. This assumption rules out
spillover effects across respondents. Formally, our setup implies the
following two assumptions.

\begin{assumption}[Consistency]\label{consistency}
  For each segment $s=1,\ldots,S_i$, the observed outcome $Y_{i}$
  equals the potential outcome under the realized texts
  $\bX_{i1}, \ldots, \bX_{iS_i}$, i.e.,
  $$
  Y_{i} =  Y_{i}(\bX_{i1}, \ldots, \bX_{iS_i}).
  $$
\end{assumption}

\begin{assumption}[Random Assignment of Unstructured 
  Data]\label{randomization} The unstructured data $\bX_i$ is
  randomly assigned to each individual respondent so that
$$
   Y_{i}(\bx_1, \ldots, \bx_{s_{\max}}) \ \indep \ \bX_{i} 
$$
for all $(\bx_1 , \ldots, \bx_{s_{\max}})$ given any $i=1,\ldots,N$.
\end{assumption}
Assumption~\ref{randomization} implies the following conditional
independence,
$$
Y_{i}(\bx_1, \ldots, \bx_{s_{\max}}) \ \indep \ \bX_{is} \mid \bX_{i1},
\ldots, \bX_{i,s-1}
$$
for each $s=1,\ldots,S_i$ where we set $\bX_{i0}=\emptyset$.
    
We are interested in the causal effect of a sequence of features on an
outcome.  For simplicity, we consider a binary treatment feature
denoted by $W_{is}$ for each unstructured data $i$ and segment $s$.
We assume that the treatment feature is a deterministic function of
each segment of the unstructured data.
\begin{assumption}[Treatment feature]\label{treatment_feature}
  There exists a deterministic function $g_{W}: \cX_s \mapsto \mathcal{W} := \{0,1\}$
  that maps each non-empty segment of the unstructured data
  $\bX_{is} \in \cX_s$ to a binary treatment feature of interest,
  i.e.,
$$
W_{is} = g_{W}(\bX_{is}).
$$
\end{assumption}

Next, we define the confounding features, which represent all features
of $\bX_{is}$ (other than the treatment feature of interest $W_{is}$) that influence the outcome and future confounding features.  These confounding features, denoted by
$\bU_{is}$, are assumed to be represented by a lower-dimensional
vector-valued deterministic function of $\bX_{is}$ and are denoted by
$\bU_{is} \in \cU_s$, where $\cU_s$ is their support at segment
$s$. Unlike the treatment feature of interest, these confounding
features are not observed and must be learned from the data.

\begin{assumption}[Confounding features]\label{confounding_feature}
  There exists an unknown vector-valued deterministic function
  $\bg_{\bU_s}: \cX_s \mapsto \cU_s$ that maps each non-empty segment
  $s$ of unstructured data $\bX_{is} \in \cX_s$ to the confounding
  features $\bU_{is} \in \cU_s$, i.e.,
\begin{align*}
    \bU_{is} = \bg_{\bU_s}(\bX_{is})
\end{align*}
where $\mathrm{dim}(\bU_{is}) \ll \mathrm{dim}(\bX_{is})$. 
\end{assumption}

We now propose our key identification assumption.  We assume that it
is possible to intervene in the treatment feature of each segment
without altering the confounding features at the same segment. Specifically, confounding features should not be any function of
treatment features to avoid post-treatment bias. We refer to this
assumption as {\it sequential separability}.
\begin{assumption}[Sequential Separability]\label{separability} The
  following equalities hold:
\begin{align*}
    Y_{i}(\bx_1, \ldots, \bx_{S_i})
    %&= Y_{i}(W_{i1} = w_1, \bU_{i1} = \bu_1, \ldots, W_{i S_i} = w_{S_i},
    %  \bU_{i S_i} = \bu_{S_i})\\
    & = Y_{i}(g_W(\bx_1), \bg_{\bU_1}(\bx_1), \ldots, g_W(\bx_{S_i}), \bg_{\bU_{S_i}}(\bx_{S_i}) ),\\
    \bU_{is}(\bx_1, \bx_2 \ldots, \bx_{s-1}) & = \bU_{is}(g_W(\bx_1), \bg_{\bU_1}(\bx_1), \ldots, g_W(\bx_{s-1}), \bg_{\bU_{s-1}}(\bx_{s-1})).
\end{align*}
for any $s \in \mathcal{S}$. In addition, 
there exist no deterministic functions $\bg^\prime: \cX_s \to \cX_s^\prime$ and $\tilde{\bg}_{\bU_s}: \{0,1\} \times \cX_s^\prime \to \cU_s$, which satisfy $\bg_{\bU_s}(\bx_s) = \tilde{\bg}_{\bU_s} ( g_W(\bx_s), \bg^\prime(\bx_s))$ for all $\bx_s \in \cX_s$ and $\tilde{\bg}_{\bU_s} ( 1, \bg^\prime(\bx_s^\prime))\ne \tilde{\bg}_{\bU_s} (0, \bg^\prime(\bx_s^\prime))$ for some $\bx_s^\prime \in \cX_s$.
\end{assumption}

Assumption~\ref{separability} has two parts. First, we assume that the sequence of unstructured data affects the outcome and the evolution of
subsequent confounding features only through the corresponding treatment histories and confounding histories. Second, the assumption states that within each segment, one can intervene on the treatment feature $W_{is}$ without changing the {\it contemporaneous} confounding features $\bU_{is}$. Importantly, we do not rule out the possibility that such an intervention affects future treatment and confounding features.

Sequential separability clarifies how researchers should define the treatment and confounding features: defining a treatment requires specifying which aspects of the unstructured object the intervention changes and which are held fixed. In our Hong Kong application, the treatment is whether a sentence contains an argument about the U.S. commitment to Hong Kong; features unaffected by adding or removing this argument, such as surrounding context or unrelated wording, may act as confounders. By contrast, if the treatment were defined as a passage's ideological orientation, changing it would necessarily alter rhetoric, framing, and substantive content. These induced changes cannot be held fixed as confounders; they are part of the intervention and contribute to its causal effect. Sequential separability thus requires defining the treatment broadly enough to include features that necessarily change under the intervention, reserving the confounder role for outcome-relevant features that can conceptually remain fixed.

An important implication of sequential separability is that the treatment and confounding features must have independent support \citep{wang_desiderata_2022}. As discussed above, researchers can often define the treatment broadly enough so that the remaining confounding features are conceptually separable from it. In practice, however, these confounding features are unobserved and must ultimately be learned from the data. A learned representation of the confounding features may therefore fail to preserve this separability. The following proposition provides a necessary implication of sequential separability that will later allow us to diagnose such violations.

\begin{proposition}[Sequential separability implies independent support]
\label{prop:seq_independent_support}
Suppose Assumption~\ref{separability} holds. Fix any
$1 \le s \le s' \le s_{\max}$. Define the joint support induced by the $s$th text segment as
\[
  \mathcal S_{WU,s}
  :=
  \left\{
  \left(g_W(\bx), \bg_{\bU_s}(\bx)\right):
  \bx \in \cX_s
  \right\}.
\]
Then, we have,
\[
  \mathcal S_{WU,s}
  =
  \mathcal W_s \times \mathcal U_s.
\]
\end{proposition}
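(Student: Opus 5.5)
The proof is a double-inclusion argument that uses the no-factorization clause at the end of Assumption~\ref{separability}. Throughout, $\mathcal W_s$ denotes the image $g_W(\cX_s)\subseteq\{0,1\}$ and $\mathcal U_s$ the image $\bg_{\bU_s}(\cX_s)$.

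\textbf{The inclusion $\mathcal S_{WU,s}\subseteq \mathcal W_s\times\mathcal U_s$.} This is immediate. Every pair $(g_W(\bx),\bg_{\bU_s}(\bx))$ has its first coordinate in the image of $g_W$ and its second in the image of $\bg_{\bU_s}$.

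\textbf{The reverse inclusion, by contradiction.} Suppose some $(w,\bu)\in\mathcal W_s\times\mathcal U_s$ is not in $\mathcal S_{WU,s}$. Choose $\bx'$ with $\bg_{\bU_s}(\bx')=\bu$. Then $g_W(\bx')=1-w$, and every $\bx$ with $g_W(\bx)=w$ has $\bg_{\bU_s}(\bx)\neq\bu$.

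The plan is to build a factorization $\bg_{\bU_s}(\bx)=\tilde{\bg}_{\bU_s}(g_W(\bx),\bg'(\bx))$ whose map genuinely depends on its first argument. That is exactly what the assumption forbids.

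\textbf{Construction of $\bg'$ and $\tilde{\bg}_{\bU_s}$.} The most concrete choice is to take $\cX_s'=\cX_s$ and let $\bg'$ be an injective map. Define
\[
\tilde{\bg}_{\bU_s}(w'',\bz)=\begin{cases}\bg_{\bU_s}(\bg'^{-1}(\bz)) & \text{if } w''=g_W(\bg'^{-1}(\bz)),\\ \bu^\ast(w'') & \text{otherwise,}\end{cases}
\]
where $\bu^\ast(w'')$ is a fixed value in $\mathcal U_s$.

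\textbf{The main obstacle.} With the construction above, the "otherwise" branch is free, so one could make $\tilde{\bg}_{\bU_s}$ depend on $w''$ for any $\bx$. That would make the assumption fail even when the support is a product. The no-factorization condition must therefore be read as a restriction to maps $\bg'$ that carry no information about $W$. In that reading, the second argument ranges over an index set that is jointly variable with $W$, so $(W,\bg'(\bX_s))$ has product support.

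Under this interpretation the contradiction goes as follows:
\begin{itemize}
\item Let $\bg'(\bx)$ encode all features of $\bx$ other than the treatment feature, so that $g_W(\bx)$ and $\bg'(\bx)$ vary independently.
\item Define $\tilde{\bg}_{\bU_s}(w,\bz)$ as the value of $\bg_{\bU_s}$ at the segment with treatment $w$ and other features $\bz$.
\item The missing pair $(w,\bu)$ forces $\tilde{\bg}_{\bU_s}(w,\bg'(\bx'))\neq\bu=\tilde{\bg}_{\bU_s}(1-w,\bg'(\bx'))$.
\end{itemize}
This exhibits exactly the forbidden dependence at $\bx'$, so $(w,\bu)\in\mathcal S_{WU,s}$ after all.

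I would state this product-domain interpretation explicitly as the content of the "intervene on $W_{is}$ without changing $\bU_{is}$" clause. Pinning it down carefully is the hardest step. The index $s'$ plays no role, since the claim is segment-wise.
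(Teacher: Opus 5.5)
\textbf{Gap.} The argument you settle on does not prove the proposition from Assumption~\ref{separability}. It makes two changes to the hypotheses. First, you weaken the no-factorization clause so that it quantifies only over $\bg'$ for which $(g_W,\bg')$ has product support. Second, you posit a specific such $\bg'$ (``all features other than the treatment'') that determines $\bg_{\bU_s}$, so that ``the segment with treatment $w$ and other features $\bz$'' exists and $\tilde{\bg}_{\bU_s}$ is well defined. Nothing in the paper supplies this $\bg'$. Its existence is just the product-support conclusion moved from $(W,\bU)$ to $(W,\bg')$. This is not a harmless technicality, because under your reading alone the claim fails. Take $\cX_s=\{\bx_1,\bx_2,\bx_3\}$, with $g_W$ taking the values $0,1,1$ and $\bg_{\bU_s}$ taking the values $\bu_a,\bu_a,\bu_b$, where $\bu_a\neq\bu_b$. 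The $W=0$ slice is a single point, so product support of $(g_W,\bg')$ forces $\bg'$ to be constant. A constant $\bg'$ admits no $\tilde{\bg}_{\bU_s}$ that reproduces both $\bg_{\bU_s}(\bx_2)$ and $\bg_{\bU_s}(\bx_3)$. Your clause therefore holds vacuously, yet $(0,\bu_b)\notin\mathcal S_{WU,s}$.

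\textbf{What is missing.} You are right that the literal clause is extremely strong. With $\bg'$ the identity, $\tilde{\bg}_{\bU_s}$ is free off the graph, so a $w$-dependent factorization exists whenever $\cU_s$ has at least two points. But a hypothesis that strong makes the contradiction easier, not harder. Under the assumption as written, exhibiting such a factorization is exactly what you need, so you should not have abandoned the literal reading. The paper's proof does this economically. It sets $\bg':=\bg_{\bU_s}$ and $\tilde{\bg}_{\bU_s}(w'',\bu''):=\bu''$, except at the unattainable pair $(w^\dagger,\bu^\dagger)$, where $\tilde{\bg}_{\bU_s}$ takes some value $\bm c\neq\bu^\dagger$. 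Because $(w^\dagger,\bu^\dagger)\notin\mathcal S_{WU,s}$, the identity $\bg_{\bU_s}(\bx)=\tilde{\bg}_{\bU_s}(g_W(\bx),\bg'(\bx))$ holds for every $\bx$. Now take any $\bx_0$ with $\bg_{\bU_s}(\bx_0)=\bu^\dagger$; necessarily $g_W(\bx_0)\neq w^\dagger$. At $\bx_0$ the value switches between $\bu^\dagger$ and $\bm c$ as $w''$ varies, which is the forbidden dependence. For this $\bg'$, $\tilde{\bg}_{\bU_s}$ is pinned down on $\mathcal S_{WU,s}$ and free only on its complement in $\{0,1\}\times\cU_s$. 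The construction therefore uses exactly the missing pair and nothing else, and requires no product parametrization of $\cX_s$.
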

The proof is in Appendix~\ref{proof:seq_independent_support}, which is almost identical to Proposition 3 of \cite{imai2026leveraging}. Proposition~\ref{prop:seq_independent_support} establishes that, under
sequential separability, any feasible treatment value and any feasible
confounder value can jointly occur within a given segment. This is
precisely the support condition required for positivity conditional on
the confounding features, as it ensures that both treatment levels are
attainable for every confounder value.

\subsection{Deep Generative Models}

Following the general strategy of GPI methodology, we use a deep
generative model to generate unstructured data $\bX_{i}$. If we are
interested in analyzing existing unstructured data rather than
creating new ones, we reproduce them by appropriately prompting a
deep generative model.  We adopt the general definition of deep
generative models \citep{imai2026causal}.
\begin{definition}[Deep Generative Model]\label{deep_use} \spacingset{1} A deep generative model is the following probabilistic model that takes prompt $\bP_{i}$ as an input and generates the treatment object $\bX_{i}$ as an output:
$$
\begin{aligned}
&\P(\bX_{i}  \mid \bh_{\bgamma}(\bR_{i}))\\
&\P(\bR_{i} \mid \bP_{i})
\end{aligned}
$$
where $\bR_{i} \in \cR$ denotes an observable
internal representation of $\bX_{i}$ contained in the model and
$\bh_{\bgamma}(\bR_{i})$ is a deterministic function parameterized by
$\bgamma$ that completely characterizes the conditional distribution
of $\bX_{i}$ given $\bR_{i}$.
\end{definition}
Under this definition of deep generative model, $\bR_{i}$ represents a
lower-dimensional representation of $\bX_{i}$ and is a hidden
representation of the neural network. This definition encompasses many
models for texts, images, and videos. Importantly, even when the
unstructured data of interest are not originally generated by a deep
generative model, researchers can still obtain such internal
representations by prompting a deep generative model to reproduce the
same object (or a close semantic equivalent) and extracting the
associated hidden representation $\bR_{i}$.

We assume that each segment of the unstructured data $\bX_{is}$ is
solely generated from the corresponding hidden states, which we denote
by $\bR_{is}$. We also assume that the last layer of the deep
generative model is a deterministic function of $\bR_i$. Under these
assumptions, the relevant low-dimensional features of the unstructured
object $(\bW_{is}, \bU_{is})$ can be regarded as a deterministic
function of the low-dimensional internal representation $\bR_{is}$ of
the deep generative model. We formalize these assumptions as follows.

\begin{assumption}[Factorized Deterministic Decoding]\label{det_dec} The output
  layer of a deep generative model is deterministic for a given text
  segment. That is, given the number of segments $S_i$,
\begin{align*}
   \P(\bX_{i} \mid \bh_{\bgamma}(\bR_i)) = \prod_{s = 1}^{S_i} \P(\bX_{is} \mid \bh_{\bgamma}(\bR_{is})) 
\end{align*}
and $\P(\bX_{is} \mid \bR_{is})$ is degenerate for any respondent $i$ and segment $s$.
\end{assumption}
Many deep generative models for texts satisfy this
assumption. Specifically, many decoder-only models use an
autoregressive architecture with a linear decoding layer (e.g.,
\citealt{vaswani_attention_2017, radford2018improving}). Consequently,
the vocabulary produced at segment $s$, $\bX_{is}$, is solely a
function of the hidden state for that segment, $\bR_{is}$. These
models also typically allow deterministic decoding by, for example,
setting the temperature parameter to zero.

\begin{figure}[t!]
\centering

\begin{tikzpicture}[
  every node/.style={font=\small},
  >=Latex,
  var/.style={
    circle,
    draw,
    fill=white,
    inner sep=0pt,
    minimum size=7.2mm
  },
  det/.style={
    -{Latex[length=1.8mm,width=1.3mm]},
    draw=red,
    double=white,
    double distance=1.0pt,
    line width=0.52pt,
    shorten <=1pt,
    shorten >=1pt
  },
  stoch/.style={
    -{Latex[length=1.8mm,width=1.3mm]},
    draw=black,
    line width=0.58pt,
    shorten <=1.1pt,
    shorten >=1.1pt
  },
  stage/.style={
    draw=red,
    dashed,
    rounded corners=2pt,
    line width=0.75pt,
    inner sep=8pt
  }
]

% ========================================================
% Segment 1
% ========================================================

\node[var] (R1) at (0,4.55) {$\bR_1$};
\node[var] (X1) at (0,3.30) {$\bX_1$};

\node[var] (W1) at (0,1.75) {$W_1$};
\node[var] (U1) at (0,0.45) {$\bU_1$};
\node[var] (V1) at (0,-0.85) {$\bm V_1$};

% ========================================================
% Segment 2
% ========================================================

\node[var] (R2) at (4.60,4.55) {$\bR_2$};
\node[var] (X2) at (4.60,3.30) {$\bX_2$};

\node[var] (W2) at (4.60,1.75) {$W_2$};
\node[var] (U2) at (4.60,0.45) {$\bU_2$};
\node[var] (V2) at (4.60,-0.85) {$\bm V_2$};

% ========================================================
% Outcome
% ========================================================

\node[var] (Y) at (8.25,0.45) {$Y$};

% ========================================================
% Dashed boxes for the decomposed segment features
% ========================================================

\begin{scope}[on background layer]

\node[
  stage,
  fit=(W1)(U1)(V1)
] (stage1) {};

\node[
  stage,
  fit=(W2)(U2)(V2)
] (stage2) {};

\end{scope}

% ========================================================
% Deterministic relations
% ========================================================

\draw[det] (R1) -- (X1);
\draw[det] (X1.south) -- (stage1.north);

\draw[det] (R2) -- (X2);
\draw[det] (X2.south) -- (stage2.north);

% ========================================================
% Segment 1 -> Segment 2
%
% W_1, U_1, V_1 may affect W_2 and V_2.
% Only W_1 and U_1 may directly affect U_2.
% ========================================================

% Same-level arrows
\draw[stoch] (W1.east) -- (W2.west);
\draw[stoch] (U1.east) -- (U2.west);
\draw[stoch] (V1.east) -- (V2.west);

% W_1 -> U_2 and V_2
\draw[stoch] (W1.345) -- (U2.165);
\draw[stoch] (W1.330) -- (V2.150);

% U_1 -> W_2 and V_2
\draw[stoch] (U1.15) -- (W2.195);
\draw[stoch] (U1.345) -- (V2.165);
\draw[stoch] (W1) to (Y);
\draw[stoch] (U1) to[bend right=15] (Y);

% V_1 -> W_2
% No V_1 -> U_2 under sequential separability
\draw[stoch] (V1.30) -- (W2.210);

% ========================================================
% Segment 2 -> outcome
% ========================================================

\draw[stoch]
  (W2.east)
  to
  (Y.150);

\draw[stoch]
  (U2.east)
  -- (Y.west);

\end{tikzpicture}

\caption{
Directed acyclic graph of the assumed data-generating process when
$S_i=2$. Red double-line arrows denote deterministic relations, while
black single-line arrows denote possibly stochastic relations. The
red dashed rectangles represent the decomposition of each segment
$\bX_s$ into the treatment feature $W_s$, confounding features
$\bU_s$, and remaining features $\bm V_s$.
}
\label{dag}

\end{figure}

Figure~\ref{dag} presents a directed acyclic graph (DAG) that
summarizes the data generating process and all the assumptions
described above. In this DAG, an arrow with red double lines
represents a deterministic causal relation while an arrow with a
single black line represents a possibly stochastic causal
relation. 
For clarity, DAG splits $\bX_s$ into three parts; treatment feature $W_s$, confounding features $\bU_s$, and the remaining features $\bm{V}_s$.
Under Assumption~\ref{separability}, the treatment and confounding features capture all the relevant features of $\bX_{is}$ that influence the outcome and future confounding features.
Therefore, there is no direct arrow from $\bm{V}_s$ to $\bU_{s+1}$ or to $Y$ although $\bm{V}_s$ may affect $W_{s+1}$ and $\bm{V}_{s+1}$.
This implies the following conditional independence relations,
\begin{align*}
    Y_i \ &\indep \ \{\bR_{i1}, \cdots, \bR_{is}\} \mid  W_{i1}, \cdots, W_{is}, \bU_{i1}, \cdots, \bU_{is}, S_i = s \quad \\
    \bU_{i,s'+1} \ &\indep \ \{\bR_{i1}, \cdots, \bR_{is'}\} \mid W_{i1}, \cdots, W_{is'}, \bU_{i1}, \cdots, \bU_{is'}, S_i = s
\end{align*}
for any $s \in \mathcal{S}$ and $s' < s$. 

\subsection{Causal estimand under dynamic stochastic
  interventions} \label{sec::estimand}

Under this setup, we are interested in the average potential outcome
under a dynamic stochastic intervention. We adopt a stochastic
intervention because a deterministic intervention is likely to suffer
from severe violation of the positivity assumption in our
high-dimensional dynamic settings.  Stochastic interventions mitigate
this overlap problem by generating counterfactual treatment paths from
a distribution that remains close to that of the observed treatment
paths \citep[e.g.,][]{munoz2012population,
  kennedy_nonparametric_2019,papadogeorgou_causal_2022}.
Specifically, we consider the following dynamic stochastic
intervention.

\begin{definition}[Dynamic Stochastic
  intervention] \label{def:stochastic} For each segment
  $s=1,\ldots,S_i$, we consider the following
  dynamic stochastic intervention
$$
q_s(\delta_s; \overline{\bm{w}}_{s-1}) = 
\frac{\delta_s p_s(\overline{\bm{w}}_{s-1}) }{\delta_s
  p_s(\overline{\bm{w}}_{s-1}) + 1 -
  p_s(\overline{\bm{w}}_{s-1})}, 
$$
where
$p_s(\overline{\bm{w}}_{s-1}) = \P(W_{is} = 1 \mid
\overline{\bm{W}}_{i,s-1} = \overline{\bm{w}}_{s-1}, S_i \geq s)$ denotes the
observed treatment assignment probability. Note that
$\overline{\bm{W}}_{i,s-1} = (W_{i1}, \ldots, W_{i,s-1})$ represents
the past treatment history, whose value is given by the vector
$\overline{\bm{w}}_{s-1} = (\bm{w}_1, \ldots, \bm{w}_{s-1})$ for
$s > 1$ with $\overline{\bm{W}}_{i0} = \emptyset$.
\end{definition}
The user-specified incremental parameter $\delta_s \in (0, \infty)$
determines the extent to which the counterfactual treatment assignment
probability diverges from the observed one at segment $s$.
Specifically, this parameter equals the following odds ratio,
\begin{align*}
    \delta_s \ = \ \frac{q_s(\delta_s; \overline{\bm{w}}_{s-1}) }{1 - q_s(\delta_s; \overline{\bm{w}}_{s-1}) } \cdot \frac {1 - p_s(\overline{\bm{w}}_{s-1}) }{p_s(\overline{\bm{w}}_{s-1}) }.
\end{align*}
Thus, a greater value of $\delta_s$ corresponds to a policy that assigns
treatment with a greater probability. For generality, we allow $\delta_s$ to vary across time $s$. Our stochastic intervention
builds on the incremental propensity score intervention of
\cite{kennedy_nonparametric_2019}. Unlike the original formulation,
however, our proposed intervention depends only on the observed
treatment history, since the confounders $\bU_{is}$ must be estimated
in our setting. We therefore do not condition on $\bU_{is}$, as doing
so would not yield a uniquely identified estimand.

Consequently, in contrast to \cite{kennedy_nonparametric_2019}, our
approach requires the weak overlap assumption, which states that all
possible treatment assignments under stochastic intervention can be
realized. Specifically, the ratio of the propensity score over the
corresponding treatment assignment probability under the stochastic
intervention must be bounded away from zero. Formally, let
$\overline{\bm{U}}_{is} = (\bU_{i1}, \ldots, \bU_{is})$ denote the
history of confounding features up to segment $s \ge 1$.  Then, we can
define the propensity score at segment $s$ as
$\pi_s(\overline{\bm{W}}_{i,s-1}, \overline{\bm{U}}_{is}, S_i) :=
\P(W_{is} = 1 \mid \overline{\bm{W}}_{i,s-1},
\overline{\bm{U}}_{is}, S_i)$, which represents the conditional
probability of treatment assignment given the treatment and confounder
histories up to segment $s$. We formalize this assumption as
follows.

\begin{assumption}[Bounded relative overlap]\label{overlap} There exists a constant $c > 0$ such that 
\begin{align*}
\pi_s(\overline{\bm{w}}_{s-1}, \overline{\bm{u}}_{s}, \ell) \ge c \cdot p_s(\overline{\bm{w}}_{s-1}) \quad \text{and} \quad  1 - \pi_s(\overline{\bm{w}}_{s-1}, \overline{\bm{u}}_{s}, \ell) \ge c \cdot (1 -  p_s(\overline{\bm{w}}_{s-1}))
\end{align*}
for all $s=1,\ldots,S_i$, $s \leq \ell$, $i=1,\ldots,N$,
$\overline{\bm{w}}_{s-1} = (\bm{w}_1, \ldots, \bm{w}_{s-1}) \in
\mathcal{W}^{s-1}$, and
$\overline{\bm{u}}_{s} = (\bm{u}_1, \ldots, \bm{u}_{s}) \in
\mathcal{U}^{(s)}$ where
$\mathcal{U}^{(s)} = \prod_{s^\prime = 1}^{s}
\cU_{s^\prime}$. 
\end{assumption}

Importantly, Assumption~\ref{overlap} is weaker than the standard
positivity assumption that the propensity score itself is bounded away
from zero since the overlap is defined relative to the stochastic
intervention.  In particular, whenever
$p_s(\overline{\bm{W}}_{i,s-1}) = 0$, this intervention allows
the true propensity score to be zero
($p_s(\overline{\bm{W}}_{i,s-1}) = 0$ implies
$\pi_s(\overline{\bm{W}}_{i,s-1}, \overline{\bm{U}}_{is}, S_i) = 0$ by
the law of iterated expectation).  This relaxation allows us to
consider realistic interventions even when the total number of
segments $S_i$ is large.

As the causal quantity of interest, we consider the average outcome
under a specified stochastic intervention, which is formally defined
as,
\begin{align}
    \Psi(\bm{\delta}) := \E\biggl[ \int_{\mathcal{W}^{S_i}} Y_i(W_{i1} = w_1,
  \ldots, W_{iS_i} = w_{S_i}) 
  \prod_{s= 1}^{S_i} d Q_s(w_s; \overline{\bm{w}}_{s-1}, \delta_s)\biggr], \label{target_quantity}
\end{align}
where $\bm\delta = (\delta_1, \cdots, \delta_{s_{\max}})$ and
$$
dQ_s(w_s; \overline{\bm{w}}_{s-1}, \delta_s) = \frac{w_s \delta_s
  p_s(\overline{\bm{w}}_{s-1}) + (1 - w_s) (1 -
  p_s(\overline{\bm{w}}_{s-1}) ) }{\delta_s p_s(\overline{\bm{w}}_{s-1}
  )+ 1 - p_s(\overline{\bm{w}}_{s-1})},
$$
where we omit confounding features, which are possibly affected by past treatment features, from the potential outcome for notational simplicity. Throughout the paper, we use the following notation
\begin{align*}
    \bm\delta_{\underline{s}} \ := \ \bm\delta_{(s+1):s_{\max}}, \qquad
    \bm\delta_{\overline{s}} \ := \ \bm\delta_{1:s}, \qquad
    \bm\delta \ = \ (\bm\delta_{\overline{s}}, \bm\delta_{\underline{s}}), \qquad
    \bm\delta_{-s} \ := \ (\delta_1, \ldots, \delta_{s-1}, \delta_{s+1}, \ldots, \delta_{s_{\max}}).
\end{align*}

The estimand $\Psi(\bm\delta)$ given in Equation~\eqref{target_quantity}
represents the expected outcome when the sequence of treatment
features is generated according to the proposed stochastic
intervention indexed by the incremental parameter $\bm \delta$, while all
confounding features evolve according to their counterfactual distributions under the stochastic intervention. The inner integral averages the potential
outcome over all possible treatment paths, weighted by the stochastic
intervention that shifts the treatment assignment probability toward
one as $\delta_s$ increases. The outer expectation marginalizes over the
distribution of confounding features, the potential outcomes, and the
number of text segments $S_i$.

\subsection{Nonparametric identification}

Given the above setup, we establish the nonparametric identification
of the causal estimand defined in
Equation~\eqref{target_quantity}. Recall that, under
Assumptions~\ref{treatment_feature},
\ref{confounding_feature}, and~\ref{det_dec}, both the
treatment and confounding features for each text segment are
deterministic functions of the corresponding hidden state. As a
result, conditioning on the full history of latent confounding
features up to text segment $s$ suffices to block all backdoor paths
between treatment and outcome (see Figure~\ref{dag}).

Although these confounding features are unknown and must be inferred
from the data, identification does not require recovering the true
confounding features exactly. 
Instead, we adjust for a deconfounder
function $\boldf_s: \mathcal{R} \mapsto \mathcal{F} \subset
\R^{D_f}$ that retain the information
needed for identification. To formalize the idea, we first define the outcome model given the treatment and deconfounder history as
\begin{align*}
\mu
\left(
\overline{\bm w}_{S_i},
\boldf_1(\bR_{i1}),\ldots,\boldf_{S_i}(\bR_{iS_i}), S_i
\right)
&:=
\E\left[
Y_i
\mid
\overline{\bW}_{iS_i}=\overline{\bm w}_{S_i},
\boldf_1(\bR_{i1}),\ldots,\boldf_{S_i}(\bR_{iS_i}), S_i
\right].
\end{align*}
We also define the conditional mean function under the stochastic intervention,
\begin{equation}
\begin{aligned}
&m_{s}
\left(
\overline{\bm w}_{s},
\boldf_1(\bR_{i1}),\ldots,\boldf_{s}(\bR_{is}), S_i;
\bm\delta_{\underline{s}}
\right) \\
:= \ & \int_{\mathcal R^{S_i-s}} \int_{\mathcal W^{S_i-s}} \ \mu
\left(
\overline{\bm w}_{S_i},
\boldf_1(\bR_{i1}),\ldots,\boldf_{S_i}(\bR_{iS_i}), S_i
\right)\\
&\qquad \qquad \times  \prod_{t = s+1}^{S_i} d Q_{t}
(w_{t};\overline{\bm w}_{t-1},\delta_{t}) dF(\bR_{it} \mid \overline{\bm{W}}_{i,t-1} = \overline{\bm{w}}_{t-1},  \boldf_1(\bR_{i1}), \cdots, \boldf_{t-1}(\bR_{i,t-1}), S_i)
%\E\left[
%\int_{\mathcal{W}}
%m_{s'+1}
%\left(
%\overline{\bm w}_{s'+1},
%\tilde \boldf_1(\bR_{i1}),\ldots,
%\tilde \boldf_{s'+1}(\bR_{i,s'+1});
%\bm\delta_{(s'+2):s}
%\right)
%dQ_{s'+1}
%(w_{s'+1};\overline{\bm W}_{is'},\delta_{s'+1})
%\right.\\
%&\hspace{3in}
%\Bigr | \ 
%\overline{\bW}_{is'}=\overline{\bm w}_{s'},
%\boldf_1(\bR_{i1}),\ldots,
%\boldf_{s'}(\bR_{i,s'}),
%S_i=s \biggr ].
\end{aligned} \label{eq:recursive_m}
\end{equation}
for $s=0,\ldots,S_i-1$ and $m_{S_i} = \mu$. Specifically, $m_{s}$ represents the expected outcome when all future treatments from the ($s+1$)th segment onward are generated according to the stochastic intervention. Thus, if the functions $\boldf_s(\bR_{is})$ capture the same information as the true confounding features $\bU_{is}$, averaging the resulting $m_0$ over the distribution of $S_i$
yields the target estimand in Equation~\eqref{target_quantity}.

Following \citet{imai2026causal, imai2026leveraging}, we characterize the deconfounder through mean independence, which ensures that it retains the confounding information relevant for the outcome. In the dynamic setting, this condition must hold not only for the terminal outcome regression but also for each intermediate outcome model arising in the backward recursion above. The following definition extends the notion of a deconfounder to this dynamic setting.

\begin{definition}[Deconfounder]\label{def:deconfounder}
A collection of functions
$\boldf_s:\mathcal{R}\mapsto\mathcal{F}\subset\R^{D_f}$,
$s=1,\ldots,s_{\max}$, is a deconfounder if it satisfies the following
mean-independence conditions:
\begin{align}
&\E\left[
Y_i
\mid
\boldf_1(\bR_{i1}),\ldots,\boldf_s(\bR_{is}),
\overline{\bW}_{is},
S_i=s
\right]
=
\E\left[
Y_i
\mid
\boldf_1(\bR_{i1}),\ldots,\boldf_s(\bR_{is}),
\overline{\bW}_{is},
\overline{\bR}_{is},
S_i=s
\right],
\label{deconfounder_terminal_independence}
\end{align}
and, for any pair of deconfounder functions $\boldf_s$ and
$\tilde{\boldf}_s$ (which may coincide so that
$\boldf_s=\tilde{\boldf}_s$),
\begin{align}
&\E\left[m_{s'+1}
\left(
\overline{\bm w}_{s'+1},
\tilde \boldf_1(\bR_{i1}),\ldots,
\tilde \boldf_{s'+1}(\bR_{i,s'+1}), s;
\bm\delta_{\underline{s'+1}}
\right) \ \middle | \
\boldf_1(\bR_{i1}),\ldots,
\boldf_{s'}(\bR_{is'}),
\overline{\bW}_{is'},
S_i=s
\right] \nonumber
\\
= \ &
\E\left[
m_{s'+1}
\left(
\overline{\bm w}_{s'+1},
\tilde \boldf_1(\bR_{i1}),\ldots,
\tilde \boldf_{s'+1}(\bR_{i,s'+1}), s;
\bm\delta_{\underline{s'+1}}
\right) \ \middle| \
\boldf_1(\bR_{i1}),\ldots,
\boldf_{s'}(\bR_{is'}),
\overline{\bR}_{is'},
\overline{\bW}_{is'},
S_i=s
\right],
\label{deconfounder_independence}
\end{align}
for any $s\in\cS$ with $s>1$, $s'=1,\ldots,s-1$,
$\overline{\bm w}_{s'}\in\mathcal W^{s'}$, and any value of the
incremental parameter $\bm\delta\in\R_+^{s_{\max}}$, where
$\overline{\bR}_{is'}=(\bR_{i1},\ldots,\bR_{is'})$ and
$\overline{\bR}_{i0}=\emptyset$. 
%Here, $m_{s+1}$ denotes the outcome models, which are recursively defined as
%\begin{align*}
%m_s
%\left(
%\overline{\bm w}_s,
%\boldf_1(\bR_{i1}),\ldots,\boldf_s(\bR_{is}); \bm\delta_{\underline{s}}
%\right)
%&:=
%\E\left[
%Y_i
%\mid
%\overline{\bW}_{is}=\overline{\bm w}_s,
%\boldf_1(\bR_{i1}),\ldots,\boldf_s(\bR_{is}),
%S_i=s
%\right],
%\end{align*}
%and, for $s'=1,\ldots,s-1$, 
%\begin{equation*}
%\begin{aligned}
%&m_{s'}
%\left(
%\overline{\bm w}_{s'},
%\boldf_1(\bR_{i1}),\ldots,\boldf_{s'}(\bR_{is'});
%\bm\delta_{\underline{s'+1}}
%\right)\\
%&:=
%\E\left[
%\int_{\mathcal{W}}
%m_{s'+1}
%\left(
%\overline{\bm w}_{s'+1},
%\tilde \boldf_1(\bR_{i1}),\ldots,
%\tilde \boldf_{s'+1}(\bR_{i,s'+1});
%\bm\delta_{(s'+2):s}
%\right)
%dQ_{s'+1}
%(w_{s'+1};\overline{\bm W}_{is'},\delta_{s'+1})
%\right.\\
%&\hspace{3in}
%\Bigr | \ 
%\overline{\bW}_{is'}=\overline{\bm w}_{s'},
%\boldf_1(\bR_{i1}),\ldots,
%\boldf_{s'}(\bR_{i,s'}),
%S_i=s \biggr ].
%\end{aligned}
%\end{equation*}
\end{definition}
Importantly, the deconfounder for a given text segment may differ across
outcome regressions in the backward recursion, as indicated by the
distinction between $\boldf_s$ and $\tilde{\boldf}_s$ above.
This flexibility is useful for estimation because a representation
learned to satisfy the mean-independence condition for one regression
need not satisfy the corresponding condition for another.

The existence of deconfounder function is guaranteed under Assumption~\ref{confounding_feature} since confounding features themselves satisfy the same mean independence condition.  In addition, although there may exist multiple deconfounders, we show that any low-dimensional
representation that satisfies the conditions given in Definition~\ref{def:deconfounder} yields the same identification
formula. Theorem~\ref{identification} formalizes this result. 

\begin{theorem}[Nonparametric Identification]\label{identification}
Under Assumptions~\ref{consistency}--\ref{overlap}, let
$\boldf_s:\cR\mapsto\cF\subset\R^{D_f}$,
$s=1,\ldots,s_{\max}$, be deconfounder functions satisfying
Definition~\ref{def:deconfounder} and bounded relative overlap. Then,
by adjusting for such deconfounders, we can nonparametrically identify
the average outcome under the stochastic intervention defined in
Equation~\eqref{target_quantity} as
\begin{align*}
\Psi(\bm\delta)
& \ = \  \E\left[  \int_{\cR^{S_i}} \int_{\mathcal{W}^{S_i}}
                \E\left[Y_{i} \mid \overline{\bm{w}}_{S_i}, \
                \{\boldf_{s}^{[S_i]}(\bR_{is})\}_{s=1}^{S_i}, \ S_i\right]
                \right. \\
& \left. \hspace{1in} \times
\prod_{s = 1}^{S_i}
\biggl\{
d Q_s(w_{s}; \overline{\bm{w}}_{s-1}, \delta_s)
dF(\bR_{is} \mid
\overline{\bm{W}}_{i,s-1} = \overline{\bm{w}}_{s-1},
\{\boldf_{s'}^{[s-1]}(\bR_{is'})\}_{s'=1}^{s-1},
S_i)
\biggr\} \right],
\end{align*}
where the superscript $[t]$ in $\boldf_{s}^{[t]}$ indicates that we may have
$\boldf_{s}^{[t]}(\bR_{is}) \neq
\boldf_{s}^{[t^\prime]}(\bR_{is})$
with $t \neq t^\prime$. 
\end{theorem}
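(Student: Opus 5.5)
The plan is a two-stage argument. First, identify $\Psi(\bm\delta)$ with a g-formula written in terms of the \emph{true} confounding features $\bU_{is}$. Second, show that replacing $\bU_{is}$ by any deconfounder, including one that varies across stages, gives the same value. Throughout I condition on $S_i=\ell$ and average over $S_i$ at the end. This is legitimate because $S_i$ is a function of the randomly assigned $\bX_i$ (Assumption~\ref{randomization}), and the intervention $Q_s$ depends only on $\overline{\bm w}_{s-1}$ and $\delta_s$.

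\textbf{Step 1: g-formula in $(W,\bU)$.} Sequential separability (Assumption~\ref{separability}) says the potential outcome depends on $\bX_i$ only through $(\overline{\bW}_{i\ell},\overline{\bU}_{i\ell})$, and that future $\bU$'s evolve only through past $(W,\bU)$. Random assignment (Assumption~\ref{randomization}) gives sequential ignorability
\[
Y_i(\overline{\bm w}_\ell)\indep W_{is}\mid \overline{\bW}_{i,s-1},\overline{\bU}_{is},S_i .
\]
This holds because $W_{is}$ is a function of $\bX_{is}$, and the DAG in Figure~\ref{dag} shows that $\bm V$'s do not reach $Y$ or future $\bU$. Consistency (Assumption~\ref{consistency}) then lets me apply the standard longitudinal g-formula argument for stochastic interventions, as in \cite{kennedy_nonparametric_2019} and \cite{robi:hern:brum:00}. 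The argument is an induction from $s=\ell$ down to $1$: iterate expectations, insert $W_{is}$ by ignorability, and replace the conditional law of $W_{is}$ by $dQ_s$. The intervention depends only on the treatment history, so $dQ_s$ factors outside the inner conditional expectations.

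Bounded relative overlap (Assumption~\ref{overlap}) makes $dQ_s$ absolutely continuous with respect to the conditional treatment law given $(\overline{\bW}_{i,s-1},\overline{\bU}_{is})$. Hence every conditional expectation appearing in the recursion is well defined on the support charged by $Q$, even when $p_s=0$. Proposition~\ref{prop:seq_independent_support} confirms that both treatment values are attainable at each confounder value. The result is $\Psi(\bm\delta)=\E[m^{U}_0]$, where $m^U_s$ is the recursion \eqref{eq:recursive_m} with $\boldf_s(\bR_{is})$ replaced by $\bU_{is}$.

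Because $(W_{is},\bU_{is})$ are deterministic functions of $\bR_{is}$ (Assumptions~\ref{treatment_feature}, \ref{confounding_feature}, \ref{det_dec}), the displayed conditional independences give
\[
\E[Y\mid \overline{\bm w},\overline{\bU},\overline{\bR}]=\E[Y\mid\overline{\bm w},\overline{\bU}],
\qquad
F(\bU_{s+1}\mid\cdot,\overline{\bR}_s)=F(\bU_{s+1}\mid \overline{\bW}_s,\overline{\bU}_s).
\]
I can therefore rewrite $m^U_s$ as a function of $\overline{\bR}_{is}$: the integration over $\bU_{t}$ becomes integration over $\bR_{it}$ given $\overline{\bR}_{i,t-1}$. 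Call this function $m^R_s$. It is the recursion that conditions on the full $\overline{\bR}$.

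\textbf{Step 2: replace $\overline{\bR}$ by deconfounders, by backward induction.} I claim that for each $s$,
\[
m_s\bigl(\overline{\bm w}_s,\{\boldf^{[s]}_{t}(\bR_{it})\}_{t\le s},\ell;\bm\delta_{\underline s}\bigr)=m^R_s(\overline{\bm w}_s,\overline{\bR}_{is},\ell)
\quad\text{almost surely.}
\]
The base case $s=\ell$ is exactly condition \eqref{deconfounder_terminal_independence}.

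For the inductive step, $m_{s}$ is defined by integrating $dQ_{s+1}$ and $dF(\bR_{i,s+1}\mid \overline{\bm w}_s,\{\boldf^{[s]}_t\},\ell)$ against $m_{s+1}$ evaluated at the stage-$(s+1)$ deconfounders $\tilde\boldf=\boldf^{[s+1]}$. Condition \eqref{deconfounder_independence}, with $\tilde\boldf$ allowed to differ from $\boldf$, says this conditional expectation given $\{\boldf^{[s]}_t\}$ equals the one given the full $\overline{\bR}_{is}$. The induction hypothesis replaces $m_{s+1}$ by $m^R_{s+1}$. The $dQ_{s+1}$ factor commutes with everything since it depends only on $\overline{\bm w}_s$. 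Hence $m_s=m^R_s$.

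At $s=0$ I obtain $m_0=m^R_0=m^U_0$, and averaging over $S_i$ yields the displayed formula. Unrolling the recursion gives the iterated integral with the stage-specific superscripts $[t]$, which is exactly the form in the statement.

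\textbf{Main obstacle.} The delicate step is the inductive step in Step 2. The conditional expectation in \eqref{eq:recursive_m} is taken over the law of $\bR_{i,s+1}$ given only the coarsened history. One must check that this law, integrated against $m_{s+1}$, coincides with the law given the full history. Condition \eqref{deconfounder_independence} supplies precisely this, but only for the fixed value $\overline{\bm w}_{s}$ and not under the observed $\overline{\bW}_{is}$.

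I would handle this by conditioning on the event $\overline{\bW}_{is}=\overline{\bm w}_s$. This event has positive probability wherever $Q$ charges it, by overlap together with the fact that $Q$ vanishes when $p_s=0$. I would also verify that the versions of the conditional expectations agree almost surely on that set. A secondary care point is that $m_{s+1}$ must be measurable in $\overline{\bR}_{i,s+1}$, which holds since each $\boldf$ is a function of $\bR$.
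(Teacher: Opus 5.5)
Your plan follows the paper's proof. You first derive a g-formula in $\bU$ from randomization, separability and relative overlap. You then replace $\bU$ by an arbitrary deconfounder by bridging through the full history $\overline{\bR}_{is}$; the paper's per-stage chain $\boldf\to(\boldf,\overline{\bR})\to(\boldf^*,\overline{\bR})\to\boldf^*$ is your $m_s=m^R_s=m^U_s$. There is, however, a gap in Step~2, and conditioning on $\{\overline{\bW}_{is}=\overline{\bm w}_s\}$ does not close it.

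\begin{itemize}
\item Because $\overline{\bW}_{is}$ is a deterministic function of $\overline{\bR}_{is}$, conditions \eqref{deconfounder_terminal_independence}--\eqref{deconfounder_independence} plus sufficiency of $\bU$ give $m_s(\overline{\bm w}_s,\boldf(\overline{\bR}_{is}))=m^U_s(\overline{\bm w}_s,\overline{\bU}_{is})$ only on that event. Off it, your ``$m^R_s$ conditioning on the full $\overline{\bR}$'' is a regression on a null event.
\item The step down to $s-1$, and the final formula, integrate $m_s(\overline{\bm w}_s,\cdot)$ against $dF(\bR_{is}\mid\overline{\bW}_{i,s-1}=\overline{\bm w}_{s-1},\ldots)$. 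Under that law $W_{is}=1-w_s$ with positive probability.
\item So the hypothesis you need at stage $s$ is strictly stronger than what stage $s$ delivers. This already fails at your base case: \eqref{deconfounder_terminal_independence} gives the terminal identity only at the observed $\overline{\bW}_{i\ell}$, not at a fixed $\overline{\bm w}_\ell$.
\end{itemize}

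This is not a technicality. Take a single segment with $\bU\sim\mathrm{Bern}(0.3)$ and $V\sim\mathrm{Bern}(0.5)$ independent, $W=\bU\oplus V$, and $\E[Y\mid\bR]=\mu(W,\bU)$. Flipping $V$ flips $W$ with $\bU$ fixed, so separability holds in its intended sense, and $(W,\bU)$ has product support. Now take $\boldf=V$:
\begin{itemize}
\item It satisfies \eqref{deconfounder_terminal_independence}, because $(W,V)$ determines $\bU$.
\item It satisfies bounded relative overlap with $c=0.6$.
\item Yet the displayed formula returns $\sum_w dQ_1(w)\{\mu(w,0)+\mu(w,1)\}/2$ instead of $\Psi(\bm\delta)=\sum_w dQ_1(w)\{0.7\,\mu(w,0)+0.3\,\mu(w,1)\}$.
\end{itemize}

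The missing ingredient is a joint condition. One option is bounded relative overlap of $W_{is}$ given $\overline{\bW}_{i,s-1}$, $\overline{\bU}_{is}$ \emph{and} the deconfounder histories together. At the level of supports, this says $W_{is}$ can be flipped while holding both $\bU_{is}$ and every $\boldf^{[t]}_s(\bR_{is})$ fixed. That is exactly what lets you transport the identity from $\{W_{is}=w_s\}$ to $\{W_{is}=1-w_s\}$ at each stage of your backward induction.

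The paper's own proof passes over the same point. It lists ``separable from the treatment feature'' as a hypothesis on $\boldf$ but never uses it. It then concludes that the inner regressions ``remain identical'' from an identity shown only at the observed $\overline{\bW}_{is}$. You should state such a joint condition explicitly and invoke it in the inductive step.
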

The proof is in Appendix~\ref{proof_identification}.

\subsection{Estimation and Inference}

Given the identification result, we next consider estimation and
inference. In the cross-section settings, the deconfounder can be
learned by using the neural network architecture that encodes the
required independence relationship \citep{imai2026causal}. However,
the challenge in the dynamic settings is that we need to learn the
deconfounder $\boldf_{s}^{[t]}$ for the $t$-th outcome regression of each segment $s \in \cS$, while satisfying
Equation~\eqref{deconfounder_independence}.

To avoid parameter explosion, we consider the architecture shared
across segments by assuming the mapping
$\boldf^{[s]}: \cR \times \cS \mapsto \cF$, which takes the internal
representation $\bR_{is'}$ and the segment indicator variable $s'$ as
inputs. Specifically, we first rewrite the conditional outcome model under the stochastic intervention defined in Equation~\eqref{eq:recursive_m} as the following pseudo-outcome model given the deconfounder for each $s = S_i, S_i-1, \cdots, 1$,
\begin{align}
    m_{s}(\overline{\bm{W}}_{is}, \{\boldf^{[s]}(\bR_{is'},s')\}_{s'=1}^{s}, S_i; \bm\delta_{\underline{s}})
  \ = \
    \E[\tilde{Y}_{is} \mid \overline{\bm{W}}_{is}, \{\boldf^{[s]}(\bR_{is'}, s')\}_{s'=1}^{s}, S_i].
\end{align}
where $\tilde{Y}_{is}$ denotes the pseudo outcome.  When
$S_i = s$, the pseudo outcome is defined as $\tilde{Y}_{is} := Y_i$, but for $S_i > s$, we marginalize $W_{i,s+1}$ under the stochastic intervention to obtain,
\begin{equation}
    \begin{aligned}
    \tilde{Y}_{is} &:=
    \frac{\delta_{s+1} p_{s+1}(\overline{\bm
      W}_{is})m_{s+1}(\overline{\bm{W}}_{is}, W_{i,s+1} = 1, \ \overline{\bm F}_{i,s+1}; \bm\delta_{\underline{s+1}}) }{\delta_{s+1}  p_{s+1}(\overline{\bm W}_{is}) + 1 - p_{s+1}(\overline{\bm
      W}_{is})}\\
      &\quad \hspace{.5in} + \frac{\{1 -
      p_{s+1}(\overline{\bm W}_{is})\} m_{s+1}(\overline{\bm{W}}_{is}, W_{i,s+1} = 0, \ \overline{\bm F}_{i,s+1}; \bm\delta_{\underline{s+1}})
      }{\delta_{s+1}  p_{s+1}(\overline{\bm W}_{is}) + 1 - p_{s+1}(\overline{\bm
      W}_{is})}, \label{pseudo_outcome}
    \end{aligned}
\end{equation}
where $\overline{\bm F}_{is} := \{\boldf^{[s]}(\bR_{i1},1) \cdots, \boldf^{[s]}(\bR_{is},s), S_i\}$ is the collection of deconfounders estimated for the pseudo-outcome for segment $s$.

Figure~\ref{architecture} summarizes our architecture, which
simultaneously estimates the deconfounder and the pseudo-outcome model. We
learn this neural network architecture by minimizing the squared loss,
\begin{align}
    \{\hat{\blambda}_s, \hat{\btheta}_s\} = \argmin_{\blambda_s, \btheta_s} \; \frac{1}{N_s} \ \sum_{i = 1}^N \mathbf{1}\{S_i \geq s\} \left\{ \tilde Y_{is} - m_{s}(\overline{\bm{W}}_{is}, \{\boldf^{[s]}(\bR_{is'},s';
  \blambda_s)\}_{s'=1}^{s}, S_i; \bm\delta_{\underline{s}}, \btheta_s) \right\}^2. \label{eq:minimization}
\end{align}
where $N_s = \sum_{i = 1}^N \mathbf{1}\{S_i \geq s\}$.
We make the parameters of the neural network explicit by letting $\bm\lambda_s$ represent
the parameters of deconfounder $\boldf^{[s]}$ to be estimated and using $\btheta_s$ to denote the parameters of the pseudo outcome model. 

\begin{figure}
\centering
\begin{tikzpicture}[
  node distance=0.7cm and 1.1cm,
  box/.style={draw, rectangle, minimum height=0.6cm, minimum width=0.9cm, align=center},
  func/.style={draw, rectangle, minimum height=0.6cm, minimum width=1.6cm, align=center},
  plate/.style={draw, rectangle, rounded corners=1pt, inner sep=0.35cm},
]

% ------------------ Plate block (repeated over t=1,...,T) ------------------
\node[box] (R) {$\bR_{is'}$};
\node[box, right=1.1 of R, fill=lightgray] (f) {$\boldf^{[s]}(\bR_{is'},\,s';\blambda_s)$};

\draw[->, line width=1] (R) -- (f);

% Plate around R and f
\node[plate, fit=(R)(f), label={[anchor=south east]{$\times s$}}] (P) {};

% ------------------ Outcome model + output ------------------
\node[box, right=1.5 of f, fill=lightgray, minimum width=2.7cm] (mu)
{$m_{s} \bigl(\overline{\bm{W}}_{is}, \{\boldf^{[s]}(\bR_{is'},\,s';\blambda_s )\}_{s' = 1}^{s}, S_i;\,\bm\delta_{\underline{s}}, \btheta_s\bigr)$};

\node[box, right=1.0 of mu] (Y) {$\tilde Y_{is}$};

\draw[->, line width=1] (f) -- (mu);
\draw[->, line width=1] (mu) -- (Y);

% ------------------ W-sequence input ------------------
\node[box, above left=1.0 and 0.8 of mu, fill=lightgray] (Wbar) {$\overline{\bm{W}}_{is}$};

\node[box, below left=1.0 and 0.8 of mu, fill=lightgray] (T) {$S_i$};

\draw[->, line width=1] (Wbar) -- (mu);
\draw[->, line width=1] (T) -- (mu);

\end{tikzpicture}
\caption{Diagram Illustrating the Proposed Model Architecture when predicting $\tilde Y_{is}$. The proposed model takes an internal
representation at each text segment $\bR_{is}$ as an input and finds the deconfounder $\boldf(\bR_{is})$, which is a lower-dimensional representation of $\bR_{is}$. The set of deconfounders at the all text segment and the treatment history $\overline{\bm{W}}_{iS_i}$ are then used to predict the outcome $\tilde Y_{is}$.}
\label{architecture}
\end{figure}

Given the above architecture, we estimate the causal quantity of
interest using the semiparametric estimation method for the
longitudinal data. To characterize the behavior of the estimator at
the limit, we first derive an influence function, using the results
from \cite{kennedy_nonparametric_2019} and
\cite{rotnitzky2017multiply}.
\begin{theorem}[Influence Function]\label{theorem_if}
  Denote the observed data by $\cD := \{\cD_i\}_{i = 1}^N$, where
  $\cD_i = \{Y_i, S_i, \overline{\bW}_{iS_i},
  \overline{\bR}_{iS_i}\}$. The influence function for the target
  parameter $\Psi(\bm\delta)$ is given by
{\small \begin{equation}
    \begin{aligned}
        &\psi(\cD_i; \bm\delta, \{ \pi_s, m_{s}, \tilde m_{s},  p_s, \boldf^{[s]} \}_{s= 1}^{S_i}, \Psi)\\
    = \ &  \sum_{s = 1}^{S_i}\left\{
\prod_{s^\prime=1}^{s-1}
\omega_{s^\prime}(\overline{\bm{H}}_{is^\prime}, W_{is^\prime};\delta_{s^\prime}, p_{s'}, \pi_{s'})
\right\}\\
&\quad \times
 \frac{\delta_s  p_s(\overline{\bm W}_{i,s-1}) m_s( \overline{\bW}_{i,s-1}, 1, \overline{\bm F}_{is}; \bm\delta_{\underline{s}}) \left\{1 - \frac{W_{is}}{\pi_s(\overline{\bm H}_{is})}\right\} +
    \{1 - p_s(\overline{\bm W}_{i,s-1})\} m_s( \overline{\bW}_{i,s-1}, 0, \overline{\bm F}_{is}; \bm\delta_{\underline{s}}) \left\{1 - \frac{1 - W_{is}}{1 - \pi_s(\overline{\bm H}_{is})}\right\}
    }{\delta_s  p_s(\overline{\bm W}_{i,s-1}) + 1 - p_s(\overline{\bm
        W}_{i,s-1})}\\
    &\quad + \sum_{s=1}^{S_i} \frac{\delta_s \{W_{is} - p_s(\overline{\bm W}_{i,s-1})\} \{
      \tilde m_s(\overline{\bm{W}}_{i,s-1}, 1; \bm\delta_{-s}) - \tilde
      m_s(\overline{\bm{W}}_{i,s-1}, 0; \bm\delta_{-s}) \} }{ \{\delta_s
      p_s(\overline{\bm W}_{i,s-1}) + 1 - p_s(\overline{\bm
        W}_{i,s-1})\}^2 } + \left\{\prod_{s= 1}^{S_i} \omega_s(\overline{\bm{H}}_{is}, W_{is};\delta_s, p_s, \pi_s )\right\} Y_i -  \Psi(\bm\delta), 
    \end{aligned} \label{if_formula}
\end{equation}}
where for $s = 1, \ldots, S_i$, and,
{\small \begin{align*}
\overline{\bm{H}}_{is} & \ := \ \{S_i, \overline{\bW}_{i,s-1},
\{\boldf^{[t]}(\bR_{ir}, r) : 1 \leq r\leq  t \leq s \}\}, \\ 
  \omega_s(\overline{\bm{H}}_{is}, W_{is}; \delta_s, p_s, \pi_s )
  & \ := \ \frac{\delta_s W_{is}  \frac{p_s(\overline{\bm W}_{i,s-1})}{\pi_s(\overline{\bm H}_{is})} + (1 - W_{is})
    \frac{1 - p_s(\overline{\bm W}_{i,s-1})}{1 - \pi_s(\overline{\bm
    H}_{is})}}{ \delta_s p_s(\overline{\bm W}_{i,s-1}) + 1 -
    p_s(\overline{\bm W}_{i,s-1}) }, \\
    \tilde m_{s}(\overline{\bm{W}}_{i,s-1}, w_s; \bm\delta_{-s})
    & \ := \ \E \biggl[\biggl(\prod_{s' = 1}^{s-1}\omega_{s'}(\overline{\bm{H}}_{is'}, W_{is'}
    ; \delta_{s'}, p_{s'}, \pi_{s'}
    ) \biggr)   m_s(\overline{\bW}_{i,s-1}, w_s, \overline{\bm F}_{is}; \bm\delta_{\underline{s}}) \mid
      \overline{\bm W}_{i,s-1}, S_i \geq s \biggr],
\end{align*}}
with $\omega_0 = 1$.

\begin{comment}
We also recursively define,
\begin{equation}
\begin{aligned}
   & m_{S_i}(\overline{\bm{H}}_{iS_i}, W_{i S_i}; \tilde{\bm\delta}_{S_i+ 1}) 
  \ := \ 
\mu(\overline{\bm{W}}_{i S_i}, \{\boldf(\bR_{is},s)\}_{s=1}^{S_i}) \\
    & m_{s}(\overline{\bm{H}}_{is}, W_{is};\tilde{\bm\delta}_{s+1}) 
  \ := \  \\
  & \qquad \E\left[\frac{\delta_{s+1} p_{s+1}(\overline{\bm
      W}_{is})m_{s+1}(\overline{\bm{H}}_{i,s+1}, 1;\tilde{\bm\delta}_{s+2}) + \{1 -
      p_{s+1}(\overline{\bm W}_{is})\} m_{s+1}(\overline{\bm{H}}_{i,s+1}, 0;\tilde{\bm\delta}_{s+2})
      }{\delta_{s+1}  p_{s+1}(\overline{\bm W}_{is}) + 1 - p_{s+1}(\overline{\bm
      W}_{is})} \ \biggl | \  \overline{\bm{H}}_{is}, W_{is}\right], \label{outcome_model}
\end{aligned}
\end{equation}
for $s=1,2,\ldots,S_i-1$.
\end{comment}
\end{theorem}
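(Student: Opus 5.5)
To prove Theorem~\ref{theorem_if}, I will compute the pathwise (Gateaux) derivative of the identified functional from Theorem~\ref{identification}. Write $\Psi(\bm\delta)$ as a functional $\Psi(P)$ of the observed-data law $P$ of $\cD_i$. For a regular one-dimensional parametric submodel $P_\epsilon$ with score $\mathcal{S}(\cD_i)$, I will show that $\partial_\epsilon \Psi(P_\epsilon)|_{\epsilon=0} = \E[\psi(\cD_i)\,\mathcal{S}(\cD_i)]$ with $\psi$ as in \eqref{if_formula}. Throughout, I treat the deconfounders $\boldf^{[s]}$ as fixed functions of $\bR_{is}$. Theorem~\ref{identification} lets us regard these as known reductions, so that conditioning on $\overline{\bm F}_{is}$ is equivalent to conditioning on $\overline{\bR}_{is}$ for the relevant regressions; this is the content of \eqref{deconfounder_terminal_independence}--\eqref{deconfounder_independence}.

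First I condition on $S_i$ and handle each stratum separately. The outer expectation over $S_i$ then contributes the centering term $-\Psi(\bm\delta)$ together with the stratum-specific values $m_0$. The identified functional has the g-formula structure of \cite{kennedy_nonparametric_2019}, with one difference: the intervention density $dQ_s$ depends on $P$ through the marginal-history propensity $p_s(\overline{\bm w}_{s-1})$, rather than through $\pi_s$. So I will split the derivative into two pieces by the chain rule.

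Piece (a) holds $p_s$ fixed. It is then a standard longitudinal g-formula with known stochastic policy $dQ_s$ and time-varying covariates $\bR_{is}$. Its influence function is the familiar sequential doubly robust expression of \cite{rotnitzky2017multiply}, built from the weights $\omega_{s'}$ (policy density over $\pi_{s'}$) times the residuals $m_s-\E[m_s\mid\cdot]$. Telescoping over $s$ with the pseudo-outcome recursion \eqref{pseudo_outcome} produces two things: the first sum in \eqref{if_formula}, where $1-W_{is}/\pi_s$ appears from rewriting $\sum_{w}dQ_s m_s - \omega_s m_s(W_{is})$; and the terminal term $\prod_s\omega_s Y_i$. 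I will verify this by induction backward from $s=S_i$, which is the same argument as Theorem~1 of \cite{kennedy_nonparametric_2019}, with $\pi_s$ in place of the propensity there.

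Piece (b) perturbs only $p_s$. Since $p_s(\overline{\bm w}_{s-1})$ is a conditional mean of $W_{is}$ given $\overline{\bm W}_{i,s-1}$ among $S_i\ge s$, its influence function is
\[
\mathbf 1\{\overline{\bm W}_{i,s-1}=\overline{\bm w}_{s-1}\}\{W_{is}-p_s\}/\P(\overline{\bm W}_{i,s-1}=\overline{\bm w}_{s-1}, S_i\ge s).
\]
Differentiating $dQ_s$ in $p_s$ gives $\partial_{p}dQ_s(1)=\delta_s/(\delta_s p_s+1-p_s)^2=-\partial_p dQ_s(0)$. The derivative of $\Psi$ in $p_s$ is therefore the expectation, over all other components, of $\{m_s(\cdot,1)-m_s(\cdot,0)\}$ weighted by the upstream intervention densities. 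Converting upstream intervention weights into observed-data weights via the $\omega_{s'}$ ratios is exactly what yields $\tilde m_s$, a projection onto $\overline{\bm W}_{i,s-1}$. Combining this with the influence function of $p_s$ produces the second sum in \eqref{if_formula}.

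The main obstacle is piece (b) together with the bookkeeping of the multiple deconfounder indices $\boldf^{[t]}$. I must check that the change of measure from the intervention distribution back to $P$ uses $\pi_{s'}$ given $\overline{\bm H}_{is'}$, and that conditioning inside $\tilde m_s$ only on $\overline{\bm W}_{i,s-1}$ (not on $\bR$) is correct because $p_s$ depends only on treatment history. For this I will rely on the mean-independence conditions \eqref{deconfounder_independence}, so that replacing $\overline{\bR}$ by the various $\overline{\bm F}$ does not change any of the conditional expectations appearing in the derivative. Bounded relative overlap (Assumption~\ref{overlap}) ensures that all $\omega_s$ are bounded, so the interchange of differentiation and integration is justified.
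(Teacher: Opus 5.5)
Your proposal is correct and follows essentially the same route as the paper. The paper stratifies on $S_i$ and uses Lemma~3 of \citet{kennedy_nonparametric_2019} to split the influence function into the known-policy longitudinal term of \citet{rotnitzky2017multiply}, rearranged into the $1 - W_{is}/\pi_s$ form. It adds the correction for estimating the intervention, obtained by the chain rule $\partial_{p}dQ_s$ times the influence function of the saturated regression $p_s$, which yields the $\tilde m_s$ term, and then adds $\Psi_{S_i}(\bm\delta)-\Psi(\bm\delta)$ for the marginalization over $S_i$; your pieces (a) and (b) reproduce exactly this. The differences are cosmetic: you re-derive the Rotnitzky term by backward induction instead of citing it, and the indicator in your influence function of $p_s$ should also include the event $S_i \geq s$.
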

The proof is given in Appendix \ref{proof_if}. 
This influence function
is similar to the one obtained for marginal structural models in the
literature \citep{rotnitzky2017multiply}, except for the additional
term that accounts for the estimation error of the treatment
assignment probability under the stochastic intervention. This term
differs from the one that appears in
\cite{kennedy_nonparametric_2019}, as our stochastic intervention is
slightly different.

Using the influence function above, we construct the asymptotically
valid estimator and confidence interval.  Specifically, we use the
following $K$-fold cross-fitting procedure, assuming that $N$ is
divisible by $K$:

\begin{enumerate}
\item Set the incremental parameter $\bm{\delta} \in \R_{+}^{s_{\max}}$
\item Randomly partition the data into $K$ folds of equal size where
  the size of each fold is $n = N / K$. The observation index is
  denoted by $I(i) \in \{1, \ldots, K\}$, where $I(i) = k$ implies
  that the $i$th observation belongs to the $k$th fold

\item For each fold $k =1, \ldots, K$, use the observations with
  $I(i) \neq k$ as training data:

\begin{enumerate}
\item Estimate the conditional treatment probability given the past
  treatment sequence, which is denoted by
  $\hat p_s^{(-k)}(\overline{\bW}_{i,s-1}) := \widehat{\P}^{(-k)}(W_{is} = 1
  \mid \overline{\bW}_{i,s-1}, S_i \geq s)$ for each
  $s = 1, \ldots, s_{\max}$

\item For each segment $s = s_{\max}, s_{\max}-1, \ldots, 1$,
  proceeding backwards:
\begin{enumerate}
\item Simultaneously obtain the estimated deconfounder and pseudo outcome
  regression at segment $s$, denoted by 
  $\hat{\boldf}^{[s],(-k)}(\bR_{is'}, s') := {\boldf}^{[s]}(\bR_{is'}, s';
  \hat{\blambda}^{[s],(-k)})$ and $\hat m_s^{(-k)}$, by solving Equation~\ref{eq:minimization}
\item Construct
  $\widehat{\overline{\bm{H}}}_{is}^{(-k)} := \{S_i, \overline{\bW}_{i,s-1},
  \{\hat \boldf^{[t],(-k)}(\bR_{ir},r): 1 \leq r \leq t \leq s\} \}$
\item For each training observation with $S_i \geq s$, compute the
  pseudo outcome to be used at the next segment defined in Equation~\eqref{pseudo_outcome}.
\end{enumerate}

\item Estimate the propensity score given the estimated deconfounder
  history, which is denoted by
  $\hat{\pi}_s^{(-k)}(\widehat{\overline{\bm{H}}}_{is}^{(-k)}) :=
  \widehat{\P}^{(-k)}(W_{is} = 1 \mid
  \widehat{\overline{\bm{H}}}_{is}^{(-k)})$, for each
  $s = 1, \ldots, s_{\max}$, using the training observations with
  $S_i \geq s$

\item For each observation $i$ and $s =1, \ldots, S_i$, compute
\begin{align*}
    \hat\omega_{is}^{(-k)}(\widehat{\overline{\bm{H}}}_{is}^{(-k)}, W_{is};\delta_s,\hat p_s^{(-k)}, \hat \pi_s^{(-k)})
& := \frac{ \delta_s W_{is} \frac{\hat p_s^{(-k)}(\overline{\bm W}_{i,s-1})}{\hat{\pi}_s^{(-k)}(\widehat{\overline{\bm{H}}}_{is}^{(-k)})} + (1 - W_{is}) \frac{1 - \hat p_s^{(-k)}(\overline{\bm W}_{i,s-1})}{1 - \hat{\pi}_s^{(-k)}(\widehat{\overline{\bm{H}}}_{is}^{(-k)})} }{ \delta_s \hat p_s^{(-k)}(\overline{\bm W}_{i,s-1}) + 1 - \hat p_s^{(-k)}(\overline{\bm W}_{i,s-1}) },\\
\tilde\omega_{is}^{(-k)}(\widehat{\overline{\bm{H}}}_{is}^{(-k)}, W_{is}; \bm\delta_{\overline{s}} ) & :=\prod_{s'=1}^s \hat\omega_{is'}^{(-k)}(\widehat{\overline{\bm{H}}}_{is'}^{(-k)}, W_{is'};\delta_{s'}, \hat p_{s'}^{(-k)}, \hat \pi_{s'}^{(-k)} )
\end{align*}

\item Estimate the marginalized outcome regression appearing in the
  influence function, denoted
  $\widehat{\tilde m}_s^{(-k)}(\overline{\bm{W}}_{i,s-1}, w_s; \bm\delta_{-s})$ for $w_s \in \{0,1\}$, by the saturated
  regression of
  $\tilde\omega_{i,s-1}^{(-k)} \cdot
  \hat m_s^{(-k)}(
  \overline{\bW}_{i,s-1},w_s, \widehat{\overline{\bm F}}^{(-k)}_{is}; \bm\delta_{\underline{s}})$ on $\overline{\bW}_{i,s-1}$ among training
  observations with $S_i \geq s$.
\end{enumerate}

\item Compute the estimator $\widehat\Psi(\bm\delta)$ as a solution to
\begin{align}
    \frac{1}{nK}\sum_{k = 1}^K \sum_{I(i) = k} \psi(\cD_i; \bm\delta, \{
  \hat\pi_s^{(-k)}, \hat{m}_{s}^{(-k)}, \widehat{\tilde m}_{s}^{(-k)}, \hat{p}_s^{(-k)}, \hat\boldf^{[s](-k)}  \}_{s = 1}^{S_i}, \widehat{\Psi} ) = 0, \label{est_eq}
\end{align}
and compute 
\begin{align*}
    \hat\sigma^2(\bm\delta) = \frac{1}{nK} \sum_{k = 1}^K \sum_{I(i) = k} \psi(\cD_i; \bm\delta, \{\hat \pi_s^{(-k)}, \hat m_{s}^{(-k)}, \widehat{\tilde m}_{s}^{(-k)}, \hat p_s^{(-k)}, \hat\boldf^{[s](-k)}  \}_{s = 1}^{S_i}, \widehat \Psi )^2.
\end{align*}
\end{enumerate}

To derive the asymptotic properties of the proposed estimator, we
impose the following regularity conditions. 
\begin{assumption}[Regularity Conditions]\label{reg_cond}
Let $0 < c < C < \infty$ be positive constants and $r_n$ be a sequence of positive constants approaching zero as the sample size $N$ increases. Then, the following conditions hold.
\begin{enumerate}
    \item (Bounded incremental parameter) The set of incremental parameter $\Delta = [\delta_{\ell}, \delta_{u}]$ is bounded with $0 < \delta_\ell \leq \delta_u < \infty$.
    \item (Primitive Condition) For all $i \in \{1, \ldots, N\}$ and $s \in \{1, \ldots, s_{\max}\}$,
    \begin{enumerate}
        \item (Boundedness)
        \begin{align*}
            \P(|m_{s}(
            \overline{\bW}_{i,s-1},w_s, \overline{\bm F}_{is}; \bm\delta_{\underline{s}})| \leq C) & \ = \ \P(|\hat m_{s}(\overline{\bW}_{i,s-1},w_s, \widehat{\overline{\bm F}}_{is}; \bm\delta_{\underline{s}})| \leq C)\\
        &\ = \ \P(|\widehat{\tilde m}_s(\overline{\bm W}_{i,s-1}, w_s; \bm\delta_{-s})| \leq C) \\
        & \ = \ \P(|Y_i| \le C) \ = \ 1
        \end{align*}
        for $w \in \{0,1\}$ and $\bm\delta \in \Delta^{s_{\max}}$,
        and $$\P(\hat p_s(\overline{\bm{w}}_{s-1}) \leq 1) = \P(\hat\pi_s(\widehat{\overline{\bm{H}}}_{is}) \leq 1) = 1.$$
        \item (Sample Bounded Overlap)
        $$
        \P(\hat\pi_s(\widehat{\overline{\bm{H}}}_{is}) \ge c \cdot \hat p_s(\overline{\bm{W}}_{i,s-1})) \to 1 \quad \text{and} \quad \P( 1 - \hat \pi_s(\widehat{\overline{\bm{H}}}_{is}) \ge c \cdot (1 - \hat p_s(\overline{\bm{W}}_{i,s-1}))) \to 1
        $$ 
        %and $$
        %\P\biggl(\max_{1 \leq s \leq s_{\max}} \frac{W_{is} \pi_s({\overline{\bm{H}}}_{is}) + (1 - W_{is}) \{ 1 - \pi_s({\overline{\bm{H}}}_{is})\}
        %}{W_{is}\hat p_s(\overline{\bm{W}}_{i,s-1}) + (1 - W_{is}) \{ 1 - \hat p_s(\overline{\bm{W}}_{i,s-1})\}} \leq C\biggr) = 1.$$
        \item (Stable weight) $\P\biggl(\max_{1 \leq s \leq s_{\max}} \sup_{\delta_s \in \Delta} |\hat\omega_{is}(\widehat{\overline{\bm{H}}}_{is}, W_{is};\delta_s, p_s, \pi_s )| \leq C\biggr) \to 1$.
        \item (Variance stability) For
        $\sigma^2(\bm\delta) = \E[\psi(\cD_i; \bm\delta, \{ \pi_s, m_{s}, \tilde m_{s},  p_s, \boldf^{[s]} \}_{s= 1}^{S_i},  \Psi)^2]$, we have $$\inf_{\bm\delta\in\Delta^{\mathrm{s_{\max}}}}\sigma(\bm\delta)>0.$$
        %$$\P\biggl(\sup_{\delta \in \Delta} \biggl| \frac{\hat{\sigma}(\delta)}{\sigma(\delta)} - 1\biggr| < r_n\biggr) \to 1 \quad \text{and} \quad \inf_{\delta\in\Delta}\sigma(\delta)>0. $$
        %\item (Consistency) 
        %$$\P\biggl(\biggl|\biggl| \psi(\mathcal{D}_i; \delta, \{\hat\pi_s, \hat m_s, \hat p_s\}_{s = 1}^{S_i}, \hat\boldf, \widehat\Psi) - \psi(\cD_i; \delta, \{ \pi_s, m_{s}, \tilde m_{s}, p_s \}_{s = 1}^{S_i}, \boldf, \Psi  ) \biggr|\biggr| < r_n\bigg) \to 1$$
    \end{enumerate}
    \item (Nuisance Function Estimation) $\pi_s(\cdot)$ and $m_s(\cdot)$ are Lipschitz continuous at every point of its support, and satisfies:
    \begin{align*}
        & \E\left[\norm{\hat p_s(\overline{\bm{W}}_{i,s-1}) - p_s(\overline{\bm{W}}_{i,s-1})}^2 \right]^{\frac{1}{2}} \leq r_n N^{-\frac{1}{4}}, \hspace{1.0cm}
        \E\left[\norm{ 
        \hat\boldf^{[s']}(\bR_{is}, s) - \boldf^{[s']}(\bR_{is}, s) 
        }^2 \right]^{\frac{1}{2}} \leq r_n N^{-\frac{1}{4}},\\
        &\qquad\qquad\qquad\qquad\quad\quad\quad\quad \E\left[\norm{ \hat\pi_s(\widehat{\overline{\bm{H}}}_{is}) - \pi_s(\widehat{\overline{\bm{H}}}_{is})}^2 \right]^{\frac{1}{2}} \leq r_n N^{-\frac{1}{4}},\\
        &\sup_{\bm\delta \in \Delta^{s_{\max}} } \biggl\{
        \E\left[\norm{\hat{\tilde{m}}_{s}(\overline{\bW}_{i,s-1},W_{is}; \bm\delta_{-s}) - \tilde{m}_{s}(\overline{\bW}_{i,s-1},W_{is}; \bm\delta_{-s})}^2 \right]^{\frac{1}{2}}
        \\
        &\hspace{0.5in} + \E\left[\norm{\hat m_{s}(\overline{\bW}_{i,s-1},W_{is}, \widehat{\overline{\bm F}}_{is}; \bm\delta_{\underline{s}}) - m_{s}(\overline{\bW}_{i,s-1},W_{is}, \widehat{\overline{\bm F}}_{is}; \bm\delta_{\underline{s}})}^2 \right]^{\frac{1}{2}} \biggr\} \leq r_n N^{-\frac{1}{4}},
    \end{align*}
    for any $1 \leq s \leq s' \leq s_{\max}$ and $i \in \{1, \ldots, N\}$.
\end{enumerate}
\end{assumption}
These are standard regularity conditions in the literature
\citep{kennedy_nonparametric_2019}. Most importantly, the last
condition requires that each nuisance function be estimated at a rate
of $N^{-1/4}$, which can be achieved using standard feedforward neural
networks with ReLU activations and sufficient depth and width
\citep{farrell_deep_2021}. Unlike the standard static double machine
learning framework, however, our setting does not enjoy double
robustness.

Given the above assumptions, the asymptotic normality of the proposed estimator can be established as follows.
\begin{theorem}[Asymptotic Normality]\label{asymp_normal}
Under Assumptions \ref{consistency}--\ref{reg_cond}, the estimator under the stochastic intervention $\hat\Psi(\bm\delta)$ obtained by solving the estimating equations satisfies the asymptotic normality:
\begin{align*}
    \frac{\sqrt{N}(\widehat{\Psi}(\bm\delta) - \Psi(\bm\delta) )}{\sigma(\bm\delta) } \xrightarrow{d} \mathbb{G}(\bm\delta)
\end{align*}
where $\mathbb{G}(\cdot)$ is a mean-zero Gaussian process with covariance $$\E[\mathbb{G}(\bm\delta^{(1)})\mathbb{G}(\bm\delta^{(2)})] = \E\biggl[\biggl(\frac{\psi(\cD_i; \bm\delta^{(1)},  \{ \pi_s, m_{s}, p_s,\boldf^{[s]} \}_{s = 1}^{S_i}, \Psi)}{\sigma(\bm\delta^{(1)})}\biggr)\biggl(\frac{\psi(\cD_i; \bm\delta^{(2)}, \{ \pi_s, m_{s}, p_s, \boldf^{[s]} \}_{s = 1}^{S_i}, \Psi)}{\sigma(\bm\delta^{(2)})}\biggr)\biggr].$$
\end{theorem}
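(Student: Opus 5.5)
Our approach follows the Kennedy (2019) argument for incremental propensity score interventions, adapted to the cross-fitted, deconfounder-augmented influence function of Theorem~\ref{theorem_if}. Since $\psi$ is linear in $\Psi$ with coefficient $-1$, the estimating equation~\eqref{est_eq} has the closed-form solution $\widehat\Psi(\bm\delta)=\mathbb{P}_N[\varphi(\cD_i;\bm\delta,\hat\eta)]$, where $\varphi:=\psi+\Psi$ is the uncentered influence function, $\hat\eta$ collects the cross-fitted nuisances, and $\mathbb{P}_N$ is the fold-averaged empirical measure. We then use the standard three-term decomposition, fold by fold:
\begin{align*}
\widehat\Psi(\bm\delta)-\Psi(\bm\delta) &= (\mathbb{P}_N-\P)\varphi(\cdot;\bm\delta,\eta) + (\mathbb{P}_n^{k}-\P)\{\varphi(\cdot;\bm\delta,\hat\eta^{(-k)})-\varphi(\cdot;\bm\delta,\eta)\}\\
&\quad + \P\{\varphi(\cdot;\bm\delta,\hat\eta^{(-k)})-\varphi(\cdot;\bm\delta,\eta)\}.
\end{align*}
The first term, scaled by $\sqrt N/\sigma(\bm\delta)$, is the leading process. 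The second is an empirical-process remainder, and the third is the bias remainder. The goal is to show that the last two are $o_P(N^{-1/2})$ uniformly over $\bm\delta\in\Delta^{s_{\max}}$.

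\textbf{Empirical-process term.} Conditional on the training folds, $\hat\eta^{(-k)}$ is fixed, so we follow Kennedy's Lemma~2 argument and consider the class $\{\varphi(\cdot;\bm\delta,\hat\eta)-\varphi(\cdot;\bm\delta,\eta):\bm\delta\in\Delta^{s_{\max}}\}$. This class is Lipschitz in $\bm\delta$. The reason is that the $dQ_s$ weights, the $\omega_s$, and the derivative term are smooth rational functions of $\delta_s$ with denominators bounded away from zero on $\Delta$, and that $m_s,\hat m_s,\tilde m_s$ are bounded (Assumption~\ref{reg_cond}.2(a),(c)) and inherit Lipschitz dependence on $\bm\delta_{\underline s}$ through the recursion~\eqref{eq:recursive_m}. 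Since $\Delta$ is compact and $s_{\max}$ is finite, this yields a finite bracketing integral, so a maximal inequality bounds the term by a constant times $\sup_{\bm\delta}\|\varphi(\hat\eta)-\varphi(\eta)\|_{2}$. That norm is $o_P(1)$: by boundedness and the sample-overlap and stable-weight conditions, it is controlled by the $L_2$ errors of the nuisances, which vanish at rate $r_nN^{-1/4}$. Lipschitz continuity of $\pi_s$ and $m_s$ converts the errors in $\hat\boldf^{[s]}$ into errors in $\pi_s$ and $m_s$ evaluated at the true histories.

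\textbf{Bias term (main obstacle).} Here we must compute $\P\{\varphi(\hat\eta)-\varphi(\eta)\}$ explicitly. The plan is to telescope over segments $s=1,\dots,S_i$, conditioning iteratively on $\overline{\bm H}_{is}$, and to use the recursion defining $\tilde Y_{is}$ and $m_s$. This shows that the remainder is a finite sum of second-order products. The products we expect are
\[
(\hat\pi_s-\pi_s)(\hat m_s-m_s),\quad (\hat p_s-p_s)(\hat{\tilde m}_s-\tilde m_s),\quad (\hat p_s-p_s)^2,\quad (\hat p_s-p_s)(\hat\pi_{s'}-\pi_{s'}),
\]
together with cross products of the products of weights $\prod\hat\omega-\prod\omega$ with outcome errors. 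Two features make this delicate. First, $q_s$ depends on $p_s$ nonlinearly, so the first-order error is cancelled only by the $\tilde m_s$ correction term; a second-order Taylor expansion of $dQ_s$ in $p_s$ leaves $(\hat p_s-p_s)^2$. Second, the nuisances are evaluated at the estimated deconfounders $\hat\boldf$. To handle this, we add and subtract terms evaluated at $\boldf$. By Definition~\ref{def:deconfounder} and Theorem~\ref{identification}, any deconfounder yields the same functional, so the substitution error reduces, via Lipschitz continuity, to products of $\|\hat\boldf-\boldf\|$ with other nuisance errors. Because double robustness is absent, we do not try to exploit it. Instead we bound every product by Cauchy--Schwarz, using the bounded weights from Assumption~\ref{reg_cond}.2(b),(c), and each such product is $O_P(r_n^2N^{-1/2})=o_P(N^{-1/2})$. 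We expect verifying that no pure first-order $\hat\boldf$ term survives to be the hardest step. The argument for this is that the first-order $\hat\boldf$ terms are absorbed by the mean-independence condition~\eqref{deconfounder_independence}, which makes the conditional expectation invariant to the choice of deconfounder.

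\textbf{Conclusion.} With both remainders negligible uniformly in $\bm\delta$, we have $\sqrt N(\widehat\Psi-\Psi)/\sigma=\sqrt N(\mathbb{P}_N-\P)\{\psi(\cdot;\bm\delta,\eta)/\sigma(\bm\delta)\}+o_P(1)$. The class $\{\psi(\cdot;\bm\delta,\eta)/\sigma(\bm\delta)\}$ is Lipschitz in $\bm\delta$ on a compact set, bounded, and has $\sigma$ bounded away from zero (Assumption~\ref{reg_cond}.2(d)), so it is Donsker. Weak convergence to the Gaussian process $\mathbb G$ with the stated covariance then follows from the functional CLT.
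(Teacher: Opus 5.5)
Your skeleton is the paper's. It has the closed-form solution $\widehat\Psi=\mathbb{P}_N\varphi(\hat\eta)$, the split into the oracle process, a cross-fitted empirical-process term and a bias term (Lemma~\ref{lemma1}), and Donsker-ness from Lipschitz dependence on $\bm\delta$. The bias term then reduces to second-order products bounded by Cauchy--Schwarz: the $\|\hat m_s-m_s^*\|_2\,\|\hat\pi_{s'}-\pi_{s'}\|_2$ products of Lemma~\ref{lemma2} and the $\hat p$-products $R_{2,1}$--$R_{2,3}$ of Lemma~\ref{lemma4}, including the $(\hat p_s-p_s)^2$ term.

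Where you go wrong is the step you flag as hardest. Definition~\ref{def:deconfounder} and Theorem~\ref{identification} give invariance across \emph{deconfounders}. But $\hat\boldf^{[s]}$ is an estimate that in general satisfies neither \eqref{deconfounder_terminal_independence} nor \eqref{deconfounder_independence}. That invariance therefore cannot cancel the first-order $\hat\boldf-\boldf$ terms you create by adding and subtracting nuisances evaluated at $\boldf$. Mean independence gives only something narrower. Since $\widehat{\overline{\bm F}}_{is}$ is a function of $\overline{\bR}_{is}$, conditioning on it in addition to $\overline{\bm F}_{is}$ leaves the outcome regressions unchanged, so residuals weighted by functions of $\hat\boldf$ still have mean zero. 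It says nothing about the treatment side.

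The paper never isolates a first-order deconfounder term. Lemma~\ref{lemma2} applies Kennedy's Lemma~5 identity with $\hat\pi_{s'}(\widehat{\overline{\bm H}}_{is'})$ and $\hat m_s(\cdot,\widehat{\overline{\bm F}}_{is};\cdot)$ taken as the estimated nuisances. The deconfounder error therefore lives only inside the two factors of each product, where the triangle inequality and Lipschitz continuity give
\[
\|\hat\pi_s(\widehat{\overline{\bm H}}_{is})-\pi_s(\overline{\bm H}_{is})\|_2\le\|\hat\pi_s(\widehat{\overline{\bm H}}_{is})-\pi_s(\widehat{\overline{\bm H}}_{is})\|_2+L\sum_{t\le s}\sum_{r\le t}\|\hat\boldf^{[t]}(\bR_{ir},r)-\boldf^{[t]}(\bR_{ir},r)\|_2 .
\]
The analogous bound for $m_s$ is in Lemma~\ref{lemma3}. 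That lemma also shows, by backward induction, that $\|m_s^*-m_s\|_2\lesssim r_nN^{-1/4}$ for the sequential regression $m_s^*$ under the estimated intervention $\hat Q$. This is what the $r_nN^{-1/4}$ rate on $\hat\boldf^{[s']}$ and the Lipschitz conditions in Assumption~\ref{reg_cond} are for, and it is the argument you should use instead of mean independence.

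Note one further point. Writing the second factor with $\pi_{s'}(\overline{\bm H}_{is'})$ presumes $\P(W_{is'}=1\mid\overline{\bm H}_{is'},\widehat{\overline{\bm H}}_{is'})=\pi_{s'}(\overline{\bm H}_{is'})$, i.e.\ that $\hat\boldf$ carries no treatment information beyond $\boldf$. The paper's Lemma~\ref{lemma2} takes this for granted, and neither mean independence nor Lipschitz continuity supplies it. It is the condition your hardest step would genuinely need.
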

The proof is given in Appendix~\ref{proof_asymp_normal}. The proof structure closely parallels Theorem 3 of \cite{kennedy_nonparametric_2019}, with the key differences being the estimation error introduced by the deconfounder and the modified stochastic intervention.

\section{Simulation Studies}\label{sec:simulation}
We conduct simulation studies to evaluate the empirical performance of the proposed estimator. Because generating sequences of actual sentences under known data-generating mechanisms is infeasible, we instead perform numerical simulations grounded in the motivating example described in Section~\ref{sec::example}. We then apply our proposed methodology to real unstructured data in the empirical application presented in Section~\ref{sec:application}.

\subsection{Setup}
For each simulation, we consider sample sizes
\(N \in \{2000,3000,4000,5000\}\) and generate \(N\) observations for
each setting. For each observation \(i\), we independently draw the text
length \(S_i\) as
\begin{align*}
S_i
=
\min\{s_{\max},S_{\min}+\operatorname{Poisson}(\lambda)\},
\end{align*}
where \(S_{\min}=5\), \(\lambda=2.5\), and \(s_{\max}=10\). 

Before generating the observations, for each segment
\(s=1,\ldots,s_{\max}\), we draw the outcome coefficients
\begin{align*}
\bm\beta_{s}
&=
0.6\frac{\bm g_{s}}{\sqrt p}, \quad 
\bm g_{s}
\sim
\mathcal{N}(\bm 0,\bm I_p), \quad 
\lambda_s
\sim
\mathcal{N}(0,0.15^2), \quad
\kappa_s
\sim
\mathcal{N}(0,0.15^2),
\end{align*}
the treatment--confounder feedback coefficients
\begin{align*}
\bm\phi_{s}
=
0.7\frac{\bm h_{s}}{\sqrt p},
\qquad
\bm h_{s}
\sim
\mathcal{N}(\bm 0,\bm I_p),
\qquad
s=1,\ldots,s_{\max}-1,
\end{align*}
and the propensity coefficients
\begin{align*}
\bm a_{s}
=
0.375\,
\frac{\bm\beta_{s}}{\|\bm\beta_{s}\|_1},
\qquad
r_s
=
0.125\,\operatorname{sign}(g_{s}),
\qquad
g_{s}
\sim
\mathcal{N}(0,1),
\end{align*}
so that \(\|\bm a_s\|_1+|r_s|=0.5\) for every segment. All coefficients are held fixed across all
Monte Carlo replications. We also generate, for each segment, a matrix
\(\bm M_s\in\mathbb{R}^{d\times(p+1+q)}\) with orthonormal columns,
so that
\(\bm M_s^\top\bm M_s=\bm I_{p+1+q}\), where $d = 512$ is the dimensions of internal representations, $p =4$ is that of confounding features, and $q = 4$ is that of irrelevant features. Specifically, we obtain
\(\bm M_s\) by applying a QR decomposition to a
\(d\times(p+1+q)\) matrix whose entries are independently drawn from a
standard normal distribution. The matrices \(\bm M_s\) are also held
fixed across Monte Carlo replications.

We first generate the confounding features. For the first segment, we
draw
\begin{align*}
\bU_{i1}
=
\tanh(\bm\eta_{i1}),
\qquad
\bm\eta_{i1}
\sim
\mathcal{N}(\bm 0,\bm I_p).
\end{align*}
For each subsequent segment, we generate
\begin{align}
\bU_{is}
=
\tanh\left(
W_{i,s-1}
\bm\phi_{s-1}
+
\bm\eta_{is}
\right),
\qquad
\bm\eta_{is}
\sim
\mathcal{N}(\bm 0,\bm I_p),
\quad
s=2,\ldots,S_i.
\label{eq:sim_U_transition}
\end{align}

Next, for each segment, we independently draw
\(\xi_{is}\sim\operatorname{Logistic}(0,1)\) and define the binary treatment feature as
\begin{align}
W_{is}
=
\mathbf{1}\{\bm a_{s}^{\top}\bU_{is}
+
r_s W_{i,s-1}
+
\xi_{is}>0\}.
\label{eq:sim_treatment}
\end{align}
This construction ensures that the true propensity score function is Lipschitz-continuous and satisfies the bounded overlap assumption.
\begin{comment}
Consequently, the conditional treatment probability is
\begin{align}
\pi_s(\overline{\bm W}_{i,s-1},\overline{\bU}_{is})
&=
\P\left(
W_{is}=1
\mid
\overline{\bm W}_{i,s-1},
\overline{\bU}_{is},
S_i\geq s
\right)
\nonumber\\
&=
\operatorname{expit}\left(
\bm a_{s}^{\top}\bU_{is}
+
r_s W_{i,s-1}
\right).
\label{eq:sim_propensity}
\end{align}
Because each element of \(\bU_{is}\) lies between \(-1\) and \(1\) and
\(W_{i,s-1}\in\{0,1\}\), the absolute value of the linear predictor in
Equation~\eqref{eq:sim_propensity} is bounded above by
\(\|\bm a_s\|_1+|r_s|=0.5\).
Therefore,
\begin{align*}
\operatorname{expit}(-0.5)
\leq
\pi_s(\overline{\bm W}_{i,s-1},\overline{\bU}_{is})
\leq
\operatorname{expit}(0.5),
\end{align*}
which implies the bounded relative overlap condition. In addition,
the propensity score is a smooth and Lipschitz-continuous function of
the confounding features.
\end{comment}

We then construct the high-dimensional internal representation. Let
\(\bm V_{is}\in\mathbb{R}^{q}\) denote features of the representation
that affect neither treatment nor the outcome, where $\bm V_{is}
\sim \mathcal{N}(\bm 0,\bm I_q)$
independently across observations and segments. We define
\begin{align}
\bR_{is}
=
\bm M_s
\begin{bmatrix}
\bU_{is}\\
c\,W_{is}\\
\bm V_{is}
\end{bmatrix},
\label{eq:sim_representation}
\end{align}
where $c=0.05$.
\begin{comment}
This construction distributes the treatment, confounding, and
remaining features across all coordinates of the
\(d\)-dimensional representation. Since
\(\bm M_s^\top\bm M_s=\bm I_{p+1+q}\), we have
\begin{align*}
\bm M_s^\top\bR_{is}
=
\begin{bmatrix}
\bU_{is}\\
c\,T_{is}\\
\bm V_{is}
\end{bmatrix}.
\end{align*}
Thus, both the confounding feature \(\bU_{is}\) and the treatment
feature \(W_{is}=\mathbf{1}\{(\bm M_s^\top\bR_{is})_{p+1}>0\}\) are
deterministic functions of \(\bR_{is}\) for any \(c>0\). Notice that
treatment remains stochastic conditional on \(\bU_{is}\), because of
the logistic innovation \(\xi_{is}\), even though it is deterministic
conditional on the complete internal representation.
\end{comment}
Finally, we generate the bounded outcome as
\begin{align}
Y_i
=
50
+
25\tanh\left[
\frac{1}{S_i}
\left\{
\sum_{s=1}^{S_i}
\left(
\tau_s W_{is}
+
\bm\beta_{s}^{\top}\bU_{is}
+
\lambda_s W_{is} U_{is,1}
\right)
+
\sum_{s=2}^{S_i}
\kappa_s W_{i,s-1} W_{is}
\right\}
+
\epsilon_i
\right],
\label{eq:sim_outcome}
\end{align}
where \(U_{is,1}\) denotes the first coordinate of \(\bU_{is}\),
\(\epsilon_i\sim\mathcal{N}(0,0.1^2)\), and $\tau_s
=
0.6\times0.95^{s-1}$. Averaging the segment sum by \(S_i\) keeps the
argument of the \(\tanh\) transformation away from its saturation
region.

We evaluate the target estimand \(\Psi(\bm\delta)\) over the grid $\delta_s \in \{0.5,0.75,1.0,1.25,1.5,2.0\}$
for all \(s\). This data generating process ensures that the marginal treatment
assignment probability satisfies
\begin{align*}
p_s(\overline{\bm w}_{s-1})
=
\P(W_{is}=1\mid W_{i,s-1}=w_{s-1},S_i\geq s)
\end{align*}
for \(s\geq2\), while \(p_1=\P(W_{i1}=1)\). We approximate these
probabilities using an independent Monte Carlo sample of size
\(N_{\mathrm{oracle},p}=400{,}000\), and construct the oracle
stochastic intervention as defined in
Definition~\ref{def:stochastic}. Because the treatment feeds back into
the confounding features, we compute the true value of
\(\Psi(\bm\delta)\) by the g-formula: for each \(\bm\delta\), we draw
the treatment path from the intervened distribution and regenerate the
confounding features from the transition in
Equation~\eqref{eq:sim_U_transition} given the intervened treatments,
using an independent Monte Carlo sample of size
\(N_{\mathrm{truth}}=400{,}000\).

We compare the proposed estimator against this oracle benchmark. Specifically,
we implement the proposed method using two-fold sample splitting rather than
full cross-fitting due to computational constraints. Within the training fold,
we first train the representation neural network shown in
Figure~\ref{architecture}. In the simulation implementation, the feature
representation network consists of a hidden layer with width 64 followed by a
two-dimensional deconfounder representation, and the prediction head consists
of two hidden layers with widths 128 and 64. The network is trained for 100
epochs with batch size 512, learning rate $0.001$, patience 10, and
dropout rate 0.1.

We then estimate the remaining nuisance components needed for the
semiparametric estimator. The treatment-history probability
$p_s(\overline{\bm W}_{i,s-1})$ and the marginalized outcome regression
$\tilde m_s(\overline{\bm W}_{i,s-1},w;\bm\delta_{-s}))$ are estimated using
saturated models based on the full observed treatment history.
The propensity score conditional on the estimated deconfounder history is
estimated using a feedforward neural network with a single hidden layer of
width 32, learning rate $3\times10^{-4}$, batch size 256, and dropout rate 0.3. The entire estimation procedure is repeated 200 times for each
sample size.

\subsection{Results}

\begin{figure}[!t]
    \centering
    \includegraphics[width=1.0\linewidth]{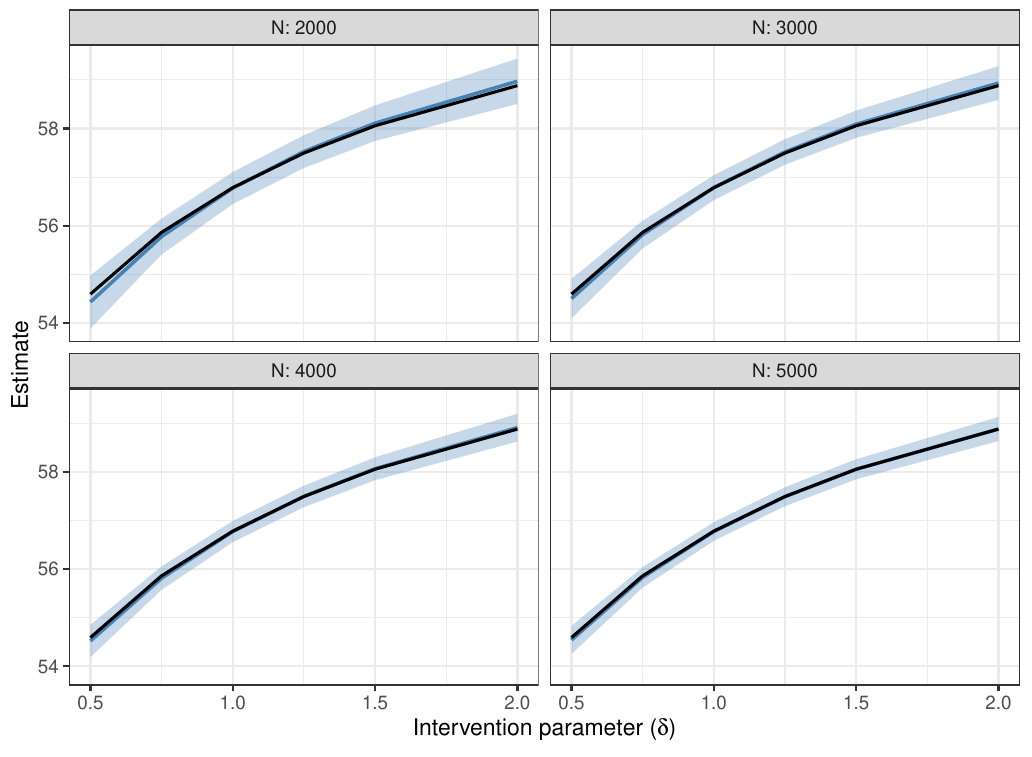}
    \caption{Average estimated potential outcomes (blue) versus their oracle values (black) across different values of the incremental parameter $\bm\delta$, based on 200 Monte Carlo iterations. The blue solid line denotes the mean estimate, and the blue shaded region represents the average confidence interval constructed from the average standard error. The horizontal axis displays the incremental parameter $\bm\delta$, which scales the treatment assignment probability, while the vertical axis reports the corresponding estimated average counterfactual outcome. The value $\delta_s = 1$ denotes the observed treatment mechanism; $\delta_s < 1$ attenuates, and $\delta_s > 1$ amplifies, the probability of treatment for each segment $s$.}
    \label{sim_results}
\end{figure}

Figure~\ref{sim_results} reports the average estimates (blue solid line) and average 95\% confidence intervals (blue shaded region) across different values of the incremental parameters and sample sizes, based on 200 Monte Carlo iterations, together with the ground truth (black). The results demonstrate that, across different sample sizes, the estimated curve closely follows the oracle estimates and becomes increasingly accurate as the sample size increases.

\begin{figure}[!t]
    \centering
    \includegraphics[width=1.0\linewidth]{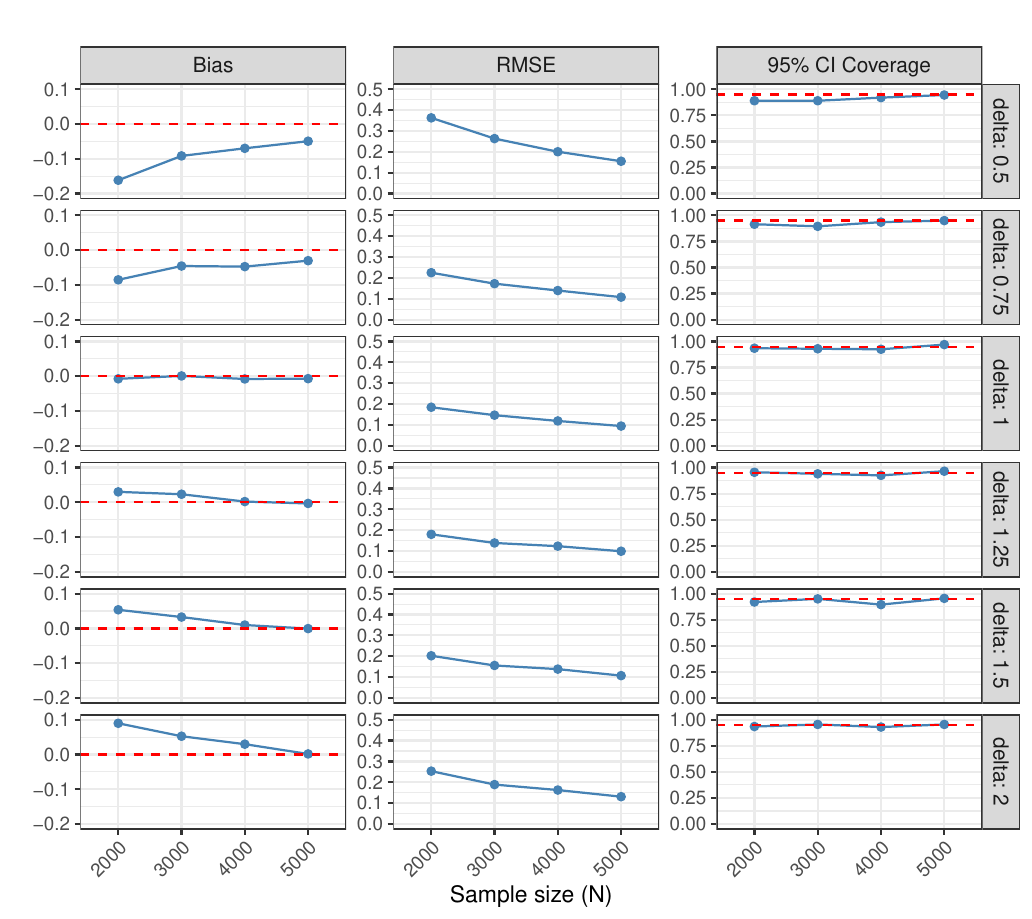}
    \caption{Performance of the estimator across different sample sizes and values of the incremental parameter $\delta$, based on 200 Monte Carlo iterations. The red dotted line in the bias plot indicates zero, while the red dotted line in the 95\% confidence interval coverage plot denotes the nominal coverage level (0.95).  }
    \label{metricN}
\end{figure}

We also compute the bias, RMSE (root mean squared error), and coverage
of 95\% confidence intervals across different values of the
incremental parameter $\delta$ and sample sizes. Figure~\ref{metricN}
illustrates the results, and the corresponding numerical results are
reported in Table~\ref{tab:mc_summary_raw} of
Appendix~\ref{app:emp_result}. We find that performance improves, on
average, as the sample size increases. Moreover, across all sample
sizes and values of $\delta$, the 95\% confidence intervals achieve
coverage close to the nominal level. 
%We note that the coverage does not exactly reach 95\% in some cases, even with large sample sizes, presumably because the treatment assignment mechanism in the data-generating process is not Lipschitz continuous. This non-Lipschitz structure is necessary to satisfy Assumption~\ref{treatment_feature} and to make the simulation design controllable. However, as a result, estimation errors in the deconfounder may propagate into the propensity score estimation, potentially leading to slight undercoverage.

\section{Empirical Application}\label{sec:application}

Finally, we apply the proposed methodology to the data from
\cite{fong_causal_2023} introduced in the
Section~\ref{sec::example}. Following the original study, we use the
description about U.S. commitments to Hong Kong as a treatment and the
public’s preference for the U.S. government to support Hong Kong
protesters as an outcome. However, unlike the original studies, we
explore whether the ordering of treatment features alters the
treatment effect. Specifically, we combine the responses from two
surveys (December 2019 with $N = 1983$ and October 2020 with
$N = 2072$) and then exclude the respondents who are assigned three
sentences because they always receive the treatment, violating the
overlap assumption (Assumption~\ref{overlap}).  This yields the final
sample size of $N = 2085$ with $S_i = 2$ for all observations.

We regenerate each text segment using the large language
models. Specifically, we use LLaMa3.1 with 8 billion parameters and
ask LLMs to exactly reproduce the input texts.  We then extract the
internal representations of the last token, which is the pooling
approach widely used in the literature
(e.g.,\citealt{neelakantan2022text, ma2024fine,
  jiang2023scaling}). The input dimension size for the internal
representation is 4096 for each text segment.

Once we extract the internal representations, we implement the
proposed dynamic GPI methodology. This time, we estimate
\(\widehat{\Psi}(\delta_1, 1)\) and \(\widehat{\Psi}(1, \delta_2)\),
corresponding to interventions applied only at time~1 and time~2,
respectively. In each case, $\delta_s$ varies over a common grid
\(\delta_s \in [10^{-4}, 10^{4}]\). Given the limited sample size, we
employ 10-fold cross-fitting to maximize the amount of data used for
nuisance function estimation. In each fold, we first train the neural
network shown in Figure~\ref{architecture}, which consists of two
hidden layers with width 1024 and a prediction head with a single
hidden layer of width 200. We use the same hyperparameters as in the
simulation studies and estimate the three nuisance functions: the
treatment history model, the propensity score model \(\hat{\pi}_1\),
and the outcome model \(\hat{m}_s\) for each \(s \in \{1,2\}\). For
nuisance estimation, we use saturated models for the observed
treatment history probabilities and the marginalized regression term
under the stochastic intervention. The propensity score and outcome
models are estimated using neural networks with the same
hyperparameters as in the simulation section.

\begin{figure}[!t]
    \centering
    \includegraphics[width=1.0\linewidth]{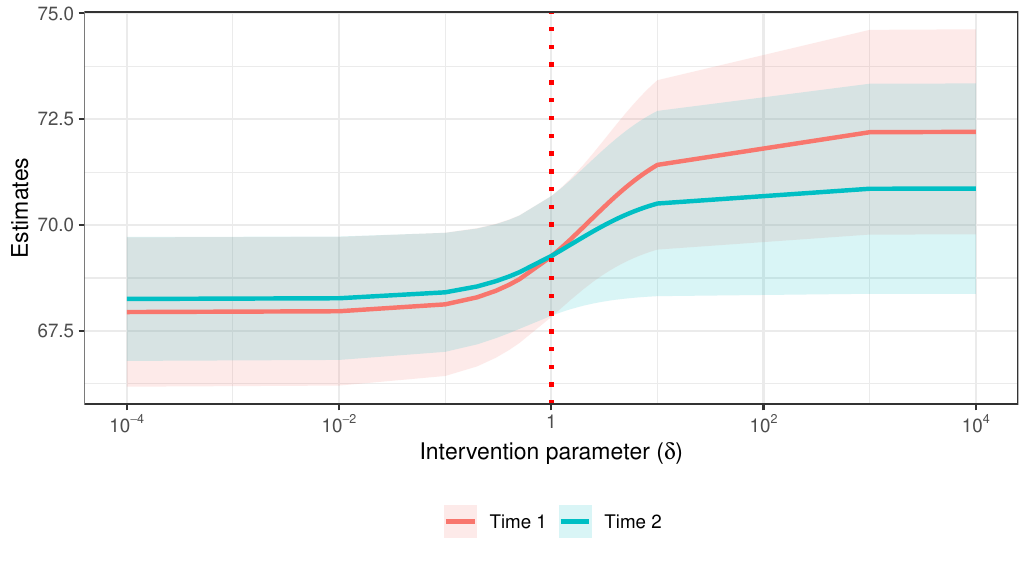}
    \caption{Estimated average potential outcomes under stochastic
      interventions that perturb the treatment assignment probability
      at time 1 (red) and time 2 (blue) separately. The horizontal
      axis displays the incremental parameter $\delta$, which scales
      the treatment assignment probability, while the vertical axis
      reports the corresponding estimated average potential
      outcome. The value $\delta = 1$ denotes the observed treatment
      mechanism; $\delta < 1$ attenuates, and $\delta > 1$ amplifies,
      the probability of treatment.}
    \label{fig:res_emp}
\end{figure}

The result is presented in Figure~\ref{fig:res_emp} where the
horizontal axis represents $\delta$ and the estimated average
potential outcome is the solid line with the shaded area indicating
the 95\% pointwise confidence interval. We find that while
interventions at the first sentence meaningfully shift the average
counterfactual response, whereas the effect is substantially
attenuated when applied to the second sentence. This aligns with the
existing evidence of the survey research that the ordering of the
stimulus matters \citep[e.g.,][]{krosnick1987evaluation,
  auspurg2017first}. Therefore, while this application confirms the
finding from the existing literature, it also highlights the efficacy
of the proposed methodology, enabling researchers to explore the
heterogeneity of treatment effects in dynamic unstructured data.

\section{Conclusion}\label{sec:conclusion}

In this paper, we develop a framework for dynamic causal inference
with unstructured objects by treating text, audio, and video as
ordered sequences and defining causal effects with respect to the
timing and placement of treatment features within those sequences. By
leveraging internal representations from generative AI models, we show
that it is possible to construct low-dimensional deconfounders that
summarize evolving contextual information.  This allows us to identify
and estimate causal effects under dynamic stochastic interventions.

We further propose a neural network–based estimation strategy combined
with semiparametric inference, and demonstrate through simulations
that the resulting estimator performs well in finite samples. Our
empirical application to the Hong Kong experiment illustrates the
substantive value of this perspective, showing that the effect of a
treatment feature depends on when it appears in the text.

While the application in this paper has focused on causal inference
with texts, the proposed methodology can be extended to other
modalities such as video. For video data, it may be necessary to
consider the appropriate encoding architecture, as we must capture
multimodal information, including both visual and audio
signals. Another promising extension is the settings with sequential
outcomes. In the current framework, we consider a single outcome per
respondent; however, for modalities such as audio and video, it is
often possible to record responses continuously over time. This would
substantially increase the number of effective observations and
potentially allow researchers to leverage greater statistical power.

\newpage 
\bibliography{references, my}

\newpage

\appendix
\setcounter{equation}{0}
\setcounter{figure}{0}
\setcounter{table}{0}
\setcounter{section}{0}
\renewcommand {\theequation} {S\arabic{equation}}
\renewcommand {\thefigure} {S\arabic{figure}}
\renewcommand {\thetable} {S\arabic{table}}
\renewcommand {\thesection} {S\arabic{section}}

\begin{center}
  {\LARGE \bf Appendix} 
\end{center}

\section{Theoretical Proofs}

\subsection{Proof of Proposition~\ref{prop:seq_independent_support}}\label{proof:seq_independent_support}

%comment: cite PNAS paper once it is accepted.
\begin{proof}
We prove it by contradiction. Suppose, for contradiction, that
\begin{align*}
\mathcal S_{WU,s}^{(s')}
\neq
\mathcal W_s \times \cU_s^{(s')}.
\end{align*}
where $\mathcal{W}_s = \{0,1\}$ is the support of treatment feature at the segment $s$ and $ \mathcal S_{WU,s}^{(s')}
=
\{(g_W(\bx),{\bg}_{\bU_s}^{(s')}(\bx)):\bx\in\cX_s\}$ is the joint support of treatment feature and confounding feature.
This means that there exists a pair $(w^\dagger,\bu^\dagger)
\in
\mathcal W_s \times \cU_s^{(s')}
$
such that
$
(w^\dagger,\bu^\dagger)
\notin
\mathcal S_{WU,s}^{(s')}
$.
Because $\bu^\dagger \in \cU_s^{(s')}$, there exists some $\bx_0 \in \cX_s$ satisfying $\bg_{\bU_s}^{(s')}(\bx_0)=\bu^\dagger$ and $w_0 := g_W(\bx_0) \neq w^\dagger$. Now, we consider $\bg'$ and $\tilde{\bg}_{\bU_s}^{(s')}$ as $\bg'(\bx)=\bg_{\bU_s}^{(s')}(\bx)$
and
\begin{align*}
     \tilde{\bg}_{\bU_s}^{(s')}(w,\bu) = 
     \begin{cases}
     \bu \quad \text{if } (w,\bu) \neq (w^\dagger, \bu^\dagger)\\
     \bm c \quad \text{if } (w,\bu) = (w^\dagger,\bu^\dagger) 
     \end{cases}
\end{align*}
where $\bm c \neq \bu^\dagger$. This is well-defined because $(w^\dagger,\bu^\dagger)\notin\mathcal S_{WU,s}^{(s')}$ (so the value of $\tilde{\bg}_{\bU_s}^{(s')}$ at this point is unconstrained by the requirement that it reproduce the observed data; the independence of support means that we can change $W_{is}$ and $\bU_{is}^{(s')}$ separately). Now, by definition of $\tilde{\bg}_{\bU_s}^{(s')}$,
\[
\tilde{\bg}_{\bU_s}^{(s')}(w_0, \bg'(\bx_0))
=
\tilde{\bg}_{\bU_s}^{(s')}(w_0,\bu^\dagger)
=
\bu^\dagger,
\]
while
\[
\tilde{\bg}_{\bU_s}^{(s')}(w^\dagger,\bg'(\bx_0))
=
\tilde{\bg}_{\bU_s}^{(s')}(w^\dagger,\bu^\dagger) = \bm c
\neq
\bu^\dagger.
\]
Thus, $\tilde{\bg}_{\bU_s}^{(s')}(w,\bg'(\bx_0))$ changes with $w$ while holding $\bg'(\bx_0)$ fixed. Therefore, $\bU_{is}^{(s')}$ can be represented as a nontrivial deterministic function of
$W_{is}$ and another feature $\bg'(\bx)$, contradicting the separability assumption. Hence no such unattainable pair $(w^\dagger,\bu^\dagger)$ can exist under separability.
Therefore, $\mathcal S_{WU,s}^{(s')}
=
\mathcal W_s \times \cU_s^{(s')}
$.
\end{proof}

\subsection{Proof of Theorem~\ref{identification} }\label{proof_identification}

First, the estimand defined in Equation~\eqref{target_quantity} can be
written as $\Psi(\bm\delta) = \E[\Psi_{S_i}(\bm\delta)]$ where,
\begin{align*}
&\Psi_{s}(\bm\delta) := 
    \E\left[\int_{\mathcal{W}^{s} }
                 Y_i(\overline{\bW}_{is} = \bar{\bm w}_{s}, \overline{\bU}_{is}) \prod_{s^\prime= 1}^{s} d Q_{s^\prime}(w_{s^\prime};
                 \overline{\bm{w}}_{s^\prime-1}, \delta_{s^\prime}) \ \biggl | \ S_i =
                 s \right], %\\
    %&= \int_{\cR^{S_i}} \int_{\mathcal{W}^{S_i} } \E[Y_i(W_{i1} = w_1, \bU_{i1}, \ldots, W_{iS_i} = w_{S_i}, \bU_{iS_i}) \mid S_i] \prod_{s= 1}^{S_i} d Q_s(w_s; \overline{\bm{W}}_{i,s-1}, \overline{\bR}_{i,s}, \delta)dF(\bR_{is} \mid \overline{\bR}_{i,s-1}).
\end{align*}
Thus, it suffices to identify $\Psi_{s}(\bm\delta)$.
Assumptions~\ref{randomization}--\ref{separability} imply the
following sequential ignorability given the confounding features,
\begin{align}
    Y_{i}(\bar{\bm w}_{s}, \overline{\bU}_{is}) \ \indep \ {W}_{is} \mid \overline{\bW}_{i,s-1}, \overline{\bU}_{is}, S_i = s,\label{ignorability}
\end{align}
for any respondent $i$ and segment $s$. Moreover, by Assumptions~\ref{treatment_feature}, \ref{confounding_feature} and \ref{det_dec},
\begin{align}
    \bU_{i,s'+1} \ \indep \ \overline{\bR}_{is^\prime} \mid \overline{\bW}_{is^\prime}, \overline{\bU}_{is^\prime}, S_i = s  \label{sufficiency_U_for_R}
\end{align}
for any $s \in \mathcal{S}$ and $s' < s$.

As the support $\cY$ is bounded, by Fubini theorem, 
\begin{align*}
    \Psi_{s}(\bm\delta) = 
    \int_{\mathcal{W}^{s} } \E\left[
                 Y_i(\overline{\bW}_{is} = \bar{\bm w}_{s}, \overline{\bU}_{is}) \ \biggl | \ S_i =
                 s \right] \prod_{s^\prime= 1}^{s} d Q_{s^\prime}(w_{s^\prime};
                 \overline{\bm{w}}_{s^\prime-1}, \delta_{s'}).
\end{align*}
Therefore, by appealing to the
standard g-formula, we obtain,
\begin{align*}
  & \Psi_{s}(\bm\delta) \\
  % & := \E\left[ \int_{\mathcal{W}^{S_i} }
                       %Y_i(W_{i1} = w_1, \bU_{i1}, \ldots,
                       %W_{iS_i} = w_{S_i}, \bU_{iS_i})  \prod_{s^\prime=
                       %1}^{S_i} d Q_{s^\prime}(w_{s^\prime}; \overline{\bm{W}}_{i,s^\prime-1},
                       %\delta) \ \biggl | \ S_i = s \right]\\
   = \ & \int_{\mathcal{W}^{s}} \E\left[\E\left\{   Y_i(\bar{\bm w}_{s}, \overline{\bU}_{is})   \ \biggl
      | \ \bU_{i1}, S_i = s\right\} \ \biggl | \  S_i = s \right] \prod_{s^\prime=
      1}^{s} d Q_{s^\prime}(w_{s^\prime}; \overline{\bm{w}}_{s^\prime-1}, \delta_{s'}) \\
   = \ & \int_{\mathcal{W}^{s}} \E\left[\E\left\{   Y_i(\bar{\bm w}_{s}, \overline{\bU}_{is})
          \ \biggl
      | \ W_{i1} = w_1, \bU_{i1}, S_i = s\right\}\ \biggl | \  S_i = s\right] \prod_{s^\prime = 1}^{s} d Q_{s^\prime}(w_{s^\prime}; \overline{\bm{w}}_{s^\prime-1}, \delta_{s'})   \\
  = \ & \int_{\mathcal{W}^{s}} \E\biggl[\E\biggl\{ \E\biggl(   Y_i(\bar{\bm w}_{s}, \overline{\bU}_{is})  \ \biggl | \  W_{i1} = w_1,
      \overline{\bU}_{i2}, S_i  = s\biggr)  \ \biggl | \ W_{i1} = w_1, \bU_{i1},
        S_i = s \biggr\} \ \biggl | \  S_i = s\biggl] \\
  & \qquad \times \prod_{s^\prime= 1}^{s} d Q_{s^\prime}(w_{s^\prime};
      \overline{\bm{w}}_{s^\prime-1}, \delta_{s'}) \\
   = \  & \int_{\mathcal{W}^{s}} \E\biggl[\E\biggl\{ \E\biggl(    Y_i(\bar{\bm w}_{s}, \overline{\bU}_{is})  \ \biggl | \
      \overline{\bW}_{i2} = \bar{\bm w}_{2}, \overline{\bU}_{i2}, S_i = s
      \biggr) \  \biggr | \ W_{i1} = w_1, \bU_{i1}, S_i = s\biggr\}
          \ \biggr | \  S_i = s\biggr] \\
  & \qquad \times \prod_{s^\prime= 1}^{s} d Q_{s^\prime}(w_{s^\prime};
      \overline{\bm{w}}_{s^\prime-1}, \delta_{s'}),
\end{align*}
where the first and third equalities are due to the law of iterated
expectation and the second and fourth equalities follow from
Equation~\eqref{ignorability}. Under
Assumptions~\ref{confounding_feature}~and~\ref{det_dec}, we can write
$\bU_{is} = \boldf^*_{s}(\bR_{is})$ with some deterministic function
$\boldf^*_{s}$. By repeating this operation up to segment $S_i = s$,
we obtain
\begin{align*}
  & \Psi_{s}(\bm\delta) \\
   = \ &  \int_{\mathcal{W}^{s}}  \int_{\cU^{s}} \E[
         Y_i(\bar{\bm w}_{s}, \overline{\bU}_{is})  \mid
         \overline{\bm{W}}_{is} = \bar{\bm w}_{s}, \overline{\bU}_{is}, S_i = s]\\
    &\qquad\qquad \times \prod_{s^\prime= 1}^{s}  dF(\bU_{is^\prime} \mid
      \overline{\bm{W}}_{i,s^\prime - 1} = \overline{\bm{w}}_{s^\prime-1}, \overline{\bU}_{i,s^\prime-1}, S_i = s) d Q_{s^\prime}(w_{s^\prime};
      \overline{\bm{w}}_{s^\prime-1}, \delta_{s'})  \\
  = \  &  \int_{\mathcal{W}^{s}} \int_{\cU^{s}} \E[  Y_i  \mid \overline{\bm{W}}_{is} = \bar{\bm w}_{s}, \overline{\bU}_{is}, S_i=s]\\
  &\qquad\qquad \times \prod_{s^\prime= 1}^{s}  dF(\bU_{is^\prime} \mid
      \overline{\bm{W}}_{i,s^\prime - 1} = \overline{\bm{w}}_{s^\prime-1}, \overline{\bU}_{i,s^\prime-1}, S_i = s) d Q_{s^\prime}(w_{s^\prime};
      \overline{\bm{w}}_{s^\prime-1}, \delta_{s'})\\
   = \ &   \int_{\mathcal{W}^{s}} \int_{\cR^{s}} \E[  Y_i
         \mid \overline{\bm{W}}_{is} = \bar{\bm w}_{s},
         \boldf^*_{1}(\bR_{i1}),\ldots,\boldf^*_{s}(\bR_{is}),
         S_i = s] \\
    &\qquad \qquad\times \prod_{s^\prime = 1}^{s}  d Q_{s^\prime}(w_{s^\prime}; \overline{\bm{w}}_{s^\prime-1}, \delta_{s'})dF(\bR_{is^\prime} \mid \overline{\bm{W}}_{i,s^\prime-1} = \bar{\bm w}_{s^\prime-1}, \boldf_{1}^*(\bR_{i1}),\ldots,\boldf_{s'-1}^*(\bR_{i,s'-1}), S_i=s),
\end{align*}
where the second equality is due to Assumption~\ref{consistency}.

Finally, we show that any deconfounder $\boldf_s(\bR_{is})$
satisfying the mean independence relation in Equations~\ref{deconfounder_terminal_independence} and \ref{deconfounder_independence} lead to the same identification formula. Suppose there is another
function $\boldf_{s}: \mathcal{R} \rightarrow \mathcal{Q}$ which
satisfies the above mean independence relation, is separable from the
treatment feature, and satisfies the bounded relative overlap. Then, the inner expectation is written as
\begin{align*}
&\E[Y_i \mid \overline{\bW}_{is}, \boldf_1(\bR_{i1}), \ldots,
\boldf_{s}(\bR_{is}), S_i = s]\\
= \ & \E[Y_i \mid \overline{\bW}_{is},
\boldf_{1}(\bR_{i1}),\ldots,\boldf_{s}(\bR_{is}),
\overline{\bR}_{is}, S_i = s]\\
= \ & \E[Y_i \mid \overline{\bW}_{is},
\boldf_{1}^*(\bR_{i1}),\ldots,\boldf_{s}^*(\bR_{is}),
\overline{\bR}_{is}, S_i = s]\\
= \ & \E[Y_i \mid \overline{\bW}_{is},
\boldf_{1}^*(\bR_{i1}),\ldots,\boldf_{s}^*(\bR_{is}), S_i = s]
\end{align*}
where the last equality follows from $Y_i \ \indep \ \overline{\bR}_{is} \mid  \overline{\bW}_{is}, \overline{\bU}_{is}, S_i = s$.  This implies that the inner expectation remains identical. 

Similarly,
\begin{align*}
&\int_{\mathcal{R}} \int_{\mathcal{W}} \E[Y_i \mid \overline{\bW}_{is}= \bar{\bm w}_{s}, \tilde\boldf_1(\bR_{i1}), \ldots,
\tilde\boldf_{s}(\bR_{is}), S_i = s] d Q_{s}(w_{s}; \overline{\bm{w}}_{s-1}, \delta_{s}) \\
&\qquad \times dF(\bR_{is} \mid \overline{\bm{W}}_{i,s-1} = \bar{\bm w}_{s-1}, \boldf_1(\bR_{i1}), \ldots,
\boldf_{s-1}(\bR_{i,s-1}), S_i=s)\\
= \ &  \E\biggl[ \int_{\mathcal{W}}  \E[Y_i \mid \overline{\bW}_{is}= \bar{\bm w}_{s}, \tilde\boldf_1(\bR_{i1}), \ldots,
\tilde\boldf_{s}(\bR_{is}), S_i = s] d Q_{s}(w_{s}; \overline{\bm{w}}_{s-1}, \delta_{s}) \\
& \qquad \qquad \biggr | \ \overline{\bm{W}}_{i,s-1} = \bar{\bm w}_{s-1}, \boldf_1(\bR_{i1}), \ldots,
\boldf_{s-1}(\bR_{i,s-1}), S_i=s  \biggr]\\
= \ & \E\biggl[ \int_{\mathcal{W}}  \E[Y_i \mid \overline{\bW}_{is}= \bar{\bm w}_{s}, \boldf_1^\ast(\bR_{i1}), \ldots,
\boldf_{s}^\ast(\bR_{is}), S_i = s] d Q_{s}(w_{s}; \overline{\bm{w}}_{s-1}, \delta_{s}) \\
&\qquad \qquad \biggr | \ \overline{\bm{W}}_{i,s-1} = \bar{\bm w}_{s-1}, \boldf_1(\bR_{i1}), \ldots,
\boldf_{s-1}(\bR_{i,s-1}), S_i=s  \biggr]\\
= \ & \E\biggl[ \int_{\mathcal{W}}  \E[Y_i \mid \overline{\bW}_{is}= \bar{\bm w}_{s}, \boldf_1^\ast(\bR_{i1}), \ldots,
\boldf_{s}^\ast(\bR_{is}), S_i = s] d Q_{s}(w_{s}; \overline{\bm{w}}_{s-1}, \delta_{s}) \\
&\qquad \qquad \biggr | \ \overline{\bm{W}}_{i,s-1} = \bar{\bm w}_{s-1}, \boldf_1^*(\bR_{i1}), \ldots,
\boldf_{s-1}^*(\bR_{i,s-1}), S_i=s  \biggr]
\end{align*}
where the first equality is due to rewriting the integral with the expectation operator, the second equality is proved above, and the last equality is due to the following derivation, which is similar to the one used above,
\begin{align*}
&\E\biggl[ \int_{\mathcal{W}}  \E[Y_i \mid \overline{\bW}_{is}= \bar{\bm w}_{s}, \boldf_1^\ast(\bR_{i1}), \ldots,
\boldf_{s}^\ast(\bR_{is}), S_i = s] d Q_{s}(w_{s}; \overline{\bW}_{i,s-1}, \delta_{s})\\
& \qquad \qquad \biggr | \ \overline{\bm{W}}_{i,s-1}, \boldf_1(\bR_{i1}), \ldots,
\boldf_{s-1}(\bR_{i,s-1}), S_i=s  \biggr]\\
= \ & \E\biggl[ \int_{\mathcal{W}}  \E[Y_i \mid \overline{\bW}_{is}= \bar{\bm w}_{s}, \boldf_1^\ast(\bR_{i1}), \ldots,
\boldf_{s}^\ast(\bR_{is}), S_i = s] d Q_{s}(w_{s}; \overline{\bW}_{i,s-1}, \delta_{s}) \\
& \qquad \qquad \biggr | \ \overline{\bm{W}}_{i,s-1}, \boldf_1(\bR_{i1}), \ldots,
\boldf_{s-1}(\bR_{i,s-1}), \overline{\bR}_{i,s-1}, S_i=s  \biggr]\\
= \ & \E\biggl[ \int_{\mathcal{W}}  \E[Y_i \mid \overline{\bW}_{is}= \bar{\bm w}_{s}, \boldf_1^\ast(\bR_{i1}), \ldots,
\boldf_{s}^\ast(\bR_{is}), S_i = s] d Q_{s}(w_{s}; \overline{\bW}_{i,s-1}, \delta_{s}) \\
&\qquad \qquad \biggr | \ \overline{\bm{W}}_{i,s-1}, \boldf_1^*(\bR_{i1}), \ldots,
\boldf_{s-1}^*(\bR_{i,s-1}), \overline{\bR}_{i,s-1}, S_i=s  \biggr]\\
= \ &  \int_{\mathcal{W}} \E\biggl[ \E[Y_i \mid \overline{\bW}_{is}= \bar{\bm w}_{s}, \boldf_1^\ast(\bR_{i1}), \ldots,
\boldf_{s}^\ast(\bR_{is}), S_i = s] \\
&\qquad \qquad \biggr | \ \overline{\bm{W}}_{i,s-1}, \boldf_1^*(\bR_{i1}), \ldots,
\boldf_{s-1}^*(\bR_{i,s-1}), \overline{\bR}_{i,s-1}, S_i=s  \biggr] d Q_{s}(w_{s}; \overline{\bW}_{i,s-1}, \delta_{s})\\
= \ & \int_{\mathcal{W}} \E\biggl[ \E[Y_i \mid \overline{\bW}_{is}= \bar{\bm w}_{s}, \boldf_1^\ast(\bR_{i1}), \ldots,
\boldf_{s}^\ast(\bR_{is}), S_i = s] \\
&\qquad \qquad \biggr | \ \overline{\bm{W}}_{i,s-1}, \boldf_1^*(\bR_{i1}), \ldots,
\boldf_{s-1}^*(\bR_{i,s-1}), S_i=s  \biggr] d Q_{s}(w_{s}; \overline{\bW}_{i,s-1}, \delta_{s}).
\end{align*}
where we use Equation~\eqref{deconfounder_independence} for the first equality, Fubini's theorem (because $Y_i$ is bounded) for the third equality, and Equation~\eqref{sufficiency_U_for_R} for the last equality.

By recursively applying the mean independence, we can finally show that
\begin{align*}
&\Psi_{s}(\bm\delta)\\
 = \ & \int_{\mathcal{W}^{s}} \int_{\cR^{s}} \E[  Y_i
         \mid \overline{\bm{W}}_{is} = \bar{\bm w}_{s},
         \boldf^*_{1}(\bR_{i1}),\ldots,\boldf^*_{s}(\bR_{is}),
         S_i = s] \\
    &\qquad \qquad \times \prod_{s^\prime = 1}^{s}  d Q_{s^\prime}(w_{s^\prime}; \overline{\bm{w}}_{s^\prime-1}, \delta_{s'}) dF(\bR_{is^\prime} \mid \overline{\bm{W}}_{i,s^\prime-1} = \bar{\bm w}_{s^\prime-1}, \boldf_{1}^*(\bR_{i1}),\ldots,\boldf_{s'-1}^*(\bR_{i,s'-1}), S_i=s)\\
= \ &  \int_{\mathcal{W}^{s}} \int_{\cR^{s}} \E[  Y_i
         \mid \overline{\bm{W}}_{is} = \bar{\bm w}_{s},
         \boldf_{1}^{[s]}(\bR_{i1}),\ldots,\boldf_{s}^{[s]}(\bR_{is}),
         S_i = s] \\
    &\qquad \qquad\times \prod_{s^\prime = 1}^{s}  d Q_{s^\prime}(w_{s^\prime}; \overline{\bm{w}}_{s^\prime-1}, \delta_{s'}) dF(\bR_{is^\prime} \mid \overline{\bm{W}}_{i,s^\prime-1} = \bar{\bm w}_{s^\prime-1}, \boldf_{1}^{[s'-1]}(\bR_{i1}),\ldots,\boldf_{s'-1}^{[s'-1]}(\bR_{i,s'-1}), S_i=s)
\end{align*}
for any set of deconfounder function $\boldf_s^{[s']}$ that satisfies the mean independence, is separable from the
treatment feature, and satisfies the bounded relative overlap.

\qed

\subsection{Proof of Theorem \ref{theorem_if} }\label{proof_if}

We first derive the influence function for $\Psi_{S_i}(\bm\delta)$ while
assuming that $S_i$ is fixed.  By Lemma~3 of
\citet{kennedy_nonparametric_2019}, the influence function for the
estimated stochastic intervention under the longitudinal setting is
given by
\begin{equation}
    \begin{aligned}
      &   \psi_{S_i}(\cD_i; \bm\delta, \{ \pi_s, m_{s}, \tilde m_{s},  p_s, \boldf^{[s]} \}_{s = 1}^{S_i}, \Psi) \\
= \  & 
\tilde{\psi}^*_{S_i}(\cD_i; \bm\delta, \{ \pi_s, m_{s}, p_s, \boldf^{[s]} \}_{s = 1}^{S_i}, \Psi_{S_i} )\\
&\ +
\sum_{s=1}^{S_i}
\sum_{w\in\{0,1\}}
\
\phi(\overline{\bm W}_{i,s-1}, W_{is}; w, \delta_s) \\
&\hspace{1in} \times
\E\Biggl[
\Biggl(
\prod_{s^\prime=1}^{s-1}
\omega_{s^\prime}(\overline{\bm{H}}_{is^\prime}, W_{is^\prime}; \delta_{s^\prime}, p_{s^\prime}, \pi_{s^\prime}  )
\Biggr)
m_s(\overline{\bW}_{i,s-1}, w, \overline{\bm F}_{is}; \bm\delta_{\underline{s}}) \mid \overline{\bm W}_{i,s-1}, S_i \geq s\Biggr], \label{kennedy_lemma3}
    \end{aligned}
\end{equation}
where $\tilde{\psi}^*_{S_i}$ is the standard longitudinal influence
function under a known stochastic intervention that does not depend on
the data distribution (the exact expression will be given below).
Note that $\phi$ (the exact expression will again be given below) is
part of the influence function of the stochastic intervention, which
is given by
$\frac{\mathbbm{1}\{\overline{\bm W}_{i,s-1} = \overline{\bm
    w}_{s-1}, S_i \geq s\}}{\P(\overline{\bm W}_{i,s-1} = \overline{\bm
    w}_{s-1}, S_i \geq s)} \phi(\overline{\bm W}_{i,s-1}, W_{is}; w, \delta_s)$.

\begin{comment}
\begin{align*}
    \omega_s
    & := \frac{
W_{is} q_s(\overline{\bm W}_{i,s-1}) +
(1 - W_{is}) (1- q_s(\overline{\bm W}_{i,s-1}))
}{
W_{is} \pi_s(\overline{\bm H}_{is}) +
(1 - W_{is}) (1- \pi_s(\overline{\bm H}_{is}))
}\\
&= \frac{ \delta W_{is} \frac{p_s(\overline{\bm W}_{i,s-1})}{\pi_s(\overline{\bm H}_{is})} + (1 - W_{is}) \frac{1 - p_s(\overline{\bm W}_{i,s-1})}{1 - \pi_s(\overline{\bm H}_{is})} }{ \delta p_s(\overline{\bm W}_{i,s-1}) + (1 - p_s(\overline{\bm W}_{i,s-1})) } 
, 
\end{align*}
with
$\pi_s(\overline{\bm H}_{is}) := \P(W_{is} = 1 \mid \overline{\bm
  H}_{is})$ denoting the propensity score.
\end{comment}

From \cite{rotnitzky2017multiply}, the closed-form expression of
$\tilde{\psi}^*_{S_i}$ is given by
\begin{align*}
    &\tilde{\psi}^*_{S_i}(\cD_i; \bm\delta, \{ \pi_s, m_{s}, p_s, \boldf^{[s]} \}_{s = 1}^{S_i}, \Psi_{S_i} )\\
   = \ & \sum_{s= 1}^{S_i} \Biggl[   \frac{\delta_s  p_s(\overline{\bm W}_{i,s-1}) m_s(\overline{\bW}_{i,s-1}, 1, \overline{\bm F}_{is}; \bm\delta_{\underline{s}}) + 
    \{1 - p_s(\overline{\bm W}_{i,s-1})\} m_s(\overline{\bW}_{i,s-1}, 0, \overline{\bm F}_{is}; \bm\delta_{\underline{s}})
    }{\delta_s  p_s(\overline{\bm W}_{i,s-1}) + \{1 - p_s(\overline{\bm W}_{i,s-1})\}}\\
&  \hspace{.25in} - m_s(\overline{\bW}_{i,s-1}, W_{is}, \overline{\bm F}_{is}; \bm\delta_{\underline{s}}) \times \frac{
\delta_s W_{is}  \frac{p_s(\overline{\bm W}_{i,s-1})}{\pi_s(\overline{\bm H}_{is})} + (1 - W_{is}) \frac{1 - p_s(\overline{\bm W}_{i,s-1})}{1 - \pi_s(\overline{\bm H}_{is})} }{ \delta_s p_s(\overline{\bm W}_{i,s-1}) + \{1 - p_s(\overline{\bm W}_{i,s-1})\} }  \Biggr]  \prod_{s^\prime = 1}^{s-1} \omega_{s^\prime}(\overline{\bm{H}}_{is^\prime}, W_{is^\prime};\delta_{s^\prime}, p_{s^\prime}, \pi_{s^\prime} )\\
&\hspace{0.25in} + \prod_{s= 1}^{S_i} \omega_s(\overline{\bm{H}}_{is}, W_{is};\delta_s, p_{s}, \pi_{s} )  Y_i - \Psi_{S_i}(\bm\delta)\\
= \ & \sum_{s= 1}^{S_i} \Biggl[   \frac{\delta_s  p_s(\overline{\bm W}_{i,s-1}) m_s(\overline{\bW}_{i,s-1}, 1, \overline{\bm F}_{is}; \bm\delta_{\underline{s}}) + 
    \{1 - p_s(\overline{\bm W}_{i,s-1})\} m_s(\overline{\bW}_{i,s-1}, 0, \overline{\bm F}_{is}; \bm\delta_{\underline{s}})
    }{\delta_s  p_s(\overline{\bm W}_{i,s-1}) + \{1 - p_s(\overline{\bm W}_{i,s-1})\}}\\
& \hspace{.10in}  -  \frac{\begin{array}{@{}l@{}} \delta_s  m_s(\overline{\bW}_{i,s-1}, 1, \overline{\bm F}_{is}; \bm\delta_{\underline{s}})  W_{is}  \frac{p_s(\overline{\bm W}_{i,s-1})}{\pi_s(\overline{\bm H}_{is})} \\[2pt] \quad + m_s(\overline{\bW}_{i,s-1}, 0, \overline{\bm F}_{is}; \bm\delta_{\underline{s}})  (1 - W_{is}) \frac{1 - p_s(\overline{\bm W}_{i,s-1})}{1 - \pi_s(\overline{\bm H}_{is})} \end{array} }{ \delta_s p_s(\overline{\bm W}_{i,s-1}) + \{1 - p_s(\overline{\bm W}_{i,s-1})\} }  \Biggr]\\
&\hspace{0.25in} \times \prod_{s^\prime = 1}^{s-1} \omega_{s^\prime}(\overline{\bm{H}}_{is^\prime}, W_{is^\prime};\delta_{s^\prime}, p_{s^\prime}, \pi_{s^\prime} ) + \prod_{s= 1}^{S_i} \omega_s(\overline{\bm{H}}_{is}, W_{is};\delta_s, p_{s}, \pi_{s} )  Y_i - \Psi_{S_i}(\bm\delta)\\
= \ & \sum_{s= 1}^{S_i} \left[   \frac{\begin{array}{@{}l@{}} \delta_s  p_s(\overline{\bm W}_{i,s-1}) m_s(\overline{\bW}_{i,s-1}, 1, \overline{\bm F}_{is}; \bm\delta_{\underline{s}}) \left\{1 - \frac{W_{is}}{\pi_s(\overline{\bm H}_{is})}\right\} \\[2pt]
    \quad + \{1 - p_s(\overline{\bm W}_{i,s-1})\} m_s(\overline{\bW}_{i,s-1}, 0, \overline{\bm F}_{is}; \bm\delta_{\underline{s}})  \left\{1 - \frac{1 - W_{is}}{1 - \pi_s(\overline{\bm H}_{is})}\right\} \end{array}
    }{\delta_s  p_s(\overline{\bm W}_{i,s-1}) + \{1 - p_s(\overline{\bm W}_{i,s-1})\}} \right]\\
&\qquad \times \prod_{s^\prime = 1}^{s-1} \omega_{s^\prime}(\overline{\bm{H}}_{is^\prime}, W_{is^\prime};\delta_{s^\prime}, p_{s^\prime}, \pi_{s^\prime} ) + \prod_{s= 1}^{S_i} \omega_s(\overline{\bm{H}}_{is}, W_{is};\delta_{s}, p_{s}, \pi_{s} )  Y_i - \Psi_{S_i}(\bm\delta).
\end{align*}
\begin{comment}
%omitted since we defined in the main text
where we recursively define $m_{S_i}(\overline{\bm{H}}_{iS_i}, \overline{\bm{W}}_{S_i}; \overline{\bm{\delta}}_{S_i + 1}) =
\mu_{\overline{\bm{W}}_{S_i}}(\{\boldf(s,\bR_{is};
\blambda)\}_{s=1}^{S_i})$ and 
\begin{align*}
    &m_{s}(\overline{\bm{H}}_{is}, W_{is};\delta)
    = \E\left[\frac{\delta p_{s+1}(\overline{\bm
      W}_{is})m_{s+1}(\overline{\bm{H}}_{i,s+1}, 1;\delta) + (1 - p_{s+1}(\overline{\bm
      W}_{is}))m_{s+1}(\overline{\bm{H}}_{i,s+1}, 0;\delta)  }{\delta  p_{s+1}(\overline{\bm
      W}_{is}) + (1 - p_{s+1}(\overline{\bm W}_{is}))} \ \biggl | \  \overline{\bm{H}}_{is}, W_{is} \right]
\end{align*}
for $s=1,\ldots,S_i-1$.
\end{comment}

For binary $w_s\in\{0,1\}$, the stochastic intervention distribution
satisfies,
\[
dQ_s(w_s; \overline{\bm w}_{s-1}, \delta_s)
= \frac{w_s \cdot \delta_s  p_s(\overline{\bm w}_{s-1}) + (1 - w_s)\{1 - p_s(\overline{\bm w}_{s-1})\} }{\delta_s  p_s(\overline{\bm w}_{s-1}) + (1 - p_s(\overline{\bm w}_{s-1}))},
\]
where
$p_s(\overline{\bm w}_{s-1})=\P(W_{is}=1\mid \overline{\bm
  W}_{i,s-1} = \overline{\bm w}_{s-1}, S_i \geq s)$.  Since $Q_s$
depends on the data only through $p_s$, via the chain rule, the
influence function of $Q_s$ is the product of two terms;
(i) the derivative of $Q_s$ with respect to $p_s$ and (ii) the
influence function of $p_s$. Now, the derivative of $Q_s$ with respect to $p_s$ is given by
\begin{align*}
    \frac{d Q_s(w_s; \overline{\bm w}_{s-1}, \delta_s)
  }{d p_s(\overline{\bm w}_{s-1})} =
  \frac{(2w_s-1)\delta_s}{\{\delta_s p_s(\overline{\bm w}_{s-1}) + (1 -
  p_s(\overline{\bm w}_{s-1}))\}^2 }, 
\end{align*}
whereas Example~6 of \cite{hines_demystifying_2022} gives the
influence function of
$p_s(\overline{\bm w}_{s-1})=\P(W_{is}=1\mid \overline{\bm
  W}_{i,s-1} = \overline{\bm w}_{s-1}, S_i \geq s)$ as,
\begin{align*}
    \frac{\mathbbm{1}\{\overline{\bm W}_{i,s-1} = \overline{\bm w}_{s-1}, S_i \geq s\}}{\P(\overline{\bm W}_{i,s-1} = \overline{\bm w}_{s-1}, S_i \geq s)} \biggl( W_{is} - p_{s}(\overline{\bm w}_{s-1})\biggr).
\end{align*}
Therefore, we obtain,
\[
\phi(\overline{\bm W}_{i,s-1},W_{is}; w_s, \delta_s)
=  \frac{ (2w_s -1)\delta_s (W_{is} - p_s(\overline{\bm W}_{i,s-1})) }{[\delta_s p_s(\overline{\bm W}_{i,s-1}) + \{1 - p_s(\overline{\bm W}_{i,s-1})\}]^2  }.
\]
Substituting this expression into Equation \eqref{kennedy_lemma3} yields
\begin{align*}
&\psi_{S_i}(\cD_i; \bm\delta, \{ \pi_s, m_{s}, \tilde m_{s},  p_s, \boldf^{[s]} \}_{s = 1}^{S_i}, \Psi)\\
= \ & \sum_{s= 1}^{S_i} \left[ \frac{\begin{array}{@{}l@{}} \delta_s  p_s(\overline{\bm W}_{i,s-1}) m_s(\overline{\bW}_{i,s-1}, 1, \overline{\bm F}_{is}; \bm\delta_{\underline{s}}) \left\{1 - \frac{W_{is}}{\pi_s(\overline{\bm H}_{is})}\right\} \\[2pt]
    \quad + \{1 - p_s(\overline{\bm W}_{i,s-1})\} m_s(\overline{\bW}_{i,s-1}, 0, \overline{\bm F}_{is}; \bm\delta_{\underline{s}}) \left\{1 - \frac{1 - W_{is}}{1 - \pi_s(\overline{\bm H}_{is})}\right\} \end{array}
    }{\delta_s  p_s(\overline{\bm W}_{i,s-1}) + \{1 - p_s(\overline{\bm W}_{i,s-1})\}}  \right] \\
    &\   \times \prod_{s^\prime = 1}^{s-1} \omega_{s^\prime}(\overline{\bm{H}}_{is^\prime}, W_{is^\prime};\delta_{s^\prime} ) + \prod_{s= 1}^{S_i} \omega_s(\overline{\bm{H}}_{is}, W_{is};\delta_{s}) Y_i\\
    &\   + \sum_{s=1}^{S_i}
\Biggl\{
\sum_{w\in\{0,1\}}
\frac{(2w-1)\delta_s \{W_{is} - p_s(\overline{\bm W}_{i,s-1})\} }{[\delta_s p_s(\overline{\bm W}_{i,s-1}) + \{1 - p_s(\overline{\bm W}_{i,s-1})\}]^2  }\\
&\hspace{1.5in} \times \E\biggl[\Biggl(
\prod_{s^\prime=1}^{s-1}
\omega_{s^\prime}(\overline{\bm{H}}_{is^\prime}, W_{is^\prime};\delta_{s^\prime} )
\Biggr) m_s(\overline{\bW}_{i,s-1}, w, \overline{\bm F}_{is}; \bm\delta_{\underline{s}}) \mid \overline{\bm W}_{i,s-1}, S_i \geq s\biggr]
      \Biggr\} \\
  & \ \ 
     - \Psi_{S_i}(\bm\delta)\\
= \ & \sum_{s = 1}^{S_i}\Biggl(
\prod_{s^\prime=1}^{s-1}
\omega_{s^\prime}(\overline{\bm{H}}_{is^\prime}, W_{is^\prime};\delta_{s^\prime},
p_{s^\prime}, \pi_{s^\prime}
)
\Biggr)\\
&\times \left[
 \frac{\begin{array}{@{}l@{}} \delta_s  p_s(\overline{\bm W}_{i,s-1}) m_s(\overline{\bW}_{i,s-1}, 1, \overline{\bm F}_{is}; \bm\delta_{\underline{s}}) \left\{1 - \frac{W_{is}}{\pi_s(\overline{\bm H}_{is})}\right\} \\[2pt]
    \quad + (1 - p_s(\overline{\bm W}_{i,s-1})) m_s(\overline{\bW}_{i,s-1}, 0, \overline{\bm F}_{is}; \bm\delta_{\underline{s}}) \left\{1 - \frac{1 - W_{is}}{1 - \pi_s(\overline{\bm H}_{is})}\right\} \end{array}
    }{\delta_s  p_s(\overline{\bm W}_{i,s-1}) + \{1 - p_s(\overline{\bm
      W}_{i,s-1})\}}\right] \\
    &  + \frac{\delta_s (W_{is} - p_s(\overline{\bm W}_{i,s-1})) \{ \tilde m_s(\overline{\bm W}_{i,s-1},  1; \bm\delta_{-s}) - \tilde m_s(\overline{\bm W}_{i,s-1},  0; \bm\delta_{-s}) \} }{ [\delta_s  p_s(\overline{\bm W}_{i,s-1}) + \{1 - p_s(\overline{\bm W}_{i,s-1})\}]^2 } + \prod_{s= 1}^{S_i} \omega_s(\overline{\bm{H}}_{is}, W_{is};\delta_{s}, p_{s}, \pi_{s}) Y_i -  \Psi_{S_i}(\bm\delta).
\end{align*}
Finally, we derive the desired expression for $\psi(\cD_i; \bm\delta, \{ \pi_s, m_{s}, \tilde m_{s}, p_s, \boldf^{[s]} \}_{s= 1}^{S_i}, \Psi)$. Recall that
\begin{align*}
    \Psi(\bm\delta) = \int \Psi_{S_i}(\bm\delta)dP(S_i).
\end{align*}
By Example~1 of \cite{hines_demystifying_2022} and the chain rule, we have,
\begin{align*}
    \psi(\cD_i; \bm\delta, \{ \pi_s, m_{s}, \tilde m_{s},  p_s, \boldf^{[s]} \}_{s = 1}^{S_i}, \Psi) = \psi_{S_i}(\cD_i; \bm\delta, \{ \pi_s, m_{s}, \tilde m_{s}, p_s, \boldf^{[s]}\}_{s = 1}^{S_i}, \Psi_{S_i} ) + \Psi_{S_i}(\bm\delta) - \Psi(\bm\delta).
\end{align*}
Therefore,
\begin{align*}
    &\psi(\cD_i; \bm\delta, \{ \pi_s, m_{s}, \tilde m_{s}, p_s, \boldf^{[s]} \}_{s = 1}^{S_i}, \Psi)\\
  = \   & \sum_{s = 1}^{S_i}\Biggl(
\prod_{s^\prime=1}^{s-1}
\omega_{s^\prime}(\overline{\bm{H}}_{is^\prime}, W_{is^\prime};\delta_{s^\prime}, p_{s^\prime}, \pi_{s^\prime} )
\Biggr)\\
&\ \  \times \left[
 \frac{\begin{array}{@{}l@{}} \delta_s  p_s(\overline{\bm W}_{i,s-1}) m_s(\overline{\bW}_{i,s-1}, 1, \overline{\bm F}_{is}; \bm\delta_{\underline{s}}) \left\{1 - \frac{W_{is}}{\pi_s(\overline{\bm H}_{is})}\right\} \\[2pt]
    \quad + \{1 - p_s(\overline{\bm W}_{i,s-1})\} m_s(\overline{\bW}_{i,s-1}, 0, \overline{\bm F}_{is}; \bm\delta_{\underline{s}}) \left\{1 - \frac{1 - W_{is}}{1 - \pi_s(\overline{\bm H}_{is})}\right\} \end{array}
    }{\delta_s  p_s(\overline{\bm W}_{i,s-1}) + 1 - p_s(\overline{\bm
      W}_{i,s-1})} \right] \\
    &\ \ + \sum_{s=1}^{S_i} \frac{\delta_s (W_{is} - p_s(\overline{\bm W}_{i,s-1})) \{ \tilde m_s(\overline{\bm W}_{i,s-1},  1; \bm\delta_{-s}) - \tilde m_s(\overline{\bm W}_{i,s-1},  0; \bm\delta_{-s}) \} }{ [\delta_s  p_s(\overline{\bm W}_{i,s-1}) + \{1 - p_s(\overline{\bm W}_{i,s-1})\}]^2 }\\
    & \ \ + \prod_{s= 1}^{S_i} \omega_s(\overline{\bm{H}}_{is}, W_{is};\delta_{s}, p_{s}, \pi_{s} ) Y_i -  \Psi(\bm\delta).
\end{align*}
\qed

\subsection{Proof of Theorem \ref{asymp_normal} }\label{proof_asymp_normal}

Let the oracle empirical process indexed by $\bm\delta$ be
\begin{align*}
    \mathbb{G}_n(\bm\delta) := \frac{1}{\sqrt{N}}\sum_{i = 1}^N f_{\bm\delta}(\cD_i) \qquad \text{where} \quad f_{\bm\delta}(\cD_i) =  \frac{ \psi(\cD_i; \bm\delta, \{ \pi_s, m_{s}, \tilde m_{s}, p_s, \boldf^{[s]}
      \}_{s= 1}^{S_i}, \Psi) }{\sigma(\bm\delta) }. 
\end{align*}
We will prove that
\begin{align}
& \mathbb{G}_n(\bm\delta) \xrightarrow{d} \mathbb{G}(\bm\delta) \text{ in } \ell^\infty(\Delta^{s_{\max}}) \label{statement2}\\
&\sup_{\bm\delta \in \Delta^{s_{\max}} }\biggl| \sqrt{N}\frac{\hat \Psi_n(\bm\delta) - \Psi(\bm\delta)}{{\sigma}(\bm\delta)} - \mathbb{G}_n(\bm\delta) \biggr| = o_p(1) \label{statement1}
\end{align}
where $\ell^\infty(\Delta^{s_{\max}})$ is the space of bounded functions in $\Delta^{s_{\max}}$. Together, this proves the desired statement.

\subsubsection{Proof of Equation~\eqref{statement2}}

Proving Equation~\eqref{statement2} is equivalent to proving the
function class
$\mathcal{F} = \{f_{\bm\delta}(\cD_i): \bm\delta \in \Delta^{s_{\max}}
\}$ is Donsker. It is sufficient to show that the bracketing integral
is finite (Theorem~2.5.6 of \citealt{van1996weak}); i.e.,
\begin{align*}
    \int_{0}^1 \sqrt{1 + \log N_{[]}(\epsilon \norm{F}_{2}, \mathcal{F}, L_2(\P))} d \epsilon < \infty
\end{align*}
where $N_{[]}(\epsilon \norm{F}_{2}, \mathcal{F}, L_2(\P))$ is the
bracketing number, which is the minimum number of
$\epsilon$-bracketing to cover the function class $\mathcal{F}$, with
an envelope function $F$ (i.e., any function such that
$|f_{\bm\delta}(\cD_i)| \leq F(\cD_i)$ for every $\cD_i$ and
$\bm\delta \in \Delta^{s_{\max}}$).

Because $\inf_{\bm\delta\in\Delta^{s_{\max}}}\sigma(\bm\delta)>0$, it
is sufficient to show that each component of the influence function is
uniformly Lipschitz in $\bm\delta$ with an $L_2(P)$ envelope.  We
first consider the derivative of the weight,
\begin{align*}
&\sup_{\delta_s \in \Delta} \biggl|\frac{d}{d\delta_s}
                 \omega_s(\overline{\bm{H}}_{is}, W_{is};\delta_s, p_{s}, \pi_{s}
                 )\biggr| \\
 = &  \sup_{\delta_s \in \Delta} \left|\frac{d}{d\delta_s}\left[\frac{ \delta_s W_{is} \frac{p_s(\overline{\bm W}_{i,s-1})}{\pi_s(\overline{\bm H}_{is})} + (1 - W_{is}) \frac{1 - p_s(\overline{\bm W}_{i,s-1})}{1 - \pi_s(\overline{\bm H}_{is})} }{ \delta_s p_s(\overline{\bm W}_{i,s-1}) + \{1 - p_s(\overline{\bm W}_{i,s-1})\} } \right] \right|\\
= & \sup_{\delta_s \in \Delta}\left|\frac{p_s(\overline{\bm W}_{i,s-1}) \{1 - p_s(\overline{\bm W}_{i,s-1})\} }{ [\delta_s p_s(\overline{\bm W}_{i,s-1}) + \{1 - p_s(\overline{\bm W}_{i,s-1})\}]^2 } \left(\frac{W_{is}}{\pi_s(\overline{\bm H}_{is})} - \frac{1 - W_{is} }{1 - \pi_s(\overline{\bm H}_{is})}\right) \right|\\
= & \sup_{\delta_s \in \Delta}\left|\frac{1 }{ [\delta_s p_s(\overline{\bm W}_{i,s-1}) + \{1 - p_s(\overline{\bm W}_{i,s-1})\}]^2 } \right| \\
&\qquad \times  \left|p_s(\overline{\bm W}_{i,s-1}) \{1 - p_s(\overline{\bm W}_{i,s-1})\}\left( \frac{W_{is}}{\pi_s(\overline{\bm H}_{is})} - \frac{1 - W_{is} }{1 - \pi_s(\overline{\bm H}_{is})} \right)\right|\\
\leq &  \frac{1}{ \min\{1, \delta_\ell\}^2 } \left|
\frac{W_{is}p_s(\overline{\bm W}_{i,s-1}) \{1 - p_s(\overline{\bm W}_{i,s-1})\}}{\pi_s(\overline{\bm H}_{is})} - \frac{(1 - W_{is})p_s(\overline{\bm W}_{i,s-1}) \{1 - p_s(\overline{\bm W}_{i,s-1})\} }{1 - \pi_s(\overline{\bm H}_{is})}
\right|\\
\leq & \frac{1}{ \min\{1, \delta_\ell\}^2 } \left\{\left|W_{is}
\frac{p_s(\overline{\bm W}_{i,s-1})}{\pi_s(\overline{\bm H}_{is})}\right| +\left|(1 - W_{is})\frac{ 1 - p_s(\overline{\bm W}_{i,s-1}) }{1 - \pi_s(\overline{\bm H}_{is})}
\right| \right\}\\
\leq & \frac{2}{ c\min\{1, \delta_\ell\}^2 } 
\end{align*}
where the first inequality is due to
$\delta_s p_s(\overline{\bm W}_{i,s-1}) + \{1 - p_s(\overline{\bm
  W}_{i,s-1})\} \geq \min\{1, \delta_\ell\}$ and the last inequality
follows from Assumption~\ref{overlap}. Also, we have,
\begin{align*}
    &\sup_{\delta_s \in \Delta}|\omega_s(\overline{\bm{H}}_{is}, W_{is};\delta_s, p_{s}, \pi_{s} )| = \sup_{\delta_s \in \Delta}\left|\frac{ \delta_s W_{is} \frac{p_s(\overline{\bm W}_{i,s-1})}{\pi_s(\overline{\bm H}_{is})} + (1 - W_{is}) \frac{1 - p_s(\overline{\bm W}_{i,s-1})}{1 - \pi_s(\overline{\bm H}_{is})} }{ \delta_s p_s(\overline{\bm W}_{i,s-1}) + (1 - p_s(\overline{\bm W}_{i,s-1})) } \right| \leq \frac{\delta_u/c + 1/c }{\min\{1, \delta_\ell\}}.
\end{align*}
The derivative of the product term is written as,
\begin{align*}
    &\sup_{\bm\delta \in \Delta^{s_{\max}}  } \biggl|\frac{d}{d\bm\delta} \prod_{s^\prime=1}^{s}
\omega_{s^\prime}(\overline{\bm{H}}_{is^\prime}, W_{is^\prime};\delta_{s^\prime}, p_{s^\prime}, \pi_{s^\prime})\biggr| \\
&= \sup_{\bm\delta \in \Delta^{s_{\max}}} \left| \sum_{s^\prime= 1}^s \biggl( \frac{d}{d\bm \delta} \omega_{s^\prime}(\overline{\bm{H}}_{is^\prime}, W_{is^\prime};\delta_{s^\prime}, p_{s^\prime}, \pi_{s^\prime} ) \biggr) \prod_{r \neq s'}
\omega_{r}(\overline{\bm{H}}_{ir}, W_{ir};\delta_r ) \right|. 
\end{align*}
Since each term is bounded, the product is uniformly Lipschitz.
Applying the same argument, we can show that
$\prod_{s= 1}^{S_i} \omega_s(\overline{\bm{H}}_{is}, W_{is};\delta_s )
Y_i$ is uniformly Lipschitz in $\bm\delta$ because $Y_i$ is bounded by
Assumption~\ref{reg_cond}.

Similarly,
\begin{align*}
     &\sup_{\delta_s \in \Delta} \left| \frac{\partial}{\partial \delta_s} \frac{\begin{array}{@{}l@{}} \delta_s  p_s(\overline{\bm W}_{i,s-1}) m_s(\overline{\bW}_{i,s-1}, 1, \overline{\bm F}_{is}; \bm\delta_{\underline{s}}) \left\{1 - \frac{W_{is}}{\pi_s(\overline{\bm H}_{is})}\right\} \\[2pt]
    \quad + \{1 - p_s(\overline{\bm W}_{i,s-1})\} m_s(\overline{\bW}_{i,s-1}, 0, \overline{\bm F}_{is}; \bm\delta_{\underline{s}}) \left\{1 - \frac{1 - W_{is}}{1 - \pi_s(\overline{\bm H}_{is})}\right\} \end{array}
    }{\delta_s  p_s(\overline{\bm W}_{i,s-1}) + \{1 - p_s(\overline{\bm
      W}_{i,s-1})\}} \right|\\
    = & \sup_{\delta_s \in \Delta} \left| \frac{\begin{array}{@{}l@{}}  p_s(\overline{\bm W}_{i,s-1})(1 - p_s(\overline{\bm W}_{i,s-1})) \biggl( m_s(\overline{\bW}_{i,s-1}, 1, \overline{\bm F}_{is}; \bm\delta_{\underline{s}}) \left\{1 - \frac{W_{is}}{\pi_s(\overline{\bm H}_{is})}\right\} \\[2pt]
    \quad -
    m_s(\overline{\bW}_{i,s-1}, 0, \overline{\bm F}_{is}; \bm\delta_{\underline{s}}) \left\{1 - \frac{1 - W_{is}}{1 - \pi_s(\overline{\bm H}_{is})} \right\} \biggr) \end{array}
    }{ [\delta_s  p_s(\overline{\bm W}_{i,s-1}) + \{1 - p_s(\overline{\bm
      W}_{i,s-1}) \}]^2 } \right|\\
\leq & \frac{1}{\min\{1, \delta_\ell\}^2 }\left|
             p_s(\overline{\bm W}_{i,s-1})(1 - p_s(\overline{\bm
             W}_{i,s-1})) \left( m_s(\overline{\bW}_{i,s-1}, 1, \overline{\bm F}_{is}; \bm\delta_{\underline{s}}) \left\{1 -
             \frac{W_{is}}{\pi_s(\overline{\bm H}_{is})}\right\}
             \right. \right. \\
             & \hspace{1in}\left. - 
    m_s(\overline{\bW}_{i,s-1}, 0, \overline{\bm F}_{is}; \bm\delta_{\underline{s}}) \left\{1 - \frac{1 - W_{is}}{1 - \pi_s(\overline{\bm H}_{is})} \right\} \biggr)
     \right|\\
    \leq & \frac{C}{\min\{1, \delta_\ell\}^2 }p_s(\overline{\bm W}_{i,s-1})\{1 - p_s(\overline{\bm W}_{i,s-1})\} \left( \left| 1-  \frac{1 - W_{is}}{1 - \pi_s(\overline{\bm H}_{is})} \biggr| + \biggl|1 -  \frac{W_{is}}{\pi_s(\overline{\bm H}_{is})}  \right|\right)\\
    \leq & \frac{C}{\min\{1, \delta_\ell\}^2 } \left( 2 + \left|  \frac{1 - p_s(\overline{\bm W}_{i,s-1}) }{1 - \pi_s(\overline{\bm H}_{is})}  \right| + \left|  \frac{p_s(\overline{\bm W}_{i,s-1})  }{\pi_s(\overline{\bm H}_{is})} \right| \right)
            \\
  \leq & \frac{2C(1 + 1/c) }{\min\{1, \delta_\ell\}^2},
\end{align*}
where the second inequality is due to the second condition in
Assumption~\ref{reg_cond}.

Finally, 
{\small \begin{align*}
    &\sup_{\delta_s \in \Delta} \left| \frac{\partial}{\partial
      \delta_s}\frac{\delta_s (W_{is} - p_s(\overline{\bm
      W}_{i,s-1})) \{ \tilde m_s(\overline{\bm W}_{i,s-1},
      1; \bm\delta_{-s}) - \tilde m_s(\overline{\bm W}_{i,s-1},
      0; \bm\delta_{-s}) \} }{ [\delta_s  p_s(\overline{\bm
      W}_{i,s-1}) + \{1 - p_s(\overline{\bm W}_{i,s-1})\}]^2 } \right|\\
    = &  \sup_{\delta_s \in \Delta} 
    \left| 
    \frac{ \{W_{is} - p_s(\overline{\bm W}_{i,s-1})\} \{ \tilde m_s(\overline{\bm W}_{i,s-1},
      1; \bm\delta_{-s}) - \tilde m_s(\overline{\bm W}_{i,s-1},
      0; \bm\delta_{-s}) \} \{ 1 - (1 + \delta_s) p_s(\overline{\bm W}_{i,s-1}) \} }{ 
    \{\delta_s  p_s(\overline{\bm W}_{i,s-1}) + (1 - p_s(\overline{\bm W}_{i,s-1}))\}^3
    }
    \right|\\
    \leq &  \sup_{\delta_s \in \Delta} 
    \left| 
    \frac{ \{ \tilde m_s(\overline{\bm W}_{i,s-1},
      1; \bm\delta_{-s}) - \tilde m_s(\overline{\bm W}_{i,s-1},
      0; \bm\delta_{-s}) \} \{ 1 - (1 + \delta_s) p_s(\overline{\bm W}_{i,s-1}) \} }{ 
    \{\delta_s  p_s(\overline{\bm W}_{i,s-1}) + (1 - p_s(\overline{\bm W}_{i,s-1}))\}^3
    }
    \right|\\
    \leq & \sup_{\delta_s \in \Delta} 
    \left| 
    \frac{ \{ \tilde m_s(\overline{\bm W}_{i,s-1},
      1; \bm\delta_{-s}) - \tilde m_s(\overline{\bm W}_{i,s-1},
      0; \bm\delta_{-s}) \} \max\{1, \delta_u\} }{ 
    \{\delta_s  p_s(\overline{\bm W}_{i,s-1}) + (1 - p_s(\overline{\bm W}_{i,s-1}))\}^3
    }
    \right|\\
    \leq & \sup_{\delta_s \in \Delta} 
    \frac{ \{ |\tilde m_s(\overline{\bm W}_{i,s-1},
      1; \bm\delta_{-s})| + |\tilde m_s(\overline{\bm W}_{i,s-1},
      0; \bm\delta_{-s})| \} \max\{1, \delta_u\} }{
    \{\delta_s  p_s(\overline{\bm W}_{i,s-1}) + (1 - p_s(\overline{\bm W}_{i,s-1}))\}^3
    }\\
    \leq & \frac{1}{\min\{1, \delta_\ell\}^3}
    \{ |\tilde m_s(\overline{\bm W}_{i,s-1},
      1; \bm\delta_{-s})| + |\tilde m_s(\overline{\bm W}_{i,s-1},
      0; \bm\delta_{-s})| \} \max\{1, \delta_u\} \\
    \leq & 
    \frac{ 2C \max\{1, \delta_u\} }{ 
           \min\{1, \delta_\ell\}^3},
        \end{align*}}where the first inequality is due to the fact that $|W_{is} -
      p_s(\overline{\bm W}_{i,s-1})| \leq 1$, the second inequality holds because 
\begin{align*}
    |1 - (1 + \delta_s) p_s(\overline{\bm W}_{i,s-1})| \leq |1 -  p_s(\overline{\bm W}_{i,s-1})| + |\delta_s  p_s(\overline{\bm W}_{i,s-1})|  \leq \max\{1, \delta_u\},
\end{align*}
and the last inequality is due to the second condition in
Assumption~\ref{reg_cond}.  Since each component of the influence
function is uniformly Lipschitz in $\bm\delta$, we have proven
Equation~\eqref{statement2}.

\subsubsection{Proof of Equation~\eqref{statement1}}
Recall that from the proof of Theorem~\ref{theorem_if}, $\tilde{\psi}^*$ denotes the influence function of the marginal structural model derived in the literature (e.g., \citealt{rotnitzky2017multiply}), and let $\zeta$ be the augmentation term defined as
\begin{equation}
\begin{aligned}
    &\zeta(\cD_i; \bm\delta, \{ \pi_s, m_s, \tilde m_s, p_s, \boldf^{[s]}\}_{s = 1}^{S_i})  \\
    &:= \sum_{s=1}^{S_i}
\Biggl\{
\sum_{w\in\{0,1\}}
\phi(\overline{\bm W}_{i,s-1}, W_{is}; w, \delta_s) \\
&\hspace{1in} \times
\E\biggl[\Biggl(
\prod_{s^\prime=1}^{s-1}
\omega_{s^\prime}(\overline{\bm{H}}_{is^\prime}, W_{is^\prime};\delta_{s^\prime}, p_{s^\prime}, \pi_{s^\prime})
\Biggr) m_s(\overline{\bW}_{i,s-1}, w, \overline{\bm F}_{is}; \bm\delta_{\underline{s}}) \mid \overline{\bm W}_{i,s-1}, S_i \geq s\biggr]
\Biggr\}. \label{augmentation_term}
\end{aligned}
\end{equation}
Note that the conditional expectation in
Equation~\eqref{augmentation_term} is exactly the marginalized outcome
regression $\tilde m_s(\overline{\bm W}_{i,s-1}, w; \bm\delta_{-s})$
defined in Theorem~\ref{theorem_if}, so $\zeta$ depends on $\tilde
m_s$; when $\zeta$ is evaluated at the estimated nuisance functions,
this term is replaced by the saturated regression estimator
$\widehat{\tilde m}_s^{(-k)}$.
Also, we write the estimand as $\Psi(\bm\delta) = \Psi(\bm\delta, Q)$ to explicitly denote which stochastic intervention $Q$ we use to calculate $\Psi$. By Lemma~\ref{lemma1}, we have
\begin{align*}
       &\sup_{\bm\delta \in \Delta^{s_{\max}} }\biggl| \sqrt{N}\frac{\hat \Psi_n(\bm\delta) - \Psi(\bm\delta)}{{\sigma}(\bm\delta)} - \mathbb{G}_n(\bm\delta) \biggr| \leq \sup_{\bm\delta \in \Delta^{s_{\max}} }\biggl|  \frac{\sqrt{N}}{\sigma(\bm\delta)}\frac{1}{K}\sum_{k = 1}^K \biggl( R_1^{(-k)}(\bm\delta) + R_2^{(-k)}(\bm\delta) \biggr) \biggr| + o_p(1),
\end{align*}
where
\begin{align*}
    R_1^{(-k)}(\bm\delta) & := \int \biggl[ \tilde{\psi}^*_{S_i}(\cD_i;
        \bm\delta, \{ \hat\pi_s^{(-k)}, \hat m_s^{(-k)}, \hat
        p_s^{(-k)}, \hat\boldf^{[s],(-k)}\}_{s = 1}^{S_i}, \hat\Psi) +
        \hat\Psi(\bm\delta; \hat Q^{(-k)}) - \Psi(\bm\delta, \hat{Q}^{(-k)})\biggr] dF(\cD_i),\\
    R_2^{(-k)}(\bm\delta) &:=  \int \biggl[\zeta(\cD_i; \bm\delta, \{ \hat \pi_s^{(-k)}, \hat m_s^{(-k)}, \widehat{\tilde m}_s^{(-k)}, \hat p_s^{(-k)}, \hat\boldf^{[s],(-k)}\}_{s = 1}^{S_i}) + \Psi(\bm\delta; \hat Q^{(-k)}) - \Psi(\bm\delta; Q) \biggr] dF(\cD_i).
\end{align*}
As $K$ is fixed, it is sufficient
to show that
$\sup_{\bm\delta \in \Delta^{s_{\max}} }|R_1^{(-k)}(\bm\delta)| =
o_p(N^{-\frac{1}{2}})$ and
$\sup_{\bm\delta \in \Delta^{s_{\max}} }| R_2^{(-k)}(\bm\delta)| =
o_p(N^{-\frac{1}{2}})$ for all $k \in \{1, \cdots, K\}$.

We begin with  $\sup_{\bm\delta \in \Delta^{s_{\max}} }|R_1^{(-k)}(\bm\delta)| =
o_p(N^{-\frac{1}{2}})$. Let $m_s^*$ be the outcome model under the estimated
stochastic intervention, which is recursively defined by the terminal
condition
$m_{S_i}^*(\overline{\bW}_{i,S_i-1}, W_{iS_i}, \overline{\bm F}_{iS_i}) :=
m_{S_i}(\overline{\bW}_{i,S_i-1}, W_{iS_i}, \overline{\bm F}_{iS_i})$
(where the intervention-parameter index is omitted because the
terminal outcome model does not depend on $\bm\delta$) and, for $s < S_i$,
\begin{equation}
\begin{aligned}
    &m_s^*(\overline{\bW}_{i,s-1}, W_{is}, \overline{\bm F}_{is}; \bm\delta_{\underline{s}})\\
    &:=
    \E\biggl[
        \frac{
            \delta_{s+1} \hat p_{s+1}^{(-k)}(\overline{\bm W}_{is})
            m_{s+1}^*(\overline{\bW}_{is}, 1, \overline{\bm F}_{i,s+1}; \bm\delta_{\underline{s+1}})
            +
            \{1-\hat p_{s+1}^{(-k)}(\overline{\bm W}_{is})\}
            m_{s+1}^*(\overline{\bW}_{is}, 0, \overline{\bm F}_{i,s+1}; \bm\delta_{\underline{s+1}})
        }{
            \delta_{s+1} \hat p_{s+1}^{(-k)}(\overline{\bm W}_{is})
            + 1 - \hat p_{s+1}^{(-k)}(\overline{\bm W}_{is})
        }
        \,\bigg|\, {\overline{\bm{H}}}_{is}, W_{is}
    \biggr], \label{outcome_true}
\end{aligned}
\end{equation}
so that the estimation errors in
$\hat p_{s+2}^{(-k)}, \ldots, \hat p_{S_i}^{(-k)}$ accumulated at the
later stages are propagated through the recursion.
By Lemma~\ref{lemma2}, we have
\begin{align*}
&\sup_{\bm\delta \in \Delta^{s_{\max}}}\biggl|\int \biggl( \tilde{\psi}^*\bigl(\cD_i; \bm\delta, \{ \hat\pi_s^{(-k)}, \hat m_s^{(-k)}, \hat p_s^{(-k)}, \hat\boldf^{[s],(-k)}\}_{s = 1}^{S_i}, \hat\Psi\bigr) + \hat\Psi(\bm\delta; \hat Q^{(-k)}) - \Psi(\bm\delta; \hat Q^{(-k)})\biggr)dF(\cD_i)
    \biggr| \\
    &\leq \  C \biggl\{ \sum_{s = 1}^{s_{\max} } \sum_{s^\prime = 1}^{s}
    \sup_{\bm\delta \in \Delta^{s_{\max}}}\E\biggl[
        \bigl|
        m_s^*(\overline{\bW}_{i,s-1}, W_{is}, \overline{\bm F}_{is}; \bm\delta_{\underline{s}})
        -
        \hat m_s^{(-k)}(\overline{\bW}_{i,s-1}, W_{is}, \widehat{\overline{\bm F}}^{(-k)}_{is}; \bm\delta_{\underline{s}})
        \bigr|^2
    \biggr]^{\frac{1}{2}} \\
    &\hspace{2.5in} \times
    \E\left[
        \bigl|
        \hat\pi_{s^\prime}^{(-k)}(\widehat{\overline{\bm{H}}}^{(-k)}_{is^\prime})
        -
        \pi_{s^\prime}(\overline{\bm{{H}}}_{is^\prime})
        \bigr|^2
    \right]^{\frac{1}{2}}\biggr\}
\end{align*}
Now, the first term in the product term converges at the rate of $N^{-1/4}$, uniformly over $\bm\delta \in \Delta^{s_{\max}}$, by Lemma~\ref{lemma3}. Also,
\begin{align*}
    \E\left[
        \bigl|
        \hat \pi_s(\widehat{\overline{\bm{H}}}^{(-k)}_{is})
        -
        \pi_s(\overline{\bm H}_{is})
        \bigr|^2
    \right]^{\frac{1}{2}} & \leq
    \E\left[
        \bigl|
        \hat \pi_s(\widehat{\overline{\bm{H}}}^{(-k)}_{is})
        -
        \pi_s(\widehat{\overline{\bm{H}}}^{(-k)}_{is})
        \bigr|^2
    \right]^{\frac{1}{2}}
    + 
    \E\left[
        \bigl|
        \pi_s(\widehat{\overline{\bm{H}}}^{(-k)}_{is})
        -
        \pi_s(\overline{\bm H}_{is})
        \bigr|^2
    \right]^{\frac{1}{2}} \\
    &\leq
    L \sum_{t = 1}^{s} \sum_{r = 1}^{t} \E\left[
        \bigl|
        \hat{\boldf}^{[t],(-k)}(\bR_{ir},r)-\boldf^{[t]}(\bR_{ir},r)
        \bigr|^2
    \right]^{\frac{1}{2}} + o_p(N^{-1/4}) \\
    & =
    o_p(N^{-1/4}),
\end{align*}
where $L$ is a Lipschitz constant. In the above transformation, we use
triangular inequality for the first line, and the second line is by
the convergence of $\hat\pi$ and the Lipschitz continuity of $\pi_s$
(Assumption~\ref{reg_cond}); since the history
$\widehat{\overline{\bm{H}}}^{(-k)}_{is}$ contains the estimated
deconfounders at all levels $t \leq s$ and segments $r \leq t$, the
representation errors accumulate over the entire history, and the
resulting summation contains at most $s_{\max}^2$ terms. Therefore, since $\mathcal{S}$ is finite and each nuisance function converges at the rate of $N^{-1/4}$, for all $k$,
\begin{align*}
    &\sup_{\bm\delta \in \Delta^{s_{\max}} }|R_1^{(-k)}(\bm\delta)| = o_p(N^{-1/2}).
\end{align*}

Lastly, we show that 
$\sup_{\bm\delta \in \Delta^{s_{\max}} }|R_2^{(-k)}(\bm\delta)| =
o_p(N^{-1/2})$. Let
\begin{align}
\check{m}_s^{(-k)}&(\overline{\bm{W}}_{i,s-1},  W_{is}; \bm\delta_{-s}) := \E\biggl[ 
    \biggl(
\prod_{s^\prime=1}^{s-1}
\omega_{s'}(\overline{\bm{H}}_{is'}, W_{is'}; \delta_{s'}, \hat p_{s'}^{(-k)},  \pi_{s'})\biggr) \nonumber \\
    &\hspace{1.2in} \hat m_s^{(-k)} (\overline{\bW}_{i,s-1},  W_{is}, \widehat{\overline{\bm F}}^{(-k)}_{is}; \bm\delta_{\underline{s}}) \mid \overline{\bm{W}}_{i,s-1}, S_i \geq s\biggr]. \label{check_m}
\end{align}
By Lemma~\ref{lemma4} and triangle inequality, 
\begin{align*}
\sup_{\bm\delta \in \Delta^{s_{\max}} }|R_2^{(-k)}(\bm\delta)| &= \sup_{\bm\delta \in \Delta^{s_{\max}} } \biggl| R_{2,1}^{(-k)}(\bm\delta) + R_{2,2}^{(-k)}(\bm\delta) + R_{2,3}^{(-k)}(\bm\delta)\biggr|\\
&\leq \sup_{\bm\delta \in \Delta^{s_{\max}} } \bigl| R_{2,1}^{(-k)}(\bm\delta)\bigr| + 
\sup_{\bm\delta \in \Delta^{s_{\max}} } \bigl| R_{2,2}^{(-k)}(\bm\delta)\bigr|
+
\sup_{\bm\delta \in \Delta^{s_{\max}} } \bigl| R_{2,3}^{(-k)}(\bm\delta)\bigr|,
\end{align*}
where
\begin{align}
  & R_{2,1}^{(-k)}(\bm\delta) \nonumber \\
  := \ &  \sum_{s = 1}^{s_{\max}} \int
                            \frac{\delta_s }{ [\delta_s  \hat
                            p_s^{(-k)}(\overline{\bm W}_{i,s-1}) + \{1
                            - \hat p_s^{(-k)}(\overline{\bm
                            W}_{i,s-1})\}]^2 } \biggl(
                            p_s(\overline{\bm W}_{i,s-1}) - \hat
                            p_s^{(-k)}(\overline{\bm
                            W}_{i,s-1})\biggr) \nonumber \\
&\Biggl( \sum_{w \in \{0,1\}} (2w-1) \biggl\{
\widehat{\tilde m}_s^{(-k)}(\overline{\bm{W}}_{i,s-1},  w; \bm\delta_{-s})  - 
\check{m}^{(-k)}_s(\overline{\bm{W}}_{i,s-1},  w; \bm\delta_{-s})
       \biggr\}  \Biggr) dF(\overline{\bm{W}}_{i,s-1} \mid S_i \geq  s)\Pr(S_i \geq s), \label{reminder_r21}\\
  & R_{2,2}^{(-k)}(\bm\delta) \nonumber\\
  := \ & \sum_{s = 1}^{s_{\max} } \int \frac{\begin{array}{@{}l@{}} \delta_s \{ \hat p_s^{(-k)}(\overline{\bm W}_{i,s-1}) - p_s(\overline{\bm W}_{i,s-1})\} \biggl( \sum_{w} \{2w-1\} \{ m_{s}(\overline{\bW}_{i,s-1}, w, \overline{\bm F}_{is}; \bm\delta_{\underline{s}}) \\[2pt]
                            \hspace{1.7in} -  \hat m_{s}^{(-k)}(\overline{\bW}_{i,s-1}, w, \widehat{\overline{\bm F}}^{(-k)}_{is}; \bm\delta_{\underline{s}})\}  \biggr) \end{array} }{ [\delta_s  \hat
                            p_s^{(-k)}(\overline{\bm W}_{i,s-1}) + \{1
                            - \hat p_s^{(-k)}(\overline{\bm
                            W}_{i,s-1})\}]^2 }\nonumber \\
      &\hspace{0.2in} \times  dF(\bR_{is} \mid \overline{\bm H}_{i,s-1}, W_{i,s-1}, S_i)
\prod_{s^\prime=1}^{s-1}
d\hat{Q}^{(-k)}_{s^\prime}(W_{is^\prime}; \overline{\bm W}_{i,s^\prime-1}, \delta_{s^\prime})
dF(\bR_{is^\prime} \mid \overline{\bm H}_{i,s^\prime-1}, W_{i,s^\prime-1}, S_i)\, \nonumber \\
&\hspace{4in} 
dP(S_i \mid S_i \geq  s)\Pr(S_i \geq s) \label{reminder_r22}\\
  & R_{2,3}^{(-k)}(\bm\delta) \nonumber \\
  := \ &  \sum_{s = 1}^{s_{\max} } \int \frac{\delta_s (\delta_s - 1)  \{ \hat p_s^{(-k)}(\overline{\bm W}_{i,s-1}) - p_s(\overline{\bm W}_{i,s-1})\}^2 \{ m_{s}(\overline{\bW}_{i,s-1}, 1, \overline{\bm F}_{is}; \bm\delta_{\underline{s}}) - m_{s}(\overline{\bW}_{i,s-1}, 0, \overline{\bm F}_{is}; \bm\delta_{\underline{s}})\} }{
        [\delta_s  \hat p_s^{(-k)}(\overline{\bm W}_{i,s-1}) + \{1 - \hat p_s^{(-k)}(\overline{\bm W}_{i,s-1})\}]^2
        [\delta_s   p_s(\overline{\bm W}_{i,s-1}) + \{1 -  p_s(\overline{\bm W}_{i,s-1})\}]
      } \nonumber\\
      &\hspace{0.2in} \times  dF(\bR_{is} \mid \overline{\bm H}_{i,s-1}, W_{i,s-1}, S_i) \prod_{s^\prime=1}^{s-1}
d\hat{Q}^{(-k)}_{s^\prime}(W_{is^\prime}; \overline{\bm W}_{i,s^\prime-1}, \delta_{s^\prime})
dF(\bR_{is^\prime} \mid \overline{\bm H}_{i,s^\prime-1}, W_{i,s^\prime-1}, S_i)\, \nonumber\\
&\hspace{4.0in} dP(S_i \mid S_i \geq s) \Pr(S_i \geq s) .
\label{reminder_r23}
\end{align}

Thus, we only need to bound each term.  For
$R_{2,1}^{(-k)}(\bm\delta)$, notice that
\begin{align*}
    \frac{\delta_s }{ [\delta_s  \hat p_s^{(-k)}(\overline{\bm W}_{i,s-1}) + \{1 - \hat p_s^{(-k)}(\overline{\bm W}_{i,s-1})\}]^2 } \leq \frac{\delta_u}{\min\{1, \delta_\ell\}^2}
\end{align*}
and thus applying the Cauchy-Schwartz inequality yields
\begin{align*}
    &\sup_{\bm\delta \in \Delta^{s_{\max}} } \bigl| R_{2,1}^{(-k)}(\bm\delta)\bigr|\\
    &\lesssim \sum_{s = 1}^{s_{\max}}\sum_{w \in \{0,1\} } \E\biggl[ \bigl|p_s(\overline{\bm W}_{i,s-1}) - \hat p_s^{(-k)}(\overline{\bm W}_{i,s-1})\bigr|^2 \biggr]^\frac{1}{2} \\
    &\hspace{1.0in} \sup_{\bm\delta \in \Delta^{s_{\max}} }\times \E\biggl[ \bigl|\widehat{\tilde m}_s^{(-k)}(\overline{\bm{W}}_{i,s-1},  w; \bm\delta_{-s})  - 
\check{m}^{(-k)}_s(\overline{\bm{W}}_{i,s-1},  w; \bm\delta_{-s})\bigr|^2\biggr]^\frac{1}{2} = o_p(N^{-\frac{1}{2}})
\end{align*}
where the last equality is by Assumption~\ref{reg_cond}. With the almost identical derivation, we can also show that
\begin{align*}
    &\sup_{\bm\delta \in \Delta^{s_{\max}} } \bigl| R_{2,2}^{(-k)}(\bm\delta)\bigr| \lesssim \sum_{s = 1}^{s_{\max}}\sum_{w \in \{0,1\} } \E\biggl[ \bigl|p_s(\overline{\bm W}_{i,s-1}) - \hat p_s^{(-k)}(\overline{\bm W}_{i,s-1})\bigr|^2 \biggr]^\frac{1}{2} \\
    &\hspace{1.0in} \times \sup_{\bm\delta \in \Delta^{s_{\max}} }\E\biggl[ \bigl|
    m_{s}(\overline{\bW}_{i,s-1}, w, \overline{\bm F}_{is}; \bm\delta_{\underline{s}}) -  \hat m_{s}^{(-k)}(\overline{\bW}_{i,s-1}, w, \widehat{\overline{\bm F}}^{(-k)}_{is}; \bm\delta_{\underline{s}})
      \bigr|^2\biggr]^\frac{1}{2} = o_p(N^{-\frac{1}{2}}).
\end{align*}
where the last equality is by Assumption~\ref{reg_cond}.
For the last term, notice that
\begin{align*}
    &\biggl|\frac{\delta_s (\delta_s - 1) \{ m_{s}(\overline{\bW}_{i,s-1}, 1, \overline{\bm F}_{is}; \bm\delta_{\underline{s}}) - m_{s}(\overline{\bW}_{i,s-1}, 0, \overline{\bm F}_{is}; \bm\delta_{\underline{s}})\} }{
        [\delta_s  \hat p_s^{(-k)}(\overline{\bm W}_{i,s-1}) + \{1 - \hat p_s^{(-k)}(\overline{\bm W}_{i,s-1})\}]^2
        [\delta_s   p_s(\overline{\bm W}_{i,s-1}) + \{1 -  p_s(\overline{\bm W}_{i,s-1})\}]
      } \biggr|\\
      &\leq \frac{ \delta_u \max\{|\delta_\ell - 1|, |\delta_u - 1|\}  }{ (\min\{1, \delta_\ell\})^3 } |\{ m_{s}(\overline{\bW}_{i,s-1}, 1, \overline{\bm F}_{is}; \bm\delta_{\underline{s}}) - m_{s}(\overline{\bW}_{i,s-1}, 0, \overline{\bm F}_{is}; \bm\delta_{\underline{s}})\}|\\
      &\leq \frac{2C \delta_u \max\{|\delta_\ell - 1|, |\delta_u - 1|\} }{ (\min\{1, \delta_\ell\})^3 }
\end{align*}
where the first inequality is by $\Delta \in [\delta_\ell, \delta_u]$ and the second inequality is boundedness of $m_s$. Therefore,
\begin{align*}
    \sup_{\bm\delta \in \Delta^{s_{\max}} } \bigl| R_{2,3}^{(-k)}(\bm\delta)\bigr| \lesssim \sum_{s = 1}^{s_{\max}} \E\biggl[ \bigl|p_s(\overline{\bm W}_{i,s-1}) - \hat p_s^{(-k)}(\overline{\bm W}_{i,s-1})\bigr|^2 \biggr] = o_p(N^{-\frac{1}{2}}).
\end{align*}
Hence,
\begin{align*}
\sup_{\bm\delta \in \Delta^{s_{\max}} }|R_2^{(-k)}(\bm\delta)| &\leq \sup_{\bm\delta \in \Delta^{s_{\max}} } \bigl| R_{2,1}^{(-k)}(\bm\delta)\bigr| + 
\sup_{\bm\delta \in \Delta^{s_{\max}} } \bigl| R_{2,2}^{(-k)}(\bm\delta)\bigr|
+
\sup_{\bm\delta \in \Delta^{s_{\max}} } \bigl| R_{2,3}^{(-k)}(\bm\delta)\bigr| = o_p(N^{-\frac{1}{2}}),
\end{align*}
which concludes the proof.
\qed

\subsection{Lemmas for Theorem~\ref{asymp_normal}}
\subsubsection{Lemma 1}
\begin{lemma}\label{lemma1}
\begin{equation*}
\begin{aligned}
    &\sup_{\bm\delta \in \Delta^{s_{\max}} }\biggl| \sqrt{N}\frac{\hat \Psi_n(\bm\delta) - \Psi(\bm\delta)}{{\sigma}(\bm\delta)} - \mathbb{G}_n(\bm\delta) \biggr| \leq \sup_{\bm\delta \in \Delta^{s_{\max}} }\biggl|  \frac{\sqrt{N}}{\sigma(\bm\delta)}\frac{1}{K}\sum_{k = 1}^K \biggl( R_1^{(-k)}(\bm\delta) + R_2^{(-k)}(\bm\delta) \biggr) \biggr| + o_p(1),
\end{aligned}
\end{equation*}
where $\tilde{\psi}^*$ and $\zeta$ are defined in
Equation~\ref{kennedy_lemma3}~and~\eqref{augmentation_term}, respectively, and
\begin{align*}
    R_1^{(-k)}(\bm\delta) & := \int \biggl[ \tilde{\psi}^*(\cD_i;
        \bm\delta, \{ \hat\pi_s^{(-k)}, \hat m_s^{(-k)}, \hat
        p_s^{(-k)}, \hat\boldf^{[s],(-k)}\}_{s = 1}^{S_i}, \hat\Psi) +
        \hat\Psi(\bm\delta; \hat Q^{(-k)}) - \Psi(\bm\delta, \hat{Q}^{(-k)})\biggr] dF(\cD_i),\\
    R_2^{(-k)}(\bm\delta) &:=  \int \biggl[\zeta(\cD_i; \bm\delta, \{ \hat \pi_s^{(-k)}, \hat m_s^{(-k)}, \widehat{\tilde m}_s^{(-k)}, \hat p_s^{(-k)}, \hat\boldf^{[s],(-k)}\}_{s = 1}^{S_i}) + \Psi(\bm\delta; \hat Q^{(-k)}) - \Psi(\bm\delta; Q) \biggr] dF(\cD_i).
\end{align*}
\end{lemma}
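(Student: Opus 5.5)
This is the standard cross-fitting decomposition, following the proof of Theorem 3 in \cite{kennedy_nonparametric_2019}. Write $\mathbb{P}_n^k$ for the empirical measure on fold $k$ and $\mathbb{P}$ for the true law. Write $\hat\psi^{(-k)}_{\bm\delta}$ for the influence function evaluated at the fold-$k$ nuisance estimates with $\Psi$ removed, and $\psi_{\bm\delta}$ for its true-nuisance counterpart. The estimating equation \eqref{est_eq} is linear in $\Psi$, so it gives the closed form $\hat\Psi(\bm\delta)=K^{-1}\sum_k \mathbb{P}_n^k \hat\psi^{(-k)}_{\bm\delta}$. Subtracting $\Psi(\bm\delta)$ and adding and subtracting terms then yields
\begin{align*}
\hat\Psi(\bm\delta)-\Psi(\bm\delta)
= \frac1K\sum_{k}\Bigl\{(\mathbb{P}_n^k-\mathbb{P})\psi_{\bm\delta}
+(\mathbb{P}_n^k-\mathbb{P})\bigl(\hat\psi^{(-k)}_{\bm\delta}-\psi_{\bm\delta}\bigr)
+\bigl(\mathbb{P}\hat\psi^{(-k)}_{\bm\delta}-\Psi(\bm\delta)\bigr)\Bigr\}.
\end{align*}
The first summand, multiplied by $\sqrt N/\sigma(\bm\delta)$ and averaged over folds, is exactly $\mathbb{G}_n(\bm\delta)$. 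The third summand is the conditional bias.

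For the conditional bias, use the decomposition of the influence function from the proof of Theorem~\ref{theorem_if}: $\psi=\tilde\psi^*+\zeta$ (up to centering). Insert the intermediate functional $\Psi(\bm\delta;\hat Q^{(-k)})$, the estimand under the estimated intervention. Then
\[
\mathbb{P}\hat\psi^{(-k)}_{\bm\delta}-\Psi(\bm\delta;Q)
=\mathbb{P}\bigl[\tilde\psi^*(\hat\cdot)+\hat\Psi-\Psi(\bm\delta;\hat Q^{(-k)})\bigr]
+\mathbb{P}\bigl[\zeta(\hat\cdot)+\Psi(\bm\delta;\hat Q^{(-k)})-\Psi(\bm\delta;Q)\bigr],
\]
and these two pieces are $R_1^{(-k)}$ and $R_2^{(-k)}$. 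The $\hat\Psi$ appearing inside $\tilde\psi^*$ cancels by construction. This part is algebraic bookkeeping, and it delivers the displayed bound once it is divided by $\sigma(\bm\delta)$ and the supremum is taken. The infimum condition in Assumption~\ref{reg_cond} keeps the division harmless.

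The main obstacle is showing that the empirical-process term is $o_p(1)$ uniformly over $\bm\delta$:
\[
\sup_{\bm\delta}\sqrt N\,\bigl|(\mathbb{P}_n^k-\mathbb{P})(\hat\psi^{(-k)}_{\bm\delta}-\psi_{\bm\delta})\bigr|/\sigma(\bm\delta)=o_p(1).
\]
First condition on the training folds. The nuisance estimates are then fixed functions, so for fixed $\bm\delta$ Chebyshev's inequality gives the bound $O_p(\|\hat\psi^{(-k)}_{\bm\delta}-\psi_{\bm\delta}\|_{2})$. This $L_2$ distance tends to zero by three facts: the $L_2$ consistency rates in Assumption~\ref{reg_cond}; boundedness of $m_s$, $\hat m_s$, $\widehat{\tilde m}_s$, $Y$ and of the weights (stable weight and sample bounded overlap); and Lipschitz continuity of $\pi_s$, which passes the deconfounder errors $\hat\boldf-\boldf$ to the propensity term.

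To make the bound uniform in $\bm\delta$, use the fact that the difference class $\{\hat\psi^{(-k)}_{\bm\delta}-\psi_{\bm\delta}:\bm\delta\in\Delta^{s_{\max}}\}$ is Lipschitz in $\bm\delta$ with a bounded envelope. This follows from the same derivative bounds used to prove \eqref{statement2}, now applied to the estimated nuisances using the sample regularity conditions. Consequently its bracketing integral is finite. A maximal inequality conditional on the training sample, such as Lemma 2 of \cite{kennedy_nonparametric_2019} / Theorem 2.14.2 of \cite{van1996weak}, bounds the supremum by the envelope size times the entropy integral. It therefore vanishes because the $L_2$ radius tends to zero.

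Finally, the events on which the weights and overlap bounds hold have probability tending to one. Restricting to these events only costs an $o_p(1)$ term.
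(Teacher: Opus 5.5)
Your proposal is correct and follows essentially the same route as the paper. The closed-form solution of the estimating equation splits $\sqrt{N}(\hat\Psi(\bm\delta)-\Psi(\bm\delta))/\sigma(\bm\delta)$ into $\mathbb{G}_n(\bm\delta)$, a cross-fitted empirical-process remainder, and the conditional bias; the bias is then split into $R_1^{(-k)}$ and $R_2^{(-k)}$ exactly as in the paper, by writing $\psi=\tilde\psi^*+\zeta$ and adding and subtracting $\Psi(\bm\delta;\hat Q^{(-k)})$.

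For the empirical-process remainder, you condition on the training folds and combine a pointwise Chebyshev bound with a bracketing maximal inequality for the difference class. This is more explicit than the paper, which only invokes the finite bracketing integral of the oracle class. Note, though, that both versions tacitly require the estimated $\hat m_s(\cdot;\bm\delta_{\underline{s}})$ and $\widehat{\tilde m}_s(\cdot;\bm\delta_{-s})$ to be Lipschitz in $\bm\delta$, and this is not among the conditions of Assumption~\ref{reg_cond}.
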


\begin{proof}
  Let
  $\varphi(\cD_i; \bm\delta, \{ \pi_s, m_{s}, \tilde m_{s}, p_s, \boldf^{[s]} \}_{s = 1}^{S_i})$ be an uncentered influence function, which is defined as
\begin{align}
    \varphi(\cD_i; \bm\delta, \{ \pi_s, m_{s}, \tilde m_{s}, p_s, \boldf^{[s]}
            \}_{s = 1}^{S_i}) = \psi(\cD_i; \bm\delta, \{ \pi_s, m_{s}, \tilde m_{s}, p_s, \boldf^{[s]}
      \}_{s = 1}^{S_i}, \Psi) + \Psi(\bm\delta). \label{uncentered_if}
\end{align}
Since we solve Equation~\eqref{est_eq} to obtain the estimator $\hat\Psi(\bm\delta)$, we have
\begin{align*}
    \hat\Psi(\bm\delta) = \frac{1}{nK}\sum_{k = 1}^K \sum_{I(i) = k} \varphi(\cD_i; \bm\delta, \{ \hat\pi_s^{(-k)}, \hat m_{s}^{(-k)}, \widehat{\tilde m}_{s}^{(-k)}, \hat p_s^{(-k)}, \hat\boldf^{[s],(-k)}
      \}_{s = 1}^{S_i}).
\end{align*}
Thus, using a strategy similar to that of
\cite{kennedy_nonparametric_2019}, we have
\begin{align*}
&\frac{\sqrt{N}\{\hat \Psi_n(\bm\delta) -
                 \Psi(\bm\delta)\}}{{\sigma}(\bm\delta)} -
                 \mathbb{G}_n(\bm\delta) \\
= \ & \frac{\sqrt{N}\{\hat \Psi_n(\bm\delta) -
                 \Psi(\bm\delta)\}}{{\sigma}(\bm\delta)} -
                 \frac{1}{\sqrt{N}} \sum_{i = 1}^N \frac{ \psi(\cD_i; \bm\delta, \{ \pi_s, m_{s}, \tilde m_{s}, p_s, \boldf^{[s]}
      \}_{s= 1}^{S_i}, \Psi) }{\sigma(\bm\delta) } \\
= \ & \frac{\sqrt{N}}{\sigma(\bm\delta)}\left[ \{\hat \Psi_n(\bm\delta) -
                 \Psi(\bm\delta)\} - \frac{1}{N} \sum_{i = 1}^N \psi(\cD_i; \bm\delta, \{ \pi_s, m_{s}, \tilde m_{s}, p_s, \boldf^{[s]}
      \}_{s= 1}^{S_i}, \Psi) \right] \\
= \ & \frac{\sqrt{N}}{\sigma(\bm\delta)}\left[ \{\hat \Psi_n(\bm\delta) -
                 \Psi(\bm\delta)\} - \frac{1}{N} \sum_{i = 1}^N \{\varphi(\cD_i; \bm\delta, \{ \pi_s, m_{s}, p_s, \boldf^{[s]}
            \}_{s = 1}^{S_i}) - \Psi(\bm\delta) \} \right] \\
= \ &  \frac{\sqrt{N}}{\sigma(\bm\delta)}\frac{1}{K}\sum_{k = 1}^K \left[ \frac{1}{n} \sum_{I(i) = k} \left\{\varphi(\cD_i; \bm\delta, \{ \hat\pi_s^{(-k)}, \hat m_{s}^{(-k)}, \hat p_s^{(-k)}, \hat\boldf^{[s],(-k)}
      \}_{s = 1}^{S_i}) - \varphi(\cD_i; \bm\delta, \{ \pi_s, m_{s}, \tilde m_{s}, p_s, \boldf^{[s]}
            \}_{s = 1}^{S_i}) \right\} \right] \\
= \ &  \frac{\sqrt{N}}{\sigma(\bm\delta)}\frac{1}{K}\sum_{k = 1}^K \left[ \frac{1}{n} \sum_{I(i) = k} \left\{\varphi(\cD_i; \bm\delta, \{ \hat\pi_s^{(-k)}, \hat m_{s}^{(-k)}, \hat p_s^{(-k)}, \hat\boldf^{[s],(-k)}
      \}_{s = 1}^{S_i}) - \varphi(\cD_i; \bm\delta, \{ \pi_s, m_{s}, \tilde m_{s}, p_s, \boldf^{[s]}
            \}_{s = 1}^{S_i}) \right\} \right. \\
      &\hspace{1in} - \E\left[\varphi(\cD_i; \bm\delta, \{ \hat\pi_s^{(-k)}, \hat m_{s}^{(-k)}, \widehat{\tilde m}_{s}^{(-k)}, \hat p_s^{(-k)}, \hat\boldf^{[s],(-k)}
      \}_{s = 1}^{S_i}) - \varphi(\cD_i; \bm\delta, \{ \pi_s, m_{s}, \tilde m_{s}, p_s, \boldf^{[s]}
            \}_{s = 1}^{S_i})\right] \Biggr]\\
      & + \frac{\sqrt{N}}{\sigma(\bm\delta)}\frac{1}{K}\sum_{k = 1}^K \E\left[\varphi(\cD_i; \bm\delta, \{ \hat\pi_s^{(-k)}, \hat m_{s}^{(-k)}, \widehat{\tilde m}_{s}^{(-k)}, \hat p_s^{(-k)}, \hat\boldf^{[s],(-k)}
      \}_{s = 1}^{S_i}) - \varphi(\cD_i; \bm\delta, \{ \pi_s, m_{s}, \tilde m_{s}, p_s, \boldf^{[s]}
            \}_{s = 1}^{S_i})\right],
\end{align*}
where the first equality is by the definition of
$\mathbb{G}_n(\bm\delta)$, the third equality is due to the definition
of uncentered influence function given in
Equation~\eqref{uncentered_if}, the fifth equality is by adding and
subtracting the same term.

We use the same argument as the one used in the proof of Theorem 3 of
\cite{kennedy_nonparametric_2019}.  Since we have already shown that
the bracketing integral is finite, we have,
\begin{align*}
      &\sup_{\bm\delta \in
  \Delta^{s_{\max}} } \left| \frac{\sqrt{N}}{\sigma(\bm\delta)}\frac{1}{K}\sum_{k = 1}^K \left[ \frac{1}{n} \sum_{I(i) = k} \left\{\varphi(\cD_i; \bm\delta, \{ \hat\pi_s^{(-k)}, \hat m_{s}^{(-k)}, \hat p_s^{(-k)}, \hat\boldf^{[s],(-k)}
      \}_{s = 1}^{S_i}) - \varphi(\cD_i; \bm\delta, \{ \pi_s, m_{s}, \tilde m_{s}, p_s, \boldf^{[s]}
            \}_{s = 1}^{S_i}) \right\} \right.\right. \\
      &\qquad - \E[\varphi(\cD_i; \bm\delta, \{ \hat\pi_s^{(-k)}, \hat m_{s}^{(-k)}, \widehat{\tilde m}_{s}^{(-k)}, \hat p_s^{(-k)}, \hat\boldf^{[s],(-k)}
      \}_{s = 1}^{S_i}) - \varphi(\cD_i; \bm\delta, \{ \pi_s, m_{s}, \tilde m_{s}, p_s, \boldf^{[s]}
        \}_{s = 1}^{S_i})] \biggr] \biggr| \\
  = & \ o_p(1)
\end{align*}
Therefore, by triangular inequality,
\begin{align*}
&\sup_{\bm\delta \in \Delta^{s_{\max}} }\biggl| \sqrt{N}\frac{\hat \Psi_n(\bm\delta) - \Psi(\bm\delta)}{{\sigma}(\bm\delta)} - \mathbb{G}_n(\bm\delta) \biggr|\\
    &\leq \sup_{\bm\delta \in \Delta^{s_{\max}} }\biggl|  \frac{\sqrt{N}}{\sigma(\bm\delta)}\frac{1}{K}\sum_{k = 1}^K \E[\varphi(\cD_i; \bm\delta, \{ \hat\pi_s^{(-k)}, \hat m_{s}^{(-k)}, \widehat{\tilde m}_{s}^{(-k)}, \hat p_s^{(-k)}, \hat\boldf^{[s],(-k)}
      \}_{s = 1}^{S_i}) \\
    &\hspace{2in} - \varphi(\cD_i; \bm\delta, \{ \pi_s, m_{s}, \tilde m_{s}, p_s, \boldf^{[s]}
            \}_{s = 1}^{S_i})] \biggr| + o_p(1).
\end{align*}
Finally, we have,
\begin{align*}
       & \frac{\sqrt{N}}{\sigma(\bm\delta)}\frac{1}{K}\sum_{k = 1}^K \E[\varphi(\cD_i; \bm\delta, \{ \hat\pi_s^{(-k)}, \hat m_{s}^{(-k)}, \widehat{\tilde m}_{s}^{(-k)}, \hat p_s^{(-k)}, \hat\boldf^{[s],(-k)}
      \}_{s = 1}^{S_i}) - \varphi(\cD_i; \bm\delta, \{ \pi_s, m_{s}, \tilde m_{s}, p_s, \boldf^{[s]}
            \}_{s = 1}^{S_i})] \\
    = & \frac{\sqrt{N}}{\sigma(\bm\delta)}\frac{1}{K}\sum_{k = 1}^K\int \left[
    \varphi(\cD_i; \bm\delta, \{ \hat\pi_s^{(-k)}, \hat m_{s}^{(-k)}, \widehat{\tilde m}_{s}^{(-k)}, \hat p_s^{(-k)}, \hat\boldf^{[s],(-k)}
      \}_{s = 1}^{S_i})
      - \varphi(\cD_i; \bm\delta, \{ \pi_s, m_{s}, \tilde m_{s}, p_s, \boldf^{[s]}
      \}_{s = 1}^{S_i})
    \right] dF(\cD_i)\\
    = & \frac{\sqrt{N}}{\sigma(\bm\delta)}\frac{1}{K}\sum_{k = 1}^K\int \left[
    \tilde{\psi}^*(\cD_i; \bm\delta, \{ \hat\pi_s^{(-k)},
      \hat m_s^{(-k)}, \hat p_s^{(-k)}, \hat\boldf^{[s],(-k)}\}_{s = 1}^{S_i}, \hat\Psi) + \hat\Psi(\bm\delta, \hat Q^{(-k)}) \right.\\
    &\left.  \hspace{1in}  - \psi(\cD_i; \bm\delta, \{ \pi_s, m_{s}, \tilde m_{s}, p_s, \boldf^{[s]}
      \}_{s= 1}^{S_i}, \Psi)- \Psi(\bm\delta, Q)
    \right] dF(\cD_i)\\
    = & \frac{\sqrt{N}}{\sigma(\bm\delta)}\frac{1}{K}\sum_{k = 1}^K\int \left[
    \tilde{\psi}^*(\cD_i; \bm\delta, \{ \hat\pi_s^{(-k)},
      \hat m_s^{(-k)}, \hat p_s^{(-k)}, \hat\boldf^{[s],(-k)}\}_{s = 1}^{S_i}, \hat\Psi) + \hat\Psi(\bm\delta, \hat Q^{(-k)})  - \Psi(\bm\delta, Q)
    \right] dF(\cD_i)\\
    = & \frac{\sqrt{N}}{\sigma(\bm\delta)}\frac{1}{K}\sum_{k = 1}^K\int
      \biggl[ \tilde{\psi}^*(\cD_i; \bm\delta, \{ \hat\pi_s^{(-k)},
      \hat m_s^{(-k)}, \hat p_s^{(-k)}, \hat\boldf^{[s],(-k)}\}_{s = 1}^{S_i}, \hat\Psi) \\
   & \hspace{1in} + \zeta(\cD_i;  \bm\delta,  \{ \hat \pi_s^{(-k)}, \hat m_s^{(-k)}, \widehat{\tilde m}_s^{(-k)}, \hat p_s^{(-k)}, \hat\boldf^{[s],(-k)}\}_{s = 1}^{S_i} )+ \hat\Psi(\bm\delta; \hat Q^{(-k)}) - \Psi(\bm\delta; Q) \biggr] dF(\cD_i) \\
    = &  \frac{\sqrt{N}}{\sigma(\bm\delta)}\frac{1}{K}\sum_{k = 1}^K
        \biggl\{\int \biggl[ \tilde{\psi}^*(\cD_i;
        \bm\delta, \{ \hat\pi_s^{(-k)}, \hat m_s^{(-k)}, \hat
        p_s^{(-k)}, \hat\boldf^{[s],(-k)}\}_{s = 1}^{S_i}, \hat\Psi) +
        \hat\Psi(\bm\delta; \hat Q^{(-k)}) - \Psi(\bm\delta, \hat{Q}^{(-k)})\biggr] dF(\cD_i)\\
    &\hspace{1in} + \int \biggl[\zeta(\cD_i; \bm\delta, \{ \hat \pi_s^{(-k)}, \hat m_s^{(-k)}, \widehat{\tilde m}_s^{(-k)}, \hat p_s^{(-k)}, \hat\boldf^{[s],(-k)}\}_{s = 1}^{S_i}) + \Psi(\bm\delta; \hat Q^{(-k)}) - \Psi(\bm\delta; Q) \biggr] dF(\cD_i) \biggr\}
\end{align*}
where the second equality is by the definition of an uncentered
influence function
$\varphi(\cD_i; \bm\delta, \{ \pi_s, m_{s}, \tilde m_{s}, p_s, \boldf^{[s]} \}_{s = 1}^{S_i})$, the third equality holds because the expectation of the
centered influence function is zero, the fourth equality is due the
definition of the augmentation term, and the last equality results by
adding and subtracting $\Psi(\bm\delta; \hat Q^{(-k)})$.
\end{proof}

\subsubsection{Lemma 2}
\begin{lemma}\label{lemma2} There exists a constant $C < \infty$,
which depends on $\bm\delta$ only through $\delta_\ell$ and
$\delta_u$, that satisfies
\begin{align*}
&\sup_{\bm\delta \in \Delta^{s_{\max}}}\left|\int \left[ \tilde{\psi}^*\bigl(\cD_i; \bm\delta, \{ \hat\pi_s, \hat m_s, \hat p_s, \hat\boldf^{[s]}\}_{s = 1}^{S_i}, \hat\Psi\bigr) + \hat\Psi(\bm\delta; \hat Q) - \Psi(\bm\delta; \hat Q)\right]dF(\cD_i)
    \right| \\
    &\leq \  C \sum_{s = 1}^{s_{\max}} \sum_{s^\prime = 1}^{s}
    \sup_{\bm\delta \in \Delta^{s_{\max}}}\E\biggl[
        \bigl|
        m_s^*(\overline{\bW}_{i,s-1}, W_{is}, \overline{\bm F}_{is}; \bm\delta_{\underline{s}})
        -
        \hat m_s(\overline{\bW}_{i,s-1}, W_{is}, \widehat{\overline{\bm F}}_{is}; \bm\delta_{\underline{s}})
        \bigr|^2
    \biggr]^{\frac{1}{2}} \\
    &\hspace{2.5in} \times
    \E\left[
        \bigl|
        \hat\pi_{s^\prime}(\widehat{\overline{\bm{H}}}_{is^\prime})
        -
        \pi_{s^\prime}(\overline{\bm{{H}}}_{is^\prime})
        \bigr|^2
    \right]^{\frac{1}{2}},
\end{align*}
where $m_s^*$ is the outcome model under the estimated
stochastic intervention.
\end{lemma}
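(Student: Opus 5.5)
This is the standard sequential (Robins-type) telescoping argument for a longitudinal AIPW-type influence function, carried out with the estimated intervention $\hat Q$ treated as fixed. Since $\hat Q$ is fixed, the target is $\Psi(\bm\delta;\hat Q)$ and the relevant true regressions are the $m_s^*$ of Equation~\eqref{outcome_true}. Because the weights $\hat\omega_s$ are built from $\hat p_s$, they are exactly the density ratios $d\hat Q_s/dP(W_{is}\mid\cdot)$ once the true propensity $\pi_s$ replaces $\hat\pi_s$.

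The first step rewrites $\tilde\psi^*$ at the estimated nuisances as a telescoping sum. Set $\hat m_{S_i+1}:=Y_i$ and define $\hat\eta_s:=\int \hat m_s(\overline{\bW}_{i,s-1},w,\widehat{\overline{\bm F}}_{is};\bm\delta_{\underline{s}})\,d\hat Q_s(w)$. Then
\[
\tilde\psi^*+\hat\Psi=\hat\eta_1+\sum_{s=1}^{S_i}\Bigl(\prod_{s'=1}^{s}\hat\omega_{s'}\Bigr)\bigl\{\hat\eta_{s+1}-\hat m_s\bigr\}.
\]
This identity uses the definition of the pseudo outcome \eqref{pseudo_outcome}: the $\hat Q_{s+1}$-average of $\hat m_{s+1}$ is precisely the pseudo outcome that $\hat m_s$ targets.

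The second step takes expectations and adds and subtracts the true-propensity weights $\omega^\dagger_{s'}$, meaning $\hat\omega_{s'}$ with $\hat\pi_{s'}$ replaced by $\pi_{s'}(\overline{\bm H}_{is'})$. With the true weights, iterated expectations over $(W_{is},\bR_{i,s+1})$ given the history turn $\E[\prod\omega^\dagger\,\hat\eta_{s+1}]$ into $\E[\prod_{s'<s}\omega^\dagger \int m^\dagger\, d\hat Q_s]$, where $m^\dagger$ is the regression of $\hat\eta_{s+1}$. A backward induction from $s=S_i$ down to $1$ then shows that the terms with purely true weights sum to $\Psi(\bm\delta;\hat Q)$ plus terms in $\hat m_s-m_s^*$ that cancel exactly. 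This is the usual property that the oracle-weighted version is unbiased for any outcome regressions. The step relies on the sequential independences \eqref{ignorability}--\eqref{sufficiency_U_for_R} and on the deconfounder mean-independence \eqref{deconfounder_independence}, so that conditioning on the deconfounder history $\overline{\bm H}_{is}$ is equivalent to conditioning on $\overline{\bR}_{is}$.

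The third step bounds the remainder. What is left has the form $\sum_s\E\bigl[(\prod_{s'\le s}\hat\omega_{s'}-\prod_{s'\le s}\omega^\dagger_{s'})(m_s^*-\hat m_s)\bigr]$ after conditioning. Expand the product difference telescopically as $\sum_{s'\le s}\bigl(\prod_{r<s'}\omega^\dagger_r\bigr)(\hat\omega_{s'}-\omega^\dagger_{s'})\bigl(\prod_{r>s'}\hat\omega_r\bigr)$. Each weight is bounded uniformly in $\delta\in\Delta$ by the bounded relative overlap assumption (Assumption~\ref{overlap}) and the stable-weight condition (Assumption~\ref{reg_cond}), as in the derivative bounds of Theorem~\ref{asymp_normal}'s proof. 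A single weight difference satisfies
\[
|\hat\omega_{s'}-\omega^\dagger_{s'}|\lesssim|\hat\pi_{s'}-\pi_{s'}|
\]
by sample bounded overlap, with a constant that depends on $\bm\delta$ only through $\delta_\ell$ and $\delta_u$. Applying Cauchy--Schwarz to each $(s,s')$ pair and taking the supremum over $\bm\delta$ gives the stated double sum.

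The main obstacle is the second step. Deconfounders fitted for different regressions (the superscript $[t]$) appear in the weights and in the outcome models, so I must check that the cancellation of the $\hat m_s$ terms survives conditioning on $\overline{\bm H}_{is}$ instead of the true confounders. I would handle this by invoking Definition~\ref{def:deconfounder} at each backward step, exactly as in the proof of Theorem~\ref{identification}. The mismatch between $\widehat{\overline{\bm F}}$ and $\overline{\bm F}$ inside $\hat m_s$ is then absorbed into the $m_s^*-\hat m_s$ norm.
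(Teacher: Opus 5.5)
Your strategy is essentially the paper's. For each fixed $\tilde s$, the paper quotes Lemma~5 of \citet{kennedy_nonparametric_2019} to write the remainder as a double sum over $s'\le s$ of $(m_s^*-\hat m_s)(\hat\pi_{s'}-\pi_{s'})$ times weight products. It then bounds each weight factor by a constant in $c,\delta_\ell,\delta_u$ using sample overlap, applies Cauchy--Schwarz, and sums over $\tilde s$ with weights $\Pr(S_i=\tilde s)$. Its inner sum over $s'$ is a one-weight-at-a-time telescoping of $1-\prod_{r\le s}\pi_r/\hat\pi_r$, which is the same device you apply to $\prod_{r\le s}\hat\omega_r-\prod_{r\le s}\omega_r^\dagger$. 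Deriving the representation yourself instead of citing it is a fine substitute.

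One step does not hold as written. After conditioning on $(\overline{\bm H}_{is},W_{is})$, the factor $\hat\eta_{s+1}-\hat m_s$ becomes $m_s^\dagger-\hat m_s$, where $m_s^\dagger$ is the regression of $\hat\eta_{s+1}$ that you introduced yourself. By contrast, $m_s^*$ in \eqref{outcome_true} is the regression of $\int m_{s+1}^*\,d\hat Q_{s+1}$. For $s<S_i$ the two differ by $\E[\int(\hat m_{s+1}-m_{s+1}^*)\,d\hat Q_{s+1}\mid\overline{\bm H}_{is},W_{is}]$.

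Hence $\sum_s\E[(\prod_{r\le s}\hat\omega_r-\prod_{r\le s}\omega_r^\dagger)(m_s^*-\hat m_s)]$ is not an identity. Take $S_i=2$, no covariates, $\pi_1=\pi_2=\hat\pi_2=1/2$, uniform $\hat Q_1,\hat Q_2$, $\hat\pi_1=1/4$, $\E[Y_i\mid\overline{\bW}_{i2}]=0$, $\hat m_1=0$ and $\hat m_2=W_{i1}$. Then the uncentered score reduces to $\hat\omega_1Y_i$, whose mean is $0=\Psi(\bm\delta;\hat Q)$, but your expression equals $-1/2$.

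The repair is cheap, and either of two routes works:
\begin{itemize}
\item Keep $m_s^\dagger$ and bound $\|m_s^\dagger-m_s^*\|_2\lesssim\|\hat m_{s+1}-m_{s+1}^*\|_2$, using Jensen, the change of measure from $d\hat Q_{s+1}$ to $\omega_{s+1}^\dagger\,dP$, and boundedness of $\omega_{s+1}^\dagger$. This produces products $\|\hat\pi_{s'}-\pi_{s'}\|_2\,\|\hat m_{s+1}-m_{s+1}^*\|_2$ with $s'\le s$, which already appear in the lemma's double sum.
\item Swap the weights one stage at a time. This gives the exact remainder $\sum_s\E[(\prod_{r<s}\hat\omega_r)(\hat\omega_s-\omega_s^\dagger)(m_s^*-\hat m_s)]$, which needs only the diagonal terms $s'=s$.
\end{itemize}

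Separately, sample overlap alone does not give $|\hat\omega_{s'}-\omega_{s'}^\dagger|\lesssim|\hat\pi_{s'}-\pi_{s'}|$. This difference equals $\hat\omega_{s'}|\hat\pi_{s'}-\pi_{s'}|$ divided by the true probability of the observed treatment value. Bounding that denominator requires bounded relative overlap together with the fact that $p_{s'}$ takes finitely many values, so it is bounded away from $0$ and $1$ wherever it is nondegenerate. The resulting constant depends on these population quantities but not on $\bm\delta$ beyond $\delta_\ell,\delta_u$, which the lemma allows.
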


\begin{proof}
By Lemma~5 of
\cite{kennedy_nonparametric_2019}, for each
fixed $\tilde s \in \mathcal{S}$, we have,
\begin{align*}
    &\int \left[ \tilde{\psi}^*_{\tilde s}\bigl(\cD_i; \bm\delta, \{ \hat\pi_s, \hat m_s, \hat p_s, \hat\boldf^{[s]}\}_{s = 1}^{\tilde s}, \hat\Psi_{\tilde s}\bigr)\right]dF(\cD_i \mid S_i = \tilde s) + \hat\Psi_{\tilde s}(\bm\delta; \hat Q) - \Psi_{\tilde s}(\bm\delta; \hat Q) \\
    = &  \sum_{s = 1}^{\tilde s} \sum_{s^\prime = 1}^{s} \int  
    \left[
        m_s^*(\overline{\bW}_{i,s-1}, W_{is}, \overline{\bm F}_{is}; \bm\delta_{\underline{s}})
        -
        \hat m_s(\overline{\bW}_{i,s-1}, W_{is}, \widehat{\overline{\bm F}}_{is}; \bm\delta_{\underline{s}})
        \right]
        \frac{ \{\hat  \pi_{s^\prime}(\widehat{\overline{\bm H}}_{is^\prime}) -  \pi_{s^\prime}(\overline{\bm H}_{is^\prime})\} (2W_{is^\prime} - 1) }{ W_{is^\prime}\hat  \pi_{s^\prime}(\widehat{\overline{\bm H}}_{is^\prime}) + (1 - W_{is^\prime})(1 - \hat  \pi_{s^\prime}(\widehat{\overline{\bm H}}_{is^\prime})) } \\
    & \hspace{.5in} \times
    \left[
        \prod_{r=1}^{s^\prime-1}
         \frac{W_{ir} \pi_r({\overline{\bm{H}}}_{ir}) + (1 - W_{ir}) \{ 1 - \pi_r({\overline{\bm{H}}}_{ir})\}
        }{W_{ir}\hat \pi_r(\widehat{\overline{\bm{H}}}_{ir}) + (1 - W_{ir}) \{ 1 - \hat \pi_r(\widehat{\overline{\bm{H}}}_{ir})\}}
        \right]\\
    & \hspace{.25in} \times 
    \left[
        \prod_{r=1}^{s}
        \frac{\delta_r W_{ir} \hat p_r(\overline{\bm W}_{i,r-1}) + (1 - W_{ir}) \{ 1 -  \hat p_r(\overline{\bm W}_{i,r-1})\} }{ \delta_r \hat p_r(\overline{\bm W}_{i,r-1}) + 1 -  \hat p_r(\overline{\bm W}_{i,r-1})} \,
        dF\bigl(\overline{\bm R}_{ir} \mid \overline{\bm R}_{i,r-1},S_i=\tilde s\bigr)
    \right]\\
     = & \sum_{s = 1}^{\tilde s} \sum_{s^\prime = 1}^{s} \int  
    \left[
        m_s^*(\overline{\bW}_{i,s-1}, W_{is}, \overline{\bm F}_{is}; \bm\delta_{\underline{s}})
        -
        \hat m_s(\overline{\bW}_{i,s-1}, W_{is}, \widehat{\overline{\bm F}}_{is}; \bm\delta_{\underline{s}})
        \right]
        \times
        \left[ \hat  \pi_{s^\prime}(\widehat{\overline{\bm H}}_{is^\prime}) -  \pi_{s^\prime}(\overline{\bm H}_{is^\prime}) \right]\\
    & \hspace{.5in} \times 
        \left[ \frac{ (2W_{is^\prime} - 1) }{ W_{is^\prime}\hat  \pi_{s^\prime}(\widehat{\overline{\bm H}}_{is^\prime}) + (1 - W_{is^\prime})(1 - \hat  \pi_{s^\prime}(\widehat{\overline{\bm H}}_{is^\prime})) } 
    \times 
    \frac{\delta_{s^\prime} W_{is^\prime} \hat p_{s^\prime}(\overline{\bm{W}}_{i,s^\prime-1}) + (1 - W_{is^\prime}) \{ 1 - \hat p_{s^\prime}(\overline{\bm{W}}_{i,s^\prime-1})\}
        }{\delta_{s^\prime} \hat p_{s^\prime}(\overline{\bm{W}}_{i,s^\prime-1}) +  1 - \hat p_{s^\prime}(\overline{\bm{W}}_{i,s^\prime-1}) }\right]\\
    & \hspace{.5in} \times  \prod_{r=1}^{s^\prime-1} \left[
        \frac{W_{ir} \pi_r({\overline{\bm{H}}}_{ir}) + (1 - W_{ir}) \{ 1 - \pi_r({\overline{\bm{H}}}_{ir})\}
        }{W_{ir}\hat \pi_r(\widehat{\overline{\bm{H}}}_{ir}) + (1 - W_{ir}) \{ 1 - \hat \pi_r(\widehat{\overline{\bm{H}}}_{ir})\}}\right.\\
    &\left. \hspace{2in} \times \frac{\delta_r W_{ir} \hat p_r(\overline{\bm W}_{i,r-1}) + (1 - W_{ir}) \{ 1 -  \hat p_r(\overline{\bm W}_{i,r-1})\} }{ \delta_r \hat p_r(\overline{\bm W}_{i,r-1}) + 1 -  \hat p_r(\overline{\bm W}_{i,r-1})}  \right]\\
    & \hspace{.25in} \times 
    \left[
        \prod_{r=s^\prime + 1}^{s}
        \frac{\delta_r W_{ir} \hat p_r(\overline{\bm W}_{i,r-1}) + (1 - W_{ir}) \{ 1 -  \hat p_r(\overline{\bm W}_{i,r-1})\} }{ \delta_r \hat p_r(\overline{\bm W}_{i,r-1}) + 1 -  \hat p_r(\overline{\bm W}_{i,r-1})} \right] \left[  \prod_{r=1}^{s}
        dF\bigl(\overline{\bm R}_{ir} \mid \overline{\bm R}_{i,r-1},    S_i=\tilde s\bigr)
    \right].
\end{align*}
Then, notice that
\begin{align*}
    &\frac{W_{ir} \pi_r({\overline{\bm{H}}}_{ir}) + (1 - W_{ir}) \{ 1 - \pi_r({\overline{\bm{H}}}_{ir})\}
        }{W_{ir}\hat \pi_r(\widehat{\overline{\bm{H}}}_{ir}) + (1 - W_{ir}) \{ 1 - \hat \pi_r(\widehat{\overline{\bm{H}}}_{ir})\}}\times\frac{\delta_r W_{ir} \hat p_r(\overline{\bm W}_{i,r-1}) + (1 - W_{ir}) \{ 1 -  \hat p_r(\overline{\bm W}_{i,r-1})\} }{ \delta_r \hat p_r(\overline{\bm W}_{i,r-1}) + 1 -  \hat p_r(\overline{\bm W}_{i,r-1})}\\
    = \ &  W_{ir}\frac{\pi_r({\overline{\bm{H}}}_{ir}) \{\delta_r \hat p_r(\overline{\bm W}_{i,r-1})\}   }{ \hat\pi_r(\widehat{\overline{\bm{H}}}_{ir}) \{\delta_r \hat p_r(\overline{\bm W}_{i,r-1}) + 1 -  \hat p_r(\overline{\bm W}_{i,r-1})\} } + (1 - W_{ir}) \frac{ 
    \{ 1 - \pi_r({\overline{\bm{H}}}_{ir})\} \{ 1 -  \hat p_r(\overline{\bm W}_{i,r-1})\}
    }{\{\delta_r \hat p_r(\overline{\bm W}_{i,r-1}) + 1 -  \hat p_r(\overline{\bm W}_{i,r-1})\}\{ 1 - \hat\pi_r(\widehat{\overline{\bm{H}}}_{ir}) \} }\\
    \leq \ & W_{ir}\frac{ \delta_r \hat p_r(\overline{\bm W}_{i,r-1})   }{ \hat\pi_r(\widehat{\overline{\bm{H}}}_{ir}) \{\delta_r \hat p_r(\overline{\bm W}_{i,r-1}) + 1 -  \hat p_r(\overline{\bm W}_{i,r-1})\} } + (1 - W_{ir}) \frac{ 
    1 -  \hat p_r(\overline{\bm W}_{i,r-1})
    }{\{\delta_r \hat p_r(\overline{\bm W}_{i,r-1}) + 1 -  \hat p_r(\overline{\bm W}_{i,r-1})\}\{ 1 - \hat\pi_r(\widehat{\overline{\bm{H}}}_{ir}) \} }\\
    \leq \ & W_{ir}\frac{ \delta_r \hat p_r(\overline{\bm W}_{i,r-1})   }{ \hat\pi_r(\widehat{\overline{\bm{H}}}_{ir}) \min\{1, \delta_\ell\} } + (1 - W_{ir}) \frac{ 
    1 -  \hat p_r(\overline{\bm W}_{i,r-1})
    }{ 
    \min\{1, \delta_\ell\}
    \{ 1 - \hat\pi_r(\widehat{\overline{\bm{H}}}_{ir}) \} }\\
    \leq \ & W_{ir}\frac{\delta_r}{c\min\{1, \delta_\ell\}} + (1 - W_{ir}) \frac{ 1
    }{
    c\min\{1, \delta_\ell\}
     } \\
    \leq \ & \frac{\delta_u}{c\min\{1, \delta_\ell\}} + \frac{ 1
    }{
    c\min\{1, \delta_\ell\}
     },
\end{align*}
where %the first inequality is because $\pi_r$ and $1 - \pi_r$ is
      %upper bounded by 1 by definition, the second inequality is
      %because $\delta_r \in [\delta_u, \delta_\ell]$,
the third inequality is by
Assumption~\ref{reg_cond}. %and the last line is because $W_{ir}$ is binary.
Similarly,
\begin{align*}
    &\frac{ (2W_{is^\prime} - 1) }{ W_{is^\prime}\hat  \pi_{s^\prime}(\widehat{\overline{\bm H}}_{is^\prime}) + (1 - W_{is^\prime})(1 - \hat  \pi_{s^\prime}(\widehat{\overline{\bm H}}_{is^\prime})) } 
    \times 
    \frac{\delta_{s^\prime} W_{is^\prime} \hat p_{s^\prime}(\overline{\bm{W}}_{i,s^\prime-1}) + (1 - W_{is^\prime}) \{ 1 - \hat p_{s^\prime}(\overline{\bm{W}}_{i,s^\prime-1})\}
        }{\delta_{s^\prime} \hat p_{s^\prime}(\overline{\bm{W}}_{i,s^\prime-1}) +  1 - \hat p_{s^\prime}(\overline{\bm{W}}_{i,s^\prime-1})}\\
    \leq \ & \frac{ 1 }{ W_{is^\prime}\hat  \pi_{s^\prime}(\widehat{\overline{\bm H}}_{is^\prime}) + (1 - W_{is^\prime})(1 - \hat  \pi_{s^\prime}(\widehat{\overline{\bm H}}_{is^\prime})) } 
    \times 
    \frac{\delta_{s^\prime} W_{is^\prime} \hat p_{s^\prime}(\overline{\bm{W}}_{i,s^\prime-1}) + (1 - W_{is^\prime}) \{ 1 - \hat p_{s^\prime}(\overline{\bm{W}}_{i,s^\prime-1})\}
        }{\delta_{s^\prime} \hat p_{s^\prime}(\overline{\bm{W}}_{i,s^\prime-1}) +  1 - \hat p_{s^\prime}(\overline{\bm{W}}_{i,s^\prime-1})}\\
    = \ &  W_{is^\prime}\frac{ \delta_{s^\prime} \hat p_{s^\prime}(\overline{\bm{W}}_{i,s^\prime-1}) }{ \hat  \pi_{s^\prime}(\widehat{\overline{\bm H}}_{is^\prime}) \{\delta_{s^\prime} \hat p_{s^\prime}(\overline{\bm{W}}_{i,s^\prime-1}) +  1 - \hat p_{s^\prime}(\overline{\bm{W}}_{i,s^\prime-1})\}  }\\
    &\qquad + (1 - W_{is^\prime}) \frac{1 - \hat p_{s^\prime}(\overline{\bm{W}}_{i,s^\prime-1})}{ 
    \{ 1- \hat  \pi_{s^\prime}(\widehat{\overline{\bm H}}_{is^\prime})\} \{\delta_{s^\prime} \hat p_{s^\prime}(\overline{\bm{W}}_{i,s^\prime-1}) +  1 - \hat p_{s^\prime}(\overline{\bm{W}}_{i,s^\prime-1})\}
    }\\
    \leq \ &  W_{is^\prime} \frac{ \delta_{s^\prime} \hat p_{s^\prime}(\overline{\bm{W}}_{i,s^\prime-1}) }{ \hat  \pi_{s^\prime}(\widehat{\overline{\bm H}}_{is^\prime}) \min\{1, \delta_\ell\}  } + (1 - W_{is^\prime})  \frac{  1 - \hat p_{s^\prime}(\overline{\bm{W}}_{i,s^\prime-1}) }{ \{1 - \hat  \pi_{s^\prime}(\widehat{\overline{\bm H}}_{is^\prime})\} \min\{1, \delta_\ell\}  }\\
   \leq \ & W_{is^\prime} \frac{\delta_{s^\prime}}{c\min\{1, \delta_\ell\} } + \{1 - W_{is^\prime} \} \frac{1}{c\min\{1, \delta_\ell\} }\\
   \leq \ &  \frac{\delta_u}{c\min\{1, \delta_\ell\}} + \frac{ 1
    }{
    c\min\{1, \delta_\ell\}
     }.
\end{align*}
Finally,
\begin{align*}
    \frac{\delta_r W_{ir} \hat p_r(\overline{\bm W}_{i,r-1}) + (1 - W_{ir}) \{ 1 -  \hat p_r(\overline{\bm W}_{i,r-1})\} }{ \delta_r \hat p_r(\overline{\bm W}_{i,r-1}) + 1 -  \hat p_r(\overline{\bm W}_{i,r-1})} &\leq \frac{\delta_r W_{ir} \hat p_r(\overline{\bm W}_{i,r-1}) + (1 - W_{ir}) \{ 1 -  \hat p_r(\overline{\bm W}_{i,r-1})\} }{ \min\{1, \delta_\ell\} }\\
    &\le \frac{\delta_r W_{ir}  + (1 - W_{ir}) }{ \min\{1, \delta_\ell\} }\\
    &\leq \frac{\max\{1, \delta_u\} }{
    \min\{1, \delta_\ell\} 
    }.
\end{align*}
%where the first and second inequalities are by the boundedness of $\hat p_s$ and $\delta_{s'} \in [\delta_\ell, \delta_u]$ in Assumption~\ref{reg_cond} and the last inequality is because $W_{ir}$ is binary.
Thus, as each weight term is upper bounded,
\begin{align*}
    &\int \left[ \tilde{\psi}^*_{\tilde s}\bigl(\cD_i; \bm\delta, \{ \hat\pi_s, \hat m_s, \hat p_s, \hat\boldf^{[s]}\}_{s = 1}^{\tilde s}, \hat\Psi_{\tilde s}\bigr)\right]dF(\cD_i \mid S_i = \tilde s) + \hat\Psi_{\tilde s}(\bm\delta; \hat Q) - \Psi_{\tilde s}(\bm\delta; \hat Q)\\
    \lesssim \ & \sum_{s = 1}^{\tilde s} \sum_{s^\prime = 1}^{s} \E\biggl[  
    \left[
        m_s^*(\overline{\bW}_{i,s-1}, W_{is}, \overline{\bm F}_{is}; \bm\delta_{\underline{s}})
        -
        \hat m_s(\overline{\bW}_{i,s-1}, W_{is}, \widehat{\overline{\bm F}}_{is}; \bm\delta_{\underline{s}})
        \right]
        \times
        \left[ \hat  \pi_{s^\prime}(\widehat{\overline{\bm H}}_{is^\prime}) -  \pi_{s^\prime}(\overline{\bm H}_{is^\prime}) \right]\biggr]\\
   \lesssim \  &
    \sum_{s = 1}^{\tilde s} \sum_{s^\prime = 1}^{s}
    \E\biggl[
        \bigl|
        m_s^*(\overline{\bW}_{i,s-1}, W_{is}, \overline{\bm F}_{is}; \bm\delta_{\underline{s}})
        -
        \hat m_s(\overline{\bW}_{i,s-1}, W_{is}, \widehat{\overline{\bm F}}_{is}; \bm\delta_{\underline{s}})
        \bigr|^2
    \biggr]^{\frac{1}{2}} \\
    &\hspace{2.5in} \times
    \E\left[
        \bigl|
        \hat\pi_{s^\prime}(\widehat{\overline{\bm{H}}}_{is^\prime})
        -
        \pi_{s^\prime}(\overline{\bm{{H}}}_{is^\prime})
        \bigr|^2
    \right]^{\frac{1}{2}}
\end{align*}
where the last line is by Cauchy-Schwartz inequality. Since $S_i \leq s_{\max} < \infty$,
\begin{align*}
    &\biggl|\int \biggl( \tilde{\psi}^*\bigl(\cD_i; \bm\delta, \{ \hat\pi_s, \hat m_s, \hat p_s, \hat\boldf^{[s]}\}_{s = 1}^{S_i}, \hat\Psi\bigr) + \hat\Psi(\bm\delta; \hat Q) - \Psi(\bm\delta; \hat Q)\biggr)dF(\cD_i)
    \biggr|\\
    = \ &  \biggl| \sum_{\tilde s = 1}^{s_{\max}} \biggl\{ \int \left[ \tilde{\psi}^*_{\tilde s}\bigl(\cD_i; \bm\delta, \{ \hat\pi_s, \hat m_s, \hat p_s, \hat\boldf^{[s]}\}_{s = 1}^{\tilde s}, \hat\Psi_{\tilde s}\bigr)\right]dF(\cD_i \mid S_i = \tilde s) + \hat\Psi_{\tilde s}(\bm\delta; \hat Q) - \Psi_{\tilde s}(\bm\delta; \hat Q) \biggr\} \Pr(S_i = \tilde s) 
    \biggr|\\
    \leq \ &  \sum_{\tilde s = 1}^{s_{\max}} \biggl|  \biggl\{ \int \left[ \tilde{\psi}^*_{\tilde s}\bigl(\cD_i; \bm\delta, \{ \hat\pi_s, \hat m_s, \hat p_s, \hat\boldf^{[s]}\}_{s = 1}^{\tilde s}, \hat\Psi_{\tilde s}\bigr)\right]dF(\cD_i \mid S_i = \tilde s) + \hat\Psi_{\tilde s}(\bm\delta; \hat Q) - \Psi_{\tilde s}(\bm\delta; \hat Q) \biggr\} 
    \biggr| \Pr(S_i = \tilde s) \\
    \lesssim \ & \sum_{\tilde s = 1}^{s_{\max}}
    \sum_{s = 1}^{\tilde s} \sum_{s^\prime = 1}^{s}
    \E\biggl[
        \bigl|
        m_s^*(\overline{\bW}_{i,s-1}, W_{is}, \overline{\bm F}_{is}; \bm\delta_{\underline{s}})
        -
        \hat m_s(\overline{\bW}_{i,s-1}, W_{is}, \widehat{\overline{\bm F}}_{is}; \bm\delta_{\underline{s}})
        \bigr|^2
    \biggr]^{\frac{1}{2}} \\
    &\hspace{2.5in} \times
    \E\left[
        \bigl|
        \hat\pi_{s^\prime}(\widehat{\overline{\bm{H}}}_{is^\prime})
        -
        \pi_{s^\prime}(\overline{\bm{{H}}}_{is^\prime})
        \bigr|^2
    \right]^{\frac{1}{2}} \\
    = \ & \sum_{s = 1}^{s_{\max}} \sum_{s^\prime = 1}^{s} (s_{\max} - s + 1)
    \E\biggl[
        \bigl|
        m_s^*(\overline{\bW}_{i,s-1}, W_{is}, \overline{\bm F}_{is}; \bm\delta_{\underline{s}})
        -
        \hat m_s(\overline{\bW}_{i,s-1}, W_{is}, \widehat{\overline{\bm F}}_{is}; \bm\delta_{\underline{s}})
        \bigr|^2
    \biggr]^{\frac{1}{2}} \\
    &\hspace{2.5in} \times
    \E\left[
        \bigl|
        \hat\pi_{s^\prime}(\widehat{\overline{\bm{H}}}_{is^\prime})
        -
        \pi_{s^\prime}(\overline{\bm{{H}}}_{is^\prime})
        \bigr|^2
    \right]^{\frac{1}{2}}\\
     \lesssim \ & 
    \sum_{s = 1}^{s_{\max}} \sum_{s^\prime = 1}^{s}
    \E\biggl[
        \bigl|
        m_s^*(\overline{\bW}_{i,s-1}, W_{is}, \overline{\bm F}_{is}; \bm\delta_{\underline{s}})
        -
        \hat m_s(\overline{\bW}_{i,s-1}, W_{is}, \widehat{\overline{\bm F}}_{is}; \bm\delta_{\underline{s}})
        \bigr|^2
    \biggr]^{\frac{1}{2}} \\
    &\hspace{2.5in} \times
    \E\left[
        \bigl|
        \hat\pi_{s^\prime}(\widehat{\overline{\bm{H}}}_{is^\prime})
        -
        \pi_{s^\prime}(\overline{\bm{{H}}}_{is^\prime})
        \bigr|^2
    \right]^{\frac{1}{2}}
\end{align*}
where the first inequality is by triangle inequality, the second inequality
is by applying the derived upper bound and using $\Pr(S_i = \tilde s)
\leq 1$. Finally, since every constant in the above derivation depends
on $\bm\delta$ only through $\delta_\ell$ and $\delta_u$, taking the
supremum over $\bm\delta \in \Delta^{s_{\max}}$ on both sides of the
resulting inequality yields the stated uniform bound.
\end{proof}

\subsubsection{Lemma 3}

\begin{lemma}\label{lemma3}
Let $r_n$ be a sequence of positive constants approaching zero as the sample size $N$ increases. For sufficiently large $N$ and each $s \in \mathcal{S}$, we have
\begin{align*}
     &\sup_{\bm\delta \in \Delta^{s_{\max}}} \E\left[ \bigl|m_s^*(\overline{\bW}_{i,s-1}, W_{is}, \overline{\bm F}_{is}; \bm\delta_{\underline{s}}) - \hat m_s(\overline{\bW}_{i,s-1}, W_{is}, \widehat{\overline{\bm F}}_{is}; \bm\delta_{\underline{s}}) \bigr|^2 \right]^\frac{1}{2} \\
     &\hspace{3.5in} \lesssim N^{-\frac{1}{4}} r_n
\end{align*}
where $m^*$ is the outcome model under the estimated stochastic intervention defined in Equation~\eqref{outcome_true}.
\end{lemma}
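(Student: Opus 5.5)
The plan is a backward induction on $s$, starting at $s=S_i$ (with $S_i$ fixed and the $\Pr(S_i=\tilde s)$ weights summed at the end, since $s_{\max}<\infty$). Insert the intermediate function $m_s(\overline{\bW}_{i,s-1},W_{is},\widehat{\overline{\bm F}}_{is};\bm\delta_{\underline{s}})$, the true outcome model evaluated at the estimated deconfounders, and apply the triangle inequality:
\begin{align*}
\|m_s^*(\cdot,\overline{\bm F}_{is}) - \hat m_s(\cdot,\widehat{\overline{\bm F}}_{is})\|_2
&\le \|m_s^*(\cdot,\overline{\bm F}_{is}) - m_s(\cdot,\overline{\bm F}_{is})\|_2
+ \|m_s(\cdot,\overline{\bm F}_{is}) - m_s(\cdot,\widehat{\overline{\bm F}}_{is})\|_2 \\
&\quad + \|m_s(\cdot,\widehat{\overline{\bm F}}_{is}) - \hat m_s(\cdot,\widehat{\overline{\bm F}}_{is})\|_2 .
\end{align*}
The third term is at most $r_nN^{-1/4}$ uniformly in $\bm\delta$ directly by the nuisance-rate condition in Assumption~\ref{reg_cond}. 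The second term is handled by the Lipschitz continuity of $m_s$ in its deconfounder arguments, which is also part of Assumption~\ref{reg_cond}. It is bounded by $L\sum_{r\le s}\|\hat\boldf^{[s]}(\bR_{ir},r)-\boldf^{[s]}(\bR_{ir},r)\|_2\lesssim s_{\max} r_n N^{-1/4}$. For this to hold uniformly in $\bm\delta$, the Lipschitz constant must not depend on $\bm\delta\in\Delta^{s_{\max}}$; I would assume this, or note that $\bm\delta$ ranges over a compact set.

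The first term is the heart of the argument: the discrepancy between the recursion driven by $\hat p$ and the one driven by $p$. At $s=S_i$ it vanishes, since $m^*_{S_i}=m_{S_i}$. For $s<S_i$, write both $m^*_s$ and $m_s$ as conditional expectations given $(\overline{\bm H}_{is},W_{is})$ of the $Q_{s+1}$-mixtures of $m^*_{s+1}$ and $m_{s+1}$. Conditional Jensen removes the outer expectation. The difference of mixtures then splits as
\begin{align*}
&\hat q_{s+1}\,(m^*_{s+1}-m_{s+1})(1) + (1-\hat q_{s+1})\,(m^*_{s+1}-m_{s+1})(0) \\
&\quad + (\hat q_{s+1}-q_{s+1})\,\{m_{s+1}(1)-m_{s+1}(0)\}.
\end{align*}
By the bound $\min\{1,\delta_\ell\}$ on the denominator, $|\hat q_{s+1}-q_{s+1}|\le \delta_u|\hat p_{s+1}-p_{s+1}|/\min\{1,\delta_\ell\}^2$, and $|m_{s+1}|\le C$. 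So the last piece is $\lesssim \|\hat p_{s+1}-p_{s+1}\|_2\le r_nN^{-1/4}$.

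The first two pieces must be converted from an evaluation at $W_{i,s+1}=w$ to the norm under the observed law. I expect this change of measure from fixed-treatment evaluation to observed-$W$ norm to be the main obstacle. I would handle it with the bounded relative overlap of Assumption~\ref{overlap}: $\E[|g(w)|^2]\le \E[|g(W_{i,s+1})|^2]/\min(\pi_{s+1},1-\pi_{s+1})$-type bounds, made usable because $\hat q\le \delta_u\hat p/\min\{1,\delta_\ell\}$ and $\pi\ge c\,p$, together with $\hat p\approx p$ (using the sample bounded overlap and stable weights for the estimated versions). This yields
\[
\|m_s^*-m_s\|_2\le K_1\|m_{s+1}^*-m_{s+1}\|_2 + K_2 r_nN^{-1/4},
\]
with $K_1,K_2$ depending only on $c,\delta_\ell,\delta_u$.

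Iterating this recursion over at most $s_{\max}$ steps gives $\|m_s^*-m_s\|_2\lesssim r_nN^{-1/4}$. All constants are uniform in $\bm\delta\in\Delta^{s_{\max}}$, so the supremum passes through. Combining the three terms then gives the claim for sufficiently large $N$, when the overlap events of Assumption~\ref{reg_cond} hold with probability approaching one.
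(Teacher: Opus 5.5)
Your argument has the same skeleton as the paper's. It uses the same three-term triangle inequality, the rate condition for the third term, Lipschitz continuity for the second, and a backward induction on $m_s^*-m_s$ that splits into a propagated error plus a $(\hat p_{s+1}-p_{s+1})$ term.

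The real difference is the norm in which you close the induction. The paper states its hypothesis for each fixed $w\in\{0,1\}$ and asserts that the stage-$s$ step holds ``verbatim'' with $W_{is}$ replaced by $w$. But once $W_{is}=w$ is fixed while $\overline{\bm H}_{is}$ keeps its marginal law, conditional Jensen gives an integral against $P(\overline{\bm H}_{is})\otimes P(\cdot\mid\overline{\bm H}_{is},W_{is}=w)$. That is not the observed law of $(\overline{\bW}_{is},\overline{\bm F}_{i,s+1})$ that the stage-$(s+1)$ hypothesis controls. Closing that step tacitly requires $\mathbf 1\{W_{is}=w\}/\P(W_{is}=w\mid\overline{\bm H}_{is})$ to be bounded, i.e.\ strict positivity, which the incremental intervention is meant to avoid.

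You instead keep the hypothesis in the observed-$W$ norm, which is what the lemma asserts. You pay for the fixed-$w$ evaluations only through the intervention weights, which is exactly what bounded relative overlap controls. So your route is the more careful one, and it shows where Assumption~\ref{overlap} enters. Strictly, you need that overlap for $\pi_{s+1}(\overline{\bm H}_{i,s+1})$, which the paper imposes on the deconfounders in Theorem~\ref{identification}.

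One step needs tightening. With $\hat q_{s+1}$ multiplying the propagated error, your change of measure requires $\hat q_{s+1}/\pi_{s+1}$, hence $\hat p_{s+1}/p_{s+1}$, to be bounded. Neither sample bounded overlap (which compares $\hat\pi$ with $\hat p$) nor the stable-weight condition gives you that. Swap the weights first. Let $g_w:=m^*_{s+1}(\overline{\bW}_{is},w,\overline{\bm F}_{i,s+1};\bm\delta_{\underline{s+1}})-m_{s+1}(\overline{\bW}_{is},w,\overline{\bm F}_{i,s+1};\bm\delta_{\underline{s+1}})$ and write
\[
\hat q_{s+1}g_1+(1-\hat q_{s+1})g_0=q_{s+1}g_1+(1-q_{s+1})g_0+(\hat q_{s+1}-q_{s+1})(g_1-g_0).
\]
The last piece is at most $4C|\hat q_{s+1}-q_{s+1}|\lesssim|\hat p_{s+1}-p_{s+1}|$. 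Since $q_{s+1}$, $\pi_{s+1}$ and $g_w$ are all functions of $\overline{\bm H}_{i,s+1}$,
\[
\E\bigl[q_{s+1}g_1^2+(1-q_{s+1})g_0^2\bigr]\le\max\Bigl\{\sup\frac{q_{s+1}}{\pi_{s+1}},\ \sup\frac{1-q_{s+1}}{1-\pi_{s+1}}\Bigr\}\,\E\bigl[g_{W_{i,s+1}}^2\bigr].
\]
Here $q_{s+1}/\pi_{s+1}\le\delta_u/(c\min\{1,\delta_\ell\})$ and $(1-q_{s+1})/(1-\pi_{s+1})\le 1/(c\min\{1,\delta_\ell\})$. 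Combined with convexity of the square, this gives your recursion with $K_1,K_2$ depending only on $c,\delta_\ell,\delta_u,C$, as you claimed.
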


\begin{proof}
Now, notice that
\begin{align*}
    &\E\left[ \bigl|m_s^*(\overline{\bW}_{i,s-1}, W_{is}, \overline{\bm F}_{is}; \bm\delta_{\underline{s}}) - \hat m_s(\overline{\bW}_{i,s-1}, W_{is}, \widehat{\overline{\bm F}}_{is}; \bm\delta_{\underline{s}}) \bigr|^2 \right]^\frac{1}{2}\\
     = \ & \E\left[ \biggl| \biggl(m_s^*(\overline{\bW}_{i,s-1}, W_{is}, \overline{\bm F}_{is}; \bm\delta_{\underline{s}}) - m_s(\overline{\bW}_{i,s-1}, W_{is}, \overline{\bm F}_{is}; \bm\delta_{\underline{s}})\biggr) \right. \\
      &\quad \left. + \biggl(m_s(\overline{\bW}_{i,s-1}, W_{is}, \overline{\bm F}_{is}; \bm\delta_{\underline{s}}) - m_s(\overline{\bW}_{i,s-1}, W_{is}, \widehat{\overline{\bm F}}_{is}; \bm\delta_{\underline{s}})\biggr) \right. \\
      &\quad \left.  + \biggl(m_s(\overline{\bW}_{i,s-1}, W_{is}, \widehat{\overline{\bm F}}_{is}; \bm\delta_{\underline{s}}) - \hat m_s(\overline{\bW}_{i,s-1}, W_{is}, \widehat{\overline{\bm F}}_{is}; \bm\delta_{\underline{s}})\biggr)
      \biggr|^2 \right]^\frac{1}{2}\\
     \leq \ &  \E\left[ \bigl|m_s^*(\overline{\bW}_{i,s-1}, W_{is}, \overline{\bm F}_{is}; \bm\delta_{\underline{s}}) - m_s(\overline{\bW}_{i,s-1}, W_{is}, \overline{\bm F}_{is}; \bm\delta_{\underline{s}})\bigr|^2\right]^{\frac{1}{2}}\\
      & + \E\left[\bigl|m_s(\overline{\bW}_{i,s-1}, W_{is}, \overline{\bm F}_{is}; \bm\delta_{\underline{s}}) - m_s(\overline{\bW}_{i,s-1}, W_{is}, \widehat{\overline{\bm F}}_{is}; \bm\delta_{\underline{s}})\bigr|^2\right]^\frac{1}{2}\\
      & + \E\left[\bigl|m_s(\overline{\bW}_{i,s-1}, W_{is}, \widehat{\overline{\bm F}}_{is}; \bm\delta_{\underline{s}}) - \hat m_s(\overline{\bW}_{i,s-1}, W_{is}, \widehat{\overline{\bm F}}_{is}; \bm\delta_{\underline{s}})\bigr|^2 \right]^{\frac{1}{2}}\\
      \leq \ &  \E\left[ \bigl|m_s^*(\overline{\bW}_{i,s-1}, W_{is}, \overline{\bm F}_{is}; \bm\delta_{\underline{s}}) - m_s(\overline{\bW}_{i,s-1}, W_{is}, \overline{\bm F}_{is}; \bm\delta_{\underline{s}})\bigr|^2\right]^{\frac{1}{2}}\\
      & + \E\left[\bigl|m_s(\overline{\bW}_{i,s-1}, W_{is}, \overline{\bm F}_{is}; \bm\delta_{\underline{s}}) - m_s(\overline{\bW}_{i,s-1}, W_{is}, \widehat{\overline{\bm F}}_{is}; \bm\delta_{\underline{s}})\bigr|^2\right]^\frac{1}{2}
   + r_n N^{-1/4}\\
    \leq \ &  \E\left[ \bigl|m_s^*(\overline{\bW}_{i,s-1}, W_{is}, \overline{\bm F}_{is}; \bm\delta_{\underline{s}}) - m_s(\overline{\bW}_{i,s-1}, W_{is}, \overline{\bm F}_{is}; \bm\delta_{\underline{s}})\bigr|^2\right]^{\frac{1}{2}}\\
      & +  L \sum_{r = 1}^{s} \E\left[
        \bigl|
        \hat{\boldf}^{[s]}(\bR_{ir},r)-\boldf^{[s]}(\bR_{ir},r)
        \bigr|^2
    \right]^{\frac{1}{2}} + r_n N^{-1/4}\\
   \leq\ &  \E\left[ \bigl|m_s^*(\overline{\bW}_{i,s-1}, W_{is}, \overline{\bm F}_{is}; \bm\delta_{\underline{s}}) - m_s(\overline{\bW}_{i,s-1}, W_{is}, \overline{\bm F}_{is}; \bm\delta_{\underline{s}})\bigr|^2\right]^{\frac{1}{2}}\\
   & + (L s_{\max}+1) r_n N^{-1/4}
\end{align*}
where the first inequality is by triangular inequality, the second and
fourth inequalities are by Assumption~\ref{reg_cond}, and the third
inequality is by Lipschitz continuity of $m_s$, whose argument
contains the deconfounders at all segments $r \leq s$, so that the
representation errors accumulate over the entire history (the
resulting summation contains at most $s_{\max}$ terms).

Regarding the first term, we proceed by backward induction on $s$,
starting from the terminal stage. At the terminal stage $s = S_i$, we
have $m_{S_i}^* = m_{S_i}$ by the terminal condition of the recursion
in Equation~\eqref{outcome_true}, and hence the first term is exactly
zero. Next, fix $s < S_i$ and assume, as the induction hypothesis,
that
\begin{align*}
    \E\left[ \bigl|m_{s+1}^*(\overline{\bW}_{is}, w, \overline{\bm F}_{i,s+1}; \bm\delta_{\underline{s+1}}) - m_{s+1}(\overline{\bW}_{is}, w, \overline{\bm F}_{i,s+1}; \bm\delta_{\underline{s+1}})\bigr|^2\right]^{\frac{1}{2}} \ \lesssim \ r_n N^{-\frac{1}{4}}
\end{align*}
for $w \in \{0, 1\}$. Since $\hat p_{s+1}(\overline{\bm W}_{is})$ and
$p_{s+1}(\overline{\bm W}_{is})$ are functions of the conditioning
variables, a direct calculation yields
\begin{align*}
    &m_s^*(\overline{\bW}_{i,s-1}, W_{is}, \overline{\bm F}_{is}; \bm\delta_{\underline{s}}) - m_s(\overline{\bW}_{i,s-1}, W_{is}, \overline{\bm F}_{is}; \bm\delta_{\underline{s}})\\
  = \  & \E\biggl[
        \sum_{w \in \{0,1\}}
        d\hat Q_{s+1}(w; \overline{\bm W}_{is}, \delta_{s+1})
        \bigl\{m_{s+1}^*(\overline{\bW}_{is}, w, \overline{\bm F}_{i,s+1}; \bm\delta_{\underline{s+1}}) - m_{s+1}(\overline{\bW}_{is}, w, \overline{\bm F}_{i,s+1}; \bm\delta_{\underline{s+1}})\bigr\}
        \,\bigg|\, {\overline{\bm{H}}}_{is}, W_{is}
    \biggr]\\
    & + \E\biggl[
        \frac{
            \delta_{s+1} \hat p_{s+1}(\overline{\bm W}_{is})
            m_{s+1}(\overline{\bW}_{is}, 1, \overline{\bm F}_{i,s+1}; \bm\delta_{\underline{s+1}})
            +
            \{1-\hat p_{s+1}(\overline{\bm W}_{is})\}
            m_{s+1}(\overline{\bW}_{is}, 0, \overline{\bm F}_{i,s+1}; \bm\delta_{\underline{s+1}})
        }{
            \delta_{s+1} \hat p_{s+1}(\overline{\bm W}_{is})
            + 1 - \hat p_{s+1}(\overline{\bm W}_{is})
        }
        \,\bigg|\, {\overline{\bm{H}}}_{is}, W_{is}
    \biggr]\\
    & - \E\biggl[
        \frac{
            \delta_{s+1} p_{s+1}(\overline{\bm W}_{is})
            m_{s+1}(\overline{\bW}_{is}, 1, \overline{\bm F}_{i,s+1}; \bm\delta_{\underline{s+1}})
            +
            \{1- p_{s+1}(\overline{\bm W}_{is})\}
            m_{s+1}(\overline{\bW}_{is}, 0, \overline{\bm F}_{i,s+1}; \bm\delta_{\underline{s+1}})
        }{
            \delta_{s+1} p_{s+1}(\overline{\bm W}_{is})
            + 1 - p_{s+1}(\overline{\bm W}_{is})
        }
        \,\bigg|\, {\overline{\bm{H}}}_{is}, W_{is}
    \biggr]\\
    = \ & \E\biggl[
        \sum_{w \in \{0,1\}}
        d\hat Q_{s+1}(w; \overline{\bm W}_{is}, \delta_{s+1})
        \bigl\{m_{s+1}^*(\overline{\bW}_{is}, w, \overline{\bm F}_{i,s+1}; \bm\delta_{\underline{s+1}}) - m_{s+1}(\overline{\bW}_{is}, w, \overline{\bm F}_{i,s+1}; \bm\delta_{\underline{s+1}})\bigr\}
        \,\bigg|\, {\overline{\bm{H}}}_{is}, W_{is}
    \biggr]\\
    & + \frac{\delta_{s+1}
    \left\{ \hat p_{s+1}(\overline{\bm W}_{is}) - p_{s+1}(\overline{\bm W}_{is})\right\}
    }{
    [\delta_{s+1} \hat p_{s+1}(\overline{\bm W}_{is})
            + 1 - \hat p_{s+1}(\overline{\bm W}_{is})]
    [\delta_{s+1} p_{s+1}(\overline{\bm W}_{is})
            + 1 - p_{s+1}(\overline{\bm W}_{is})]
    }\\
    & \qquad \times \E\biggl[
    m_{s+1}(\overline{\bW}_{is}, 1, \overline{\bm F}_{i,s+1}; \bm\delta_{\underline{s+1}})
    -
    m_{s+1}(\overline{\bW}_{is}, 0, \overline{\bm F}_{i,s+1}; \bm\delta_{\underline{s+1}})
    \,\bigg|\, {\overline{\bm{H}}}_{is}, W_{is}
    \biggr],
\end{align*}
where the first equality results by adding and subtracting the
conditional expectation of the weighted average of $m_{s+1}$ under the
estimated intervention $\hat Q_{s+1}$ (the first term propagates the
estimation errors accumulated at the later stages), and the last
equality follows by combining the two fractions of the remaining two
terms over a common denominator and simplifying.

For the first term, the weights
$d\hat Q_{s+1}(w; \overline{\bm W}_{is}, \delta_{s+1})$ are
nonnegative and sum to one over $w \in \{0,1\}$. Therefore,
\begin{align*}
    &\E\left[\biggl|\E\biggl[
        \sum_{w \in \{0,1\}}
        d\hat Q_{s+1}(w; \overline{\bm W}_{is}, \delta_{s+1})
        \bigl\{m_{s+1}^*(\overline{\bW}_{is}, w, \overline{\bm F}_{i,s+1}; \bm\delta_{\underline{s+1}}) \right.\\
    &\left. \hspace{2in}- m_{s+1}(\overline{\bW}_{is}, w, \overline{\bm F}_{i,s+1}; \bm\delta_{\underline{s+1}})\bigr\}
        \,\bigg|\, {\overline{\bm{H}}}_{is}, W_{is}
    \biggr]\biggr|^2\right]^{\frac{1}{2}}\\
    \leq \ & \sum_{w \in \{0,1\}}\E\left[ \bigl|m_{s+1}^*(\overline{\bW}_{is}, w, \overline{\bm F}_{i,s+1}; \bm\delta_{\underline{s+1}}) - m_{s+1}(\overline{\bW}_{is}, w, \overline{\bm F}_{i,s+1}; \bm\delta_{\underline{s+1}})\bigr|^2\right]^{\frac{1}{2}}\\
    \lesssim \ & r_n N^{-\frac{1}{4}},
\end{align*}
where the first inequality is by the triangle inequality together with
Jensen's inequality applied to the conditional expectation, and the
second inequality is by the induction hypothesis.

For the second term, we have
\begin{align*}
    \ &\left|\frac{\delta_{s+1}
    \left\{ \hat p_{s+1}(\overline{\bm W}_{is}) - p_{s+1}(\overline{\bm W}_{is})\right\}
    \E\biggl[
    m_{s+1}(\overline{\bW}_{is}, 1, \overline{\bm F}_{i,s+1}; \bm\delta_{\underline{s+1}})
    -
    m_{s+1}(\overline{\bW}_{is}, 0, \overline{\bm F}_{i,s+1}; \bm\delta_{\underline{s+1}})
    \,\bigg|\, {\overline{\bm{H}}}_{is}, W_{is}
    \biggr]
    }{
    [\delta_{s+1} \hat p_{s+1}(\overline{\bm W}_{is})
            + 1 - \hat p_{s+1}(\overline{\bm W}_{is})]
    [\delta_{s+1} p_{s+1}(\overline{\bm W}_{is})
            + 1 - p_{s+1}(\overline{\bm W}_{is})]
    }\right|\\
    &\leq \ \frac{2C\delta_u }{ \min\{1, \delta_\ell\}^2 }
    \bigl| \hat p_{s+1}(\overline{\bm W}_{is}) - p_{s+1}(\overline{\bm W}_{is})\bigr|,
\end{align*}
where the inequality holds due to the boundedness of the outcome
model (Assumption~\ref{reg_cond}), $\delta_{s+1} \leq \delta_u$, and
$\delta_{s+1} p_{s+1} + 1 - p_{s+1} \geq \min\{1, \delta_\ell\}$ (and
similarly for $\hat p_{s+1}$). Thus, combining the two bounds by the
triangle inequality and applying Assumption~\ref{reg_cond} to the
$L_2$ norm of $\hat p_{s+1} - p_{s+1}$, we obtain
\begin{align*}
    \E\left[ \bigl|m_s^*(\overline{\bW}_{i,s-1}, W_{is}, \overline{\bm F}_{is}; \bm\delta_{\underline{s}}) - m_s(\overline{\bW}_{i,s-1}, W_{is}, \overline{\bm F}_{is}; \bm\delta_{\underline{s}})\bigr|^2\right]^{\frac{1}{2}} &\lesssim \frac{2C\delta_u }{ \min\{1, \delta_\ell\}^2 }
       r_n N^{-\frac{1}{4}} + r_n N^{-\frac{1}{4}} \ \lesssim \ r_n N^{-\frac{1}{4}},
\end{align*}
Since the above derivation holds verbatim with $W_{is}$ replaced by a
fixed argument $w \in \{0,1\}$, the same bound holds for both fixed
arguments, which yields the induction hypothesis at stage $s$ and
closes the induction. Note that the difference is exactly zero on the
event $\{S_i = s\}$ by the terminal condition, so the unconditional
$L_2$ norms above are well defined. Since $s_{\max}$ is finite, the
backward recursion terminates after at most $s_{\max}$ steps, and the
implied constants remain bounded. Substituting this bound into the first
display completes the proof of the pointwise bound. Finally, because
every constant above depends on $\bm\delta$ only through $\delta_\ell$
and $\delta_u$, and the nuisance rate conditions in
Assumption~\ref{reg_cond} hold uniformly over
$\bm\delta \in \Delta^{s_{\max}}$, the bound holds uniformly over
$\bm\delta \in \Delta^{s_{\max}}$.
\end{proof}

\subsubsection{Lemma 4}
\begin{lemma}\label{lemma4}
\begin{align*}
    &\int
\left[
    \zeta\bigl(\cD_i; \bm\delta, \{ \hat\pi_s, \hat m_s, \widehat{\tilde m}_s, \hat p_s, \hat\boldf^{[s]}\}_{s=1}^{S_i}\bigr)
    +
    \Psi(\bm\delta;\hat Q)
    -
    \Psi(\bm\delta;Q)
\right]
dF(\cD_i) =  R_{2,1}(\bm\delta)
+
R_{2,2}(\bm\delta)
+
R_{2,3}(\bm\delta),
\end{align*}
where $R_{2,1}(\bm\delta)$, $R_{2,2}(\bm\delta)$, and $R_{2,3}(\bm\delta)$ are defined in Equations~\eqref{reminder_r21}--\eqref{reminder_r23}.
\end{lemma}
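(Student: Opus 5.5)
The identity in Lemma~\ref{lemma4} is an exact algebraic decomposition; no limit argument is needed. I would work segment by segment, since both $\zeta$ and $\Psi(\bm\delta;\hat Q)-\Psi(\bm\delta;Q)$ split as sums over $s=1,\ldots,s_{\max}$ weighted by $\Pr(S_i\ge s)$.

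\textbf{Step 1: telescope the intervention difference.} Write $\Psi(\bm\delta;Q)$ as the iterated integral of Theorem~\ref{identification} and replace $dQ_s$ by $d\hat Q_s$ one segment at a time, for $s=1,\ldots,s_{\max}$. The $s$th increment is
\begin{align*}
\int \sum_{w\in\{0,1\}}\{d\hat Q_s(w;\overline{\bm W}_{i,s-1},\delta_s)-dQ_s(w;\overline{\bm W}_{i,s-1},\delta_s)\}\, m_s(\overline{\bm W}_{i,s-1},w,\overline{\bm F}_{is};\bm\delta_{\underline{s}})
\end{align*}
integrated against $\prod_{s'<s} d\hat Q_{s'}\,dF(\bR_{is'}\mid\cdot)$ and $dF(\bR_{is}\mid\cdot)$, with the later segments already under $Q$, which is the definition of $m_s$.

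For binary $w$, the difference $d\hat Q_s-dQ_s$ equals
\begin{align*}
\frac{(2w-1)\delta_s(\hat p_s-p_s)}{(\delta_s\hat p_s+1-\hat p_s)(\delta_s p_s+1-p_s)}.
\end{align*}
Write $\hat D_s:=\delta_s\hat p_s+1-\hat p_s$ and $D_s:=\delta_s p_s+1-p_s$. I would expand
\begin{align*}
\frac{1}{\hat D_s D_s}=\frac{1}{\hat D_s^2}+\frac{\hat D_s-D_s}{\hat D_s^2 D_s},\qquad \hat D_s-D_s=(\delta_s-1)(\hat p_s-p_s).
\end{align*}
The second piece produces exactly the quadratic term $R_{2,3}$, with $m_s(\cdot,1)-m_s(\cdot,0)$. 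The first piece leaves
\begin{align*}
\delta_s(\hat p_s-p_s)\sum_w(2w-1)\,m_s/\hat D_s^2,
\end{align*}
the first-order term.

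\textbf{Step 2: compute the mean of $\zeta$ at the estimated nuisances.} Condition on $\overline{\bm W}_{i,s-1}$ and $S_i\ge s$. Then $E[W_{is}-\hat p_s\mid\cdot]=p_s-\hat p_s$, so $\int\zeta$ becomes
\begin{align*}
\sum_s\int \frac{\delta_s(p_s-\hat p_s)}{\hat D_s^2}\sum_w(2w-1)\,\widehat{\tilde m}_s(\overline{\bm W}_{i,s-1},w;\bm\delta_{-s})\,dF(\overline{\bm W}_{i,s-1}\mid S_i\ge s)\Pr(S_i\ge s).
\end{align*}
Add and subtract $\check m_s$, defined in \eqref{check_m}. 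The difference $\widehat{\tilde m}_s-\check m_s$ gives $R_{2,1}$ directly.

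\textbf{Step 3: match the remaining pieces.} What is left is the $\check m_s$ term plus the first-order piece from Step 1, and their sum should equal $R_{2,2}$. Unwinding $\check m_s$, the inner expectation is a weighted mean of $\hat m_s$ with weights $\prod_{s'<s}\omega_{s'}$ built from $\hat p_{s'}$ and the true $\pi_{s'}$. The tower property over $\overline{\bm R}_{i,s-1}$ converts $E[W_{is'}\mid\cdot]$ into $\pi_{s'}$, so these weights turn $dF$ of the observed $W$ path into $\prod d\hat Q_{s'}$ times the conditional laws of $\bR$. The $\check m_s$ term then becomes
\begin{align*}
\int \delta_s(p_s-\hat p_s)\sum_w(2w-1)\,\hat m_s/\hat D_s^2
\end{align*}
under the same measure as in Step 1. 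Combining it with the Step 1 term $\delta_s(\hat p_s-p_s)\sum_w(2w-1)\,m_s/\hat D_s^2$ yields $\delta_s(\hat p_s-p_s)\sum_w(2w-1)(m_s-\hat m_s)/\hat D_s^2$, which is $R_{2,2}$.

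The main obstacle is this change of measure. I would justify it by induction on $s'$, integrating out $W_{is'}$ given $\overline{\bm H}_{is'}$. I also need to check that the conditioning in $\check m_s$ (on $\overline{\bm W}_{i,s-1}$ only) is compatible with the deconfounder-based $\pi_{s'}$, which requires the mean-independence in Definition~\ref{def:deconfounder}. Finally, I would reconcile the evaluation of $\hat m_s$ at $\widehat{\overline{\bm F}}_{is}$ with $m_s$ at $\overline{\bm F}_{is}$, since $R_{2,2}$ is stated with exactly that mismatch.
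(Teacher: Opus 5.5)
Your proposal follows the paper's proof essentially step for step: telescope $\Psi(\bm\delta;\hat Q)-\Psi(\bm\delta;Q)$ with earlier segments under $\hat Q$ and later ones absorbed into $m_s$; integrate $W_{is}$ out of $\zeta$ given $\overline{\bm W}_{i,s-1}$ and add and subtract $\check m_s$ to isolate $R_{2,1}$; unwind the weights in $\check m_s$ to get an integral against $\prod_{s'<s} d\hat Q_{s'}$; then split via $\hat D_s/D_s = 1+(\delta_s-1)(\hat p_s-p_s)/D_s$ to obtain $R_{2,2}+R_{2,3}$ (your expansion of $1/(\hat D_s D_s)$ is the same algebra in a different order). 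The paper does the change of measure simply by writing out those weights and ``rearranging the measure'', so it does not carry out the compatibility checks you flag either; and no reconciliation of $\hat m_s(\cdot,\widehat{\overline{\bm F}}_{is})$ with $m_s(\cdot,\overline{\bm F}_{is})$ is needed, because $R_{2,2}$ is defined with exactly that mismatch.
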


\begin{proof}
Similar to Lemma~6 of \cite{kennedy_nonparametric_2019}, for any $\tilde s \in \mathcal{S}$ we have
\begin{align*}
    &\Psi_{\tilde s}(\bm\delta; \hat Q) - \Psi_{\tilde s}(\bm\delta; Q)\\ 
   = \ &  \int
                m_{\tilde s}(\overline{\bW}_{i,\tilde s-1}, W_{i\tilde s}, \overline{\bm F}_{i\tilde s}; \bm\delta_{\underline{\tilde s}})
    \prod_{s = 1}^{\tilde s} d \hat Q_s(W_{is}; \overline{\bm{W}}_{i,s-1}, \delta_s)\, dF( \bR_{is} \mid \overline{\bm H}_{i,s-1}, W_{i,s-1}, S_i = \tilde s)\\
    & - \int
                m_{\tilde s}(\overline{\bW}_{i,\tilde s-1}, W_{i\tilde s}, \overline{\bm F}_{i\tilde s}; \bm\delta_{\underline{\tilde s}})
    \prod_{s = 1}^{\tilde s} d Q_s(W_{is}; \overline{\bm{W}}_{i,s-1}, \delta_s)\,  dF( \bR_{is} \mid \overline{\bm H}_{i,s-1}, W_{i,s-1}, S_i = \tilde s)\\
  = \ &  \int m_{\tilde s}(\overline{\bW}_{i,\tilde s-1}, W_{i\tilde s}, \overline{\bm F}_{i\tilde s}; \bm\delta_{\underline{\tilde s}})
    \prod_{s = 1}^{\tilde s} d \hat Q_s(W_{is}; \overline{\bm{W}}_{i,s-1}, \delta_s)\, dF( \bR_{is} \mid \overline{\bm H}_{i,s-1}, W_{i,s-1}, S_i = \tilde s)\\
    & - \int m_{\tilde s}(\overline{\bW}_{i,\tilde s-1}, W_{i\tilde s}, \overline{\bm F}_{i\tilde s}; \bm\delta_{\underline{\tilde s}}) d  Q_{\tilde s}(W_{i\tilde s}; \overline{\bm{W}}_{i,\tilde s-1}, \delta_{\tilde s}) dF( \bR_{i \tilde s} \mid \overline{\bm H}_{i,\tilde s-1}, W_{i,\tilde s-1}, S_i = \tilde s)\\
      &\quad\quad \times \prod_{r = 1}^{\tilde s -1}  d  \hat Q_{r}(W_{ir}; \overline{\bm{W}}_{i,r-1}, \delta_{r}) dF( \bR_{ir} \mid \overline{\bm H}_{i,r-1}, W_{i,r-1}, S_i = \tilde s)\\
    & - \int
                m_{\tilde s}(\overline{\bW}_{i,\tilde s-1}, W_{i\tilde s}, \overline{\bm F}_{i\tilde s}; \bm\delta_{\underline{\tilde s}})
    \prod_{s = 1}^{\tilde s} d Q_s(W_{is}; \overline{\bm{W}}_{i,s-1}, \delta_s)\, dF( \bR_{is} \mid \overline{\bm H}_{i,s-1}, W_{i,s-1}, S_i = \tilde s)\\
      & + \int m_{\tilde s}(\overline{\bW}_{i,\tilde s-1}, W_{i\tilde s}, \overline{\bm F}_{i\tilde s}; \bm\delta_{\underline{\tilde s}}) d  Q_{\tilde s}(W_{i\tilde s}; \overline{\bm{W}}_{i,\tilde s-1}, \delta_{\tilde s}) dF( \bR_{i \tilde s} \mid \overline{\bm H}_{i,\tilde s-1}, W_{i,\tilde s-1}, S_i = \tilde s)\\
      &\qquad \times \prod_{r = 1}^{\tilde s -1}  d  \hat Q_{r}(W_{ir}; \overline{\bm{W}}_{i,r-1}, \delta_{r}) dF( \bR_{i r} \mid \overline{\bm H}_{i,r-1}, W_{i,r-1}, S_i = \tilde s)\\
    = \ &   \int
                m_{\tilde s}(\overline{\bW}_{i,\tilde s-1}, W_{i\tilde s}, \overline{\bm F}_{i\tilde s}; \bm\delta_{\underline{\tilde s}}) 
      \left\{  d \hat  Q_{\tilde s}(W_{i\tilde s}; \overline{\bm{W}}_{i,\tilde s-1}, \delta_{\tilde s}) -  d  Q_{\tilde s}(W_{i\tilde s}; \overline{\bm{W}}_{i,\tilde s-1}, \delta_{\tilde s})\right\} dF( \bR_{i \tilde s} \mid \overline{\bm H}_{i,\tilde s-1}, W_{i,\tilde s-1}, S_i = \tilde s)
      \\
      &\quad \times \prod_{s = 1}^{\tilde s-1} d \hat Q_s(W_{is}; \overline{\bm{W}}_{i,s-1}, \delta_s) dF( \bR_{is} \mid \overline{\bm H}_{i,s-1}, W_{i, s-1}, S_i = \tilde s)\\
      & + \int
      m_{\tilde s}(\overline{\bW}_{i,\tilde s-1}, W_{i\tilde s}, \overline{\bm F}_{i\tilde s}; \bm\delta_{\underline{\tilde s}}) 
    d  Q_{\tilde s}(W_{i\tilde s}; \overline{\bm{W}}_{i,\tilde s-1}, \delta_{\tilde s}) dF( \bR_{i \tilde s} \mid \overline{\bm H}_{i,\tilde s-1}, W_{i,\tilde s-1}, S_i = \tilde s) \\
    &\quad\quad \times 
    \biggl\{ \prod_{s = 1}^{\tilde s - 1} d \hat Q_s(W_{is};
      \overline{\bm{W}}_{i,s-1}, \delta_s)dF( \bR_{is} \mid \overline{\bm H}_{i,s-1}, W_{i, s-1}, S_i = \tilde s) \\
  & \hspace{1in} - 
    \prod_{s = 1}^{\tilde s - 1} d Q_s(W_{is}; \overline{\bm{W}}_{i,s-1}, \delta_s)dF( \bR_{is} \mid \overline{\bm H}_{i,s-1}, W_{i, s-1}, S_i = \tilde s)\biggr\}
    \\
   = \ &  \int
                m_{\tilde s}(\overline{\bW}_{i,\tilde s-1}, W_{i\tilde s}, \overline{\bm F}_{i\tilde s}; \bm\delta_{\underline{\tilde s}}) 
      \left\{  d \hat  Q_{\tilde s}(W_{i\tilde s}; \overline{\bm{W}}_{i,\tilde s-1}, \delta_{\tilde s}) -  d  Q_{\tilde s}(W_{i\tilde s}; \overline{\bm{W}}_{i,\tilde s-1}, \delta_{\tilde s})\right\}
      dF( \bR_{i \tilde s} \mid \overline{\bm H}_{i,\tilde s-1}, W_{i,\tilde s-1}, S_i = \tilde s)
      \\
      &\quad \times \prod_{s = 1}^{\tilde s-1} d \hat Q_s(W_{is}; \overline{\bm{W}}_{i,s-1}, \delta_s) dF( \bR_{is} \mid \overline{\bm H}_{i,s-1}, W_{i, s-1}, S_i = \tilde s)\\
      & + \int
      m_{\tilde s-1}(\overline{\bW}_{i,\tilde s-2}, W_{i,\tilde s-1}, \overline{\bm F}_{i,\tilde s-1}; \bm\delta_{\underline{\tilde s-1}}) \times \biggl\{ \prod_{s = 1}^{\tilde s - 1} d \hat Q_s(W_{is}; \overline{\bm{W}}_{i,s-1}, \delta_s)dF( \bR_{is} \mid \overline{\bm H}_{i,s-1}, W_{i, s-1}, S_i = \tilde s) \\
    &\hspace{2.0in} - 
    \prod_{s = 1}^{\tilde s - 1} d  Q_s(W_{is}; \overline{\bm{W}}_{i,s-1}, \delta_s)dF( \bR_{is} \mid \overline{\bm H}_{i,s-1}, W_{i, s-1}, S_i = \tilde s)\biggr\}
    \\
  = \ &  \sum_{s = 1}^{\tilde s}  \int
                m_{s}(\overline{\bW}_{i,s-1}, W_{i s}, \overline{\bm F}_{is}; \bm\delta_{\underline{s}}) 
      \left\{  d \hat  Q_{s}(W_{i s}; \overline{\bm{W}}_{i,s-1}, \delta_{s}) -  d  Q_{s}(W_{i s}; \overline{\bm{W}}_{i,s-1}, \delta_{ s})\right\}\\
      &\hspace{.5in} \times 
      dF( \bR_{is} \mid \overline{\bm H}_{i,s-1}, W_{i, s-1}, S_i = \tilde s)
      \prod_{r = 1}^{ s-1} d \hat Q_r(W_{ir}; \overline{\bm{W}}_{i,r-1}, \delta_r) dF( \bR_{ir} \mid \overline{\bm H}_{i,r-1}, W_{i, r-1}, S_i = \tilde s)\\
    = \  & \sum_{s = 1}^{\tilde s}  \int
      \biggl\{  
      \frac{\delta_s \{ \hat p_s(\overline{\bm W}_{i,s-1}) - p_s(\overline{\bm W}_{i,s-1})\}
      \{m_{s}(\overline{\bW}_{i,s-1}, 1, \overline{\bm F}_{is}; \bm\delta_{\underline{s}}) - m_{s}(\overline{\bW}_{i,s-1}, 0, \overline{\bm F}_{is}; \bm\delta_{\underline{s}})\}
      }{
        [\delta_s  \hat p_s(\overline{\bm W}_{i,s-1}) + \{1 - \hat p_s(\overline{\bm W}_{i,s-1})\}] [\delta_s   p_s(\overline{\bm W}_{i,s-1}) + \{1 -  p_s(\overline{\bm W}_{i,s-1})\}]
      }
      \biggr\}\\
      &\hspace{.5in} \times dF( \bR_{is} \mid \overline{\bm H}_{i,s-1}, W_{i, s-1}, S_i = \tilde s) \prod_{r = 1}^{ s-1} d \hat Q_r(W_{ir}; \overline{\bm{W}}_{i,r-1}, \delta_r) dF( \bR_{ir} \mid \overline{\bm H}_{i,r-1}, W_{i, r-1}, S_i = \tilde s),
\end{align*}
where the first equality is by the g-formula representation of
$\Psi_{\tilde s}$, in which the joint law under a stochastic
intervention factorizes into the product, over all stages
$s = 1, \ldots, \tilde s$, of the intervention distribution
$dQ_s(W_{is}; \overline{\bm W}_{i,s-1}, \delta_s)$ and the covariate
transition
$dF(\bR_{is} \mid \overline{\bm H}_{i,s-1}, W_{i,s-1}, S_i = \tilde
s)$; the second equality results by adding and subtracting the same
quantity, the third equality is just rearranging each term, the fourth
equality is from the definition of outcome model, the fifth equality
is due to the recursive application of the same decomposition in the
fourth equality, and the last equality is by explicitly writing the
measure based on stochastic intervention. Therefore,
\begin{align*}
  &\Psi(\bm\delta; \hat Q) - \Psi(\bm\delta; Q) \\
  = \ &  \sum_{\tilde{s} = 1}^{s_{\max} } \left\{\Psi_{\tilde s}(\bm\delta; \hat Q) - \Psi_{\tilde s}(\bm\delta; Q)\right\} \Pr(S_i = \tilde s)\\
        = \ & \sum_{\tilde{s} = 1}^{s_{\max} } \Biggl(\sum_{s = 1}^{\tilde s}  \int 
      \frac{\delta_s \{ \hat p_s(\overline{\bm W}_{i,s-1}) - p_s(\overline{\bm W}_{i,s-1})\}
      \{m_{s}(\overline{\bW}_{i,s-1}, 1, \overline{\bm F}_{is}; \bm\delta_{\underline{s}}) - m_{s}(\overline{\bW}_{i,s-1}, 0, \overline{\bm F}_{is}; \bm\delta_{\underline{s}})\}
      }{
        [\delta_s  \hat p_s(\overline{\bm W}_{i,s-1}) + \{1 - \hat p_s(\overline{\bm W}_{i,s-1})\}] [\delta_s   p_s(\overline{\bm W}_{i,s-1}) + \{1 -  p_s(\overline{\bm W}_{i,s-1})\}]
      }\\
      &\hspace{0.1in} \times dF(\bR_{is} \mid \overline{\bm H}_{i,s-1}, W_{i,s-1}, S_i = \tilde s) \prod_{r = 1}^{ s-1} d \hat Q_r(W_{ir}; \overline{\bm{W}}_{i,r-1}, \delta_r) dF(\bR_{ir} \mid \overline{\bm H}_{i,r-1}, W_{i,r-1}, S_i = \tilde s)\Biggr)\Pr(S_i = \tilde s)\\
      = \ & \sum_{s = 1}^{s_{\max} } \Biggl(\int
      \frac{\delta_s \{ \hat p_s(\overline{\bm W}_{i,s-1}) - p_s(\overline{\bm W}_{i,s-1})\}
      \{  m_{s}(\overline{\bW}_{i,s-1}, 1, \overline{\bm F}_{is}; \bm\delta_{\underline{s}}) - m_{s}(\overline{\bW}_{i,s-1}, 0, \overline{\bm F}_{is}; \bm\delta_{\underline{s}})\}
      }{
        [\delta_s  \hat p_s(\overline{\bm W}_{i,s-1}) + \{1 - \hat p_s(\overline{\bm W}_{i,s-1})\}] [\delta_s   p_s(\overline{\bm W}_{i,s-1}) + \{1 -  p_s(\overline{\bm W}_{i,s-1})\}]
      } \\
      &\hspace{0.7in} \times dF(\bR_{is} \mid \overline{\bm H}_{i,s-1}, W_{i,s-1}, S_i) \prod_{r = 1}^{ s-1} d \hat Q_r(W_{ir}; \overline{\bm{W}}_{i,r-1}, \delta_r) dF(\bR_{ir} \mid \overline{\bm H}_{i,r-1}, W_{i,r-1}, S_i)\, dP(S_i \mid S_i \geq  s)\Biggr)\Pr(S_i \geq s)
\end{align*}
where the first line is by law of total probability, the second line
is by plugging in the derived formula, and the last line is by
re-indexing the double summation over $(\tilde s, s)$. In the last
line, the exact value of $S_i$ remains an integration variable inside
the mixture because $m_s$ and $\overline{\bm H}_{is}$ depend on $S_i$;
specifically, we use the identity
$\sum_{\tilde s \geq s} (\cdot) \Pr(S_i = \tilde s) = \Pr(S_i \geq s)
\int (\cdot)\, dP(S_i \mid S_i \geq s)$.

Now, the correction term is
\begin{align*}
&\int
    \zeta\bigl(\cD_i; \bm\delta, \{ \hat\pi_s, \hat m_s, \widehat{\tilde m}_s, \hat p_s, \hat\boldf^{[s]}\}_{s=1}^{S_i}\bigr)
dF(\cD_i) \\
= \ & \sum_{\tilde s = 1}^{s_{\max}} \sum_{s = 1}^{\tilde s} \int   \frac{\delta_s \{W_{is} - \hat p_s(\overline{\bm W}_{i,s-1})\} \{ \widehat{\tilde m}_s(\overline{\bm{W}}_{i,s-1},  1; \bm\delta_{-s})  - \widehat{\tilde m}_s(\overline{\bm{W}}_{i,s-1},  0; \bm\delta_{-s}) \} }{ [\delta_s  \hat p_s(\overline{\bm W}_{i,s-1}) + \{1 - \hat p_s(\overline{\bm W}_{i,s-1})\}]^2 } dF(\overline{\bm{W}}_{is} \mid S_i = \tilde s)\Pr(S_i = \tilde s) \\
%= \ & \sum_{s = 1}^{s_{\max}} \sum_{\tilde s = s}^{\tilde s} \int   \frac{\delta_s (W_{is} - \hat p_s(\overline{\bm W}_{i,s-1})) \{ \widehat{\tilde m}_s(\overline{\bm{W}}_{i,s-1},  1; \bm\delta_{-s})  - \widehat{\tilde m}_s(\overline{\bm{W}}_{i,s-1},  0; \bm\delta_{-s}) \} }{ [\delta_s  \hat p_s(\overline{\bm W}_{i,s-1}) + \{1 - \hat p_s(\overline{\bm W}_{i,s-1})\}]^2 } dF(\overline{\bm{W}}_{is} \mid S_i = \tilde s)\Pr(S_i = \tilde s) \\
= \ & \sum_{s = 1}^{s_{\max}} \int   \frac{\delta_s \{W_{is} - \hat p_s(\overline{\bm W}_{i,s-1})\} \{ \widehat{\tilde m}_s(\overline{\bm{W}}_{i,s-1},  1; \bm\delta_{-s})  - \widehat{\tilde m}_s(\overline{\bm{W}}_{i,s-1},  0; \bm\delta_{-s}) \} }{ [\delta_s  \hat p_s(\overline{\bm W}_{i,s-1}) + \{1 - \hat p_s(\overline{\bm W}_{i,s-1})\}]^2 } dF(\overline{\bm{W}}_{is} \mid S_i \geq s)\Pr(S_i \geq s) \\
=\ & \sum_{s = 1}^{s_{\max}} \int \  \frac{\delta_s \{ p_s(\overline{\bm W}_{i,s-1}) - \hat p_s(\overline{\bm W}_{i,s-1})\} \{ \widehat{\tilde m}_s(\overline{\bm{W}}_{i,s-1},  1; \bm\delta_{-s})  - \widehat{\tilde m}_s(\overline{\bm{W}}_{i,s-1},  0; \bm\delta_{-s}) \} }{ [\delta_s  \hat p_s(\overline{\bm W}_{i,s-1}) + \{1 - \hat p_s(\overline{\bm W}_{i,s-1})\}]^2 }\\
&\hspace{3in} dF(\overline{\bm{W}}_{i,s-1} \mid S_i \geq s)\Pr(S_i \geq s) \\
= \ &  \sum_{s = 1}^{s_{\max} } \int  \frac{\delta_s \{ p_s(\overline{\bm W}_{i,s-1}) - \hat p_s(\overline{\bm W}_{i,s-1})\} }{ [\delta_s  \hat p_s(\overline{\bm W}_{i,s-1}) + \{1 - \hat p_s(\overline{\bm W}_{i,s-1})\}]^2 }
\left[ \left\{\check{m}_s(\overline{\bm{W}}_{i,s-1},  1; \bm\delta_{-s}) - \check{m}_s(\overline{\bm{W}}_{i,s-1},  0; \bm\delta_{-s})\right\}\right.
\\
& \qquad - \left.\left\{\check{m}_s(\overline{\bm{W}}_{i,s-1},  1; \bm\delta_{-s}) - \check{m}_s(\overline{\bm{W}}_{i,s-1},  0; \bm\delta_{-s})\right\} + \left\{\widehat{\tilde m}_s(\overline{\bm{W}}_{i,s-1},  1; \bm\delta_{-s})  - \widehat{\tilde m}_s(\overline{\bm{W}}_{i,s-1},  0; \bm\delta_{-s})\right\} \right] \\
&\qquad \qquad dF(\overline{\bm{W}}_{i,s-1} \mid  S_i  \geq s)\Pr(S_i \geq s)\\
= \ & Z + R_{2,1}(\bm\delta)
\end{align*}
where $R_{2,1}$ is defined in Equation~\eqref{reminder_r21}, and
\begin{align*}
Z &:= \sum_{s = 1}^{s_{\max}}  \int  \frac{\delta_s \{ p_s(\overline{\bm W}_{i,s-1}) - \hat p_s(\overline{\bm W}_{i,s-1})\} }{ [\delta_s  \hat p_s(\overline{\bm W}_{i,s-1}) + \{1 - \hat p_s(\overline{\bm W}_{i,s-1})\}]^2 }\\
&\hspace{1in}\left\{\check{ m}_s(\overline{\bm{W}}_{i,s-1},  1; \bm\delta_{-s}) - \check{ m}_s(\overline{\bm{W}}_{i,s-1},  0; \bm\delta_{-s})\right\}dF(\overline{\bm{W}}_{i,s-1} \mid S_i \geq s)\Pr(S_i \geq s)
%R_{2,1}(\bm\delta) &:= \sum_{s = 1}^{s_{\max}} \int   \frac{\delta_s
%                     \{ p_s(\overline{\bm W}_{i,s-1}) - \hat p_s(\overline{\bm W}_{i,s-1})\} }{ [\delta_s  \hat p_s(\overline{\bm W}_{i,s-1}) + \{1 - \hat p_s(\overline{\bm W}_{i,s-1})\}]^2 } \left[
%\left\{\widehat{\tilde m}_s(\overline{\bm{W}}_{i,s-1},  1; \bm\delta_{-s})  - \widehat{\tilde m}_s(\overline{\bm{W}}_{i,s%-1},  0;\tilde{\bm{\delta}}_{s+1})\right\}\right.\\
%&\hspace{1in} -\left. \left\{
%\check{m}_s(\overline{\bm{W}}_{i,s-1},  1; \bm\delta_{-s}) - \check{ m}_s(\overline{\bm{W}}_{i,s-1},  0; \bm\delta_{-s})\right\} \right]dF(\overline{\bm{W}}_{i,s-1}, S_i \geq s).
\end{align*}
In the above calculation, the first equality follows from the
definition of the augmentation term. The second equality is obtained
by reindexing the summation, which is valid here because the
integrand depends on $S_i$ only through the event $S_i \geq s$. The
third equality follows from integrating $W_{is}$ under its
conditional distribution given
$(\overline{\bm W}_{i,s-1}, S_i \geq s)$, while the fourth is
obtained by adding and subtracting the same quantity.

Now, we can  further transform $Z$ as
\begin{align*}
  & Z \\
  = \ &  \sum_{s = 1}^{s_{\max}}  \int  \frac{\delta_s \{p_s(\overline{\bm W}_{i,s-1}) - \hat p_s(\overline{\bm W}_{i,s-1})\} }{ [\delta_s  \hat p_s(\overline{\bm W}_{i,s-1}) + \{1 - \hat p_s(\overline{\bm W}_{i,s-1})\}]^2 }  \left[
\prod_{s^\prime=1}^{s-1}
\frac{\delta_{s^\prime} W_{is^\prime}  \frac{ \hat p_{s^\prime}(\overline{\bm W}_{i,s^\prime-1})}{\pi_{s^\prime}(\overline{\bm H}_{is^\prime})} + (1 - W_{is^\prime})
    \frac{1 - \hat p_{s^\prime}(\overline{\bm W}_{i,s^\prime-1})}{1 - \pi_{s^\prime}(\overline{\bm
    H}_{is^\prime})}}{ \delta_{s^\prime} \hat p_{s^\prime}(\overline{\bm W}_{i,s^\prime-1}) + 1 -
    \hat p_{s^\prime}(\overline{\bm W}_{i,s^\prime-1}) }
\right] \\
&\quad\left\{ \hat{ m}_s(\overline{\bW}_{i,s-1}, 1, \widehat{\overline{\bm F}}_{is}; \bm\delta_{\underline{s}}) - \hat{ m}_s(\overline{\bW}_{i,s-1}, 0, \widehat{\overline{\bm F}}_{is}; \bm\delta_{\underline{s}})
\right\} \\
&\qquad dF(\overline{\bm R}_{is} \mid \overline{\bm W}_{i,s-1}, S_i)\, dF(\overline{\bm{W}}_{i,s-1} \mid S_i)\, dP(S_i \mid S_i \geq s)\Pr(S_i \geq s) \\
%= \ & - \sum_{s = 1}^{s_{\max}}  \int \frac{\delta_s ( \hat p_s(\overline{\bm W}_{i,s-1}) - p_s(\overline{\bm W}_{i,s-1})) }{ [\delta_s  \hat p_s(\overline{\bm W}_{i,s-1}) + \{1 - \hat p_s(\overline{\bm W}_{i,s-1})\}]^2 }  \left[
%\prod_{s^\prime=1}^{s-1}
%\frac{\delta_{s^\prime} W_{is^\prime}  \frac{ \hat p_{s^\prime}(\overline{\bm W}_{i,s^\prime-1})}{\pi_{s^\prime}(\overline{\bm H}_{is^\prime})} + (1 - W_{is^\prime})
%    \frac{1 - \hat p_{s^\prime}(\overline{\bm W}_{i,s^\prime-1})}{1 - \pi_{s^\prime}(\overline{\bm
%    H}_{is^\prime})}}{ \delta_{s^\prime} \hat p_{s^\prime}(\overline{\bm W}_{i,s^\prime-1}) + 1 -
%    \hat p_{s^\prime}(\overline{\bm W}_{i,s^\prime-1}) }
%\right] \\
%&\qquad \left\{\hat{ m}_s(\overline{\bW}_{i,s-1}, 1, \widehat{\overline{\bm F}}_{is}; \bm\delta_{\underline{s}}) - \hat{ m}_s(\overline{\bW}_{i,s-1}, 0, \widehat{\overline{\bm F}}_{is}; \bm\delta_{\underline{s}}) 
%\right\} dF(\overline{\bm R}_{is} \mid \overline{\bm W}_{i,s-1}, S_i \geq s) dF(\overline{\bm{W}}_{i,s-1} \mid S_i \geq s)\Pr(S_i \geq s)\\
= \ &  - \sum_{s = 1}^{s_{\max}}  \int  \frac{\delta_s \{\hat p_s(\overline{\bm W}_{i,s-1}) - p_s(\overline{\bm W}_{i,s-1})\} }{ [\delta_s  \hat p_s(\overline{\bm W}_{i,s-1}) + \{1 - \hat p_s(\overline{\bm W}_{i,s-1})\}]^2 }\\
&\hspace{1in} \times
\left\{ \hat{ m}_s(\overline{\bW}_{i,s-1}, 1, \widehat{\overline{\bm F}}_{is}; \bm\delta_{\underline{s}}) - \hat{ m}_s(\overline{\bW}_{i,s-1}, 0, \widehat{\overline{\bm F}}_{is}; \bm\delta_{\underline{s}})
\right\}\\
&\hspace{-0.1in} dF(\bR_{is} \mid \overline{\bm H}_{i,s-1}, W_{i,s-1}, S_i)
\prod_{s^\prime=1}^{s-1}
d\hat{Q}_{s^\prime}(W_{is^\prime}; \overline{\bm W}_{i,s^\prime-1}, \delta_{s^\prime})
dF(\bR_{is^\prime} \mid \overline{\bm H}_{i,s^\prime-1}, W_{i,s^\prime-1}, S_i)\, dP(S_i \mid S_i \geq s)\Pr(S_i \geq s),
\end{align*}
where the first equality follows by explicitly writing the weight in
$\check{m}_s$ and the second equality holds by rearranging the
measure.
\begin{comment}
notice that we have an identity
\begin{align*}
&\prod_{s' = 1}^{s-1} \frac{W_{is'}\pi_{s'}(\overline{\bm H}_{is'}) + (1 - W_{is'})\{1 - \pi_{s'}(\overline{\bm H}_{is'})\} }{
W_{is'}\hat\pi_{s'}(\widehat{\overline{\bm H}}_{is'}) + (1 - W_{is'})\{1 - \hat\pi_{s'}(\widehat{\overline{\bm H}}_{is'})\}
}\\
&= 1 + \sum_{r = 1}^{s - 1}\Biggl[ \frac{W_{ir}\pi_{r}(\overline{\bm H}_{ir}) + (1 - W_{ir})\{1 - \pi_{r}(\overline{\bm H}_{ir})\} }{
W_{ir}\hat\pi_{r}(\widehat{\overline{\bm H}}_{ir}) + (1 - W_{ir})\{1 - \hat\pi_{r}(\widehat{\overline{\bm H}}_{ir})\}
} - 1\Biggr] \times \prod_{s'= 1}^{r-1} \frac{W_{is'}\pi_{s'}(\overline{\bm H}_{is'}) + (1 - W_{is'})\{1 - \pi_{s'}(\overline{\bm H}_{is'})\} }{
W_{is'}\hat\pi_{s'}(\widehat{\overline{\bm H}}_{is'}) + (1 - W_{is'})\{1 - \hat\pi_{s'}(\widehat{\overline{\bm H}}_{is'})\}
}\\
&= 1 + \sum_{r = 1}^{s - 1}\Biggl[ \frac{ (2W_{ir} - 1)\{\pi_{r}(\overline{\bm H}_{ir}) - \hat\pi_{r}(\widehat{\overline{\bm H}}_{ir})\} }{
W_{ir}\hat\pi_{r}(\widehat{\overline{\bm H}}_{ir}) + (1 - W_{ir})\{1 - \hat\pi_{r}(\widehat{\overline{\bm H}}_{ir})\}
} \Biggr] \times \prod_{s'= 1}^{r-1} \frac{W_{is'}\pi_{s'}(\overline{\bm H}_{is'}) + (1 - W_{is'})\{1 - \pi_{s'}(\overline{\bm H}_{is'})\} }{
W_{is'}\hat\pi_{s'}(\widehat{\overline{\bm H}}_{is'}) + (1 - W_{is'})\{1 - \hat\pi_{s'}(\widehat{\overline{\bm H}}_{is'})\}
}.
\end{align*}
\end{comment}
Therefore,
\begin{align*}
    &Z + \Psi(\bm\delta;\hat Q)
    -
    \Psi(\bm\delta;Q)\\
    = \ &  \sum_{s = 1}^{s_{\max}} \int \left[
      \frac{\delta_s \{ \hat p_s(\overline{\bm W}_{i,s-1}) - p_s(\overline{\bm W}_{i,s-1})\}
      \{m_{s}(\overline{\bW}_{i,s-1}, 1, \overline{\bm F}_{is}; \bm\delta_{\underline{s}}) - m_{s}(\overline{\bW}_{i,s-1}, 0, \overline{\bm F}_{is}; \bm\delta_{\underline{s}})\}
      }{
        [\delta_s  \hat p_s(\overline{\bm W}_{i,s-1}) + \{1 - \hat p_s(\overline{\bm W}_{i,s-1})\}] [\delta_s   p_s(\overline{\bm W}_{i,s-1}) + \{1 -  p_s(\overline{\bm W}_{i,s-1})\}]
      }
      \right.\\
      &\qquad\qquad \left. -  \frac{\delta_s \{\hat p_s(\overline{\bm W}_{i,s-1}) - p_s(\overline{\bm W}_{i,s-1})\} }{ [\delta_s  \hat p_s(\overline{\bm W}_{i,s-1}) + \{1 - \hat p_s(\overline{\bm W}_{i,s-1})\}]^2 }
\left\{ \hat{ m}_s(\overline{\bW}_{i,s-1}, 1, \widehat{\overline{\bm F}}_{is}; \bm\delta_{\underline{s}}) - \hat{ m}_s(\overline{\bW}_{i,s-1}, 0, \widehat{\overline{\bm F}}_{is}; \bm\delta_{\underline{s}}) 
\right\} \right]\\
&\hspace{-0.2in} \times  dF(\bR_{is} \mid \overline{\bm H}_{i,s-1}, W_{i,s-1}, S_i)
\prod_{s^\prime=1}^{s-1}
d\hat{Q}_{s^\prime}(W_{is^\prime}; \overline{\bm W}_{i,s^\prime-1}, \delta_{s^\prime})
dF(\bR_{is^\prime} \mid \overline{\bm H}_{i,s^\prime-1}, W_{i,s^\prime-1}, S_i)\, dP(S_i \mid S_i \geq s) \Pr(S_i \geq s)\\
= \ &  \sum_{s = 1}^{s_{\max}} \int \frac{\delta_s \{\hat p_s(\overline{\bm W}_{i,s-1}) - p_s(\overline{\bm W}_{i,s-1})\} }{ [\delta_s  \hat p_s(\overline{\bm W}_{i,s-1}) + \{1 - \hat p_s(\overline{\bm W}_{i,s-1})\}]^2 }\\
&\hspace{0.5in} \times \biggl[\frac{\delta_s  \hat p_s(\overline{\bm W}_{i,s-1}) + \{1 - \hat p_s(\overline{\bm W}_{i,s-1})\}}{\delta_s  p_s(\overline{\bm W}_{i,s-1}) + \{1 - p_s(\overline{\bm W}_{i,s-1})\} } \left\{m_{s}(\overline{\bW}_{i,s-1}, 1, \overline{\bm F}_{is}; \bm\delta_{\underline{s}}) \right. \\
      &\hspace{2in} \left. - m_{s}(\overline{\bW}_{i,s-1}, 0, \overline{\bm F}_{is}; \bm\delta_{\underline{s}})\right\} \\
      &\hspace{.75in} - \left\{\hat m_{s}(\overline{\bW}_{i,s-1}, 1, \widehat{\overline{\bm F}}_{is}; \bm\delta_{\underline{s}}) - \hat m_{s}(\overline{\bW}_{i,s-1}, 0, \widehat{\overline{\bm F}}_{is}; \bm\delta_{\underline{s}}) \right\}\biggr]\\
      &\hspace{-0.25in}  \times  dF(\bR_{is} \mid \overline{\bm H}_{i,s-1}, W_{i,s-1}, S_i)
\prod_{s^\prime=1}^{s-1}
d\hat{Q}_{s^\prime}(W_{is^\prime}; \overline{\bm W}_{i,s^\prime-1}, \delta_{s^\prime})
dF(\bR_{is^\prime} \mid \overline{\bm H}_{i,s^\prime-1}, W_{i,s^\prime-1}, S_i)\, dP(S_i \mid S_i \geq s)\Pr(S_i \geq s)\\
= \ & R_{2,2}(\bm\delta) + R_{2,3}(\bm\delta),
\end{align*}
where $R_{2,2}$ and $R_{2,3}$ are defined in
Equations~\eqref{reminder_r22}~and~\eqref{reminder_r23}, respectively.
%\begin{align*}
%    R_{2,2}(\bm\delta) &:=\sum_{s = 1}^{s_{\max}} \int \frac{\delta_s \{ \hat p_s(\overline{\bm W}_{i,s-1}) - p_s(\overline{\bm W}_{i,s-1})\} \biggl( \sum_{w} \{2w-1\} \{ m_{s}({\overline{\bm{H}}}_{is}, w;
%      \tilde{\bm \delta}_{ s+1}) -  \hat m_{s}(\widehat{\overline{\bm{H}}}_{is}, w;
%      \tilde{\bm \delta}_{ s+1})\}  \biggr) }{ [\delta_s  \hat p_s(\overline{\bm W}_{i,s-1}) + \{1 - \hat p_s(\overline{\bm W}_{i,s-1})\}]^2 }\\
 %     &\hspace{0.5in} \times  dF(\bR_{is} \mid \overline{\bR}_{i,s-1}, S_i \geq s)  
%\prod_{s^\prime=1}^{s-1}
%d\hat{Q}_{s^\prime}(W_{is^\prime}; \overline{\bm W}_{i,s^\prime-1}, \delta_{s^\prime})
%dF(\overline{\bm R}_{is^\prime} \mid \overline{\bm R}_{i,s^\prime-1}, S_i \geq  s)\Pr(S_i \geq s)\\
%R_{2,3}(\bm\delta) &:= \sum_{s = 1}^{s_{\max}} \int \frac{\delta_s (\delta_s - 1)  \{ \hat p_s(\overline{\bm W}_{i,s-1}) - p_s(\overline{\bm W}_{i,s-1})\}^2 \{ m_{s}({\overline{\bm{H}}}_{is}, 1;
%      \tilde{\bm \delta}_{ s+1}) - m_{s}({\overline{\bm{H}}}_{is}, 0;
%      \tilde{\bm \delta}_{ s+1})\} }{
%        [\delta_s  \hat p_s(\overline{\bm W}_{i,s-1}) + \{1 - \hat p_s(\overline{\bm W}_{i,s-1})\}]^2
%        [\delta_s   p_s(\overline{\bm W}_{i,s-1}) + \{1 -  p_s(\overline{\bm W}_{i,s-1})\}]
%      }\\
%      &\hspace{0.5in} \times  dF(\bR_{is} \mid \overline{\bR}_{i,s-1}, S_i \geq s) \prod_{s^\prime=1}^{s-1}
%d\hat{Q}_{s^\prime}(W_{is^\prime}; \overline{\bm W}_{i,s^\prime-1}, \delta_{s^\prime})
%dF(\overline{\bm R}_{is^\prime} \mid \overline{\bm R}_{i,s^\prime-1}, S_i \geq s)\Pr(S_i \geq s),
%\end{align*}
This completes the proof.
\end{proof}

\newpage 
\section{Numerical Results of Simulation Studies} \label{app:emp_result}

\begin{table}[!htbp]\centering
\begin{tabular}{l c c c c c}
\toprule
& & & & \multicolumn{2}{c}{95\% Confidence Interval} \\
\cmidrule(lr){5-6}
$N$ & Incremental Parameter $\delta$ & Bias & RMSE & Coverage & Avg. Length \\
\midrule
\multirow{6}{*}{2000} & 0.50 & -0.16 & 0.36 & 0.89 & 1.09 \\
 & 0.75 & -0.09 & 0.23 & 0.92 & 0.74 \\
 & 1.00 & -0.01 & 0.18 & 0.93 & 0.66 \\
 & 1.25 & 0.03 & 0.18 & 0.95 & 0.67 \\
 & 1.50 & 0.05 & 0.20 & 0.92 & 0.73 \\
 & 2.00 & 0.09 & 0.25 & 0.94 & 0.93 \\

\midrule
\multirow{6}{*}{3000} & 0.50 & -0.09 & 0.26 & 0.89 & 0.81 \\
 & 0.75 & -0.05 & 0.17 & 0.90 & 0.57 \\
 & 1.00 & 0.00 & 0.15 & 0.93 & 0.51 \\
 & 1.25 & 0.02 & 0.14 & 0.94 & 0.52 \\
 & 1.50 & 0.03 & 0.16 & 0.95 & 0.56 \\
 & 2.00 & 0.05 & 0.19 & 0.95 & 0.69 \\

\midrule
\multirow{6}{*}{4000} & 0.50 & -0.07 & 0.20 & 0.92 & 0.66 \\
 & 0.75 & -0.05 & 0.14 & 0.94 & 0.48 \\
 & 1.00 & -0.01 & 0.12 & 0.93 & 0.44 \\
 & 1.25 & 0.00 & 0.12 & 0.93 & 0.44 \\
 & 1.50 & 0.01 & 0.14 & 0.90 & 0.47 \\
 & 2.00 & 0.03 & 0.16 & 0.93 & 0.57 \\

\midrule
\multirow{6}{*}{5000} & 0.50 & -0.05 & 0.15 & 0.95 & 0.57 \\
 & 0.75 & -0.03 & 0.11 & 0.95 & 0.42 \\
 & 1.00 & -0.01 & 0.09 & 0.97 & 0.38 \\
 & 1.25 & -0.00 & 0.10 & 0.97 & 0.39 \\
 & 1.50 & -0.00 & 0.11 & 0.95 & 0.41 \\
 & 2.00 & 0.00 & 0.13 & 0.95 & 0.50 \\

\bottomrule
\end{tabular}
\caption{Simulation results with 200 Monte Carlo trials}\label{tab:mc_summary_raw}
\end{table}

\end{document}